\documentclass[runningheads]{llncs} 
\usepackage{pipa} 
\usepackage{version} 
\usepackage[hidelinks]{hyperref}
\hypersetup{colorlinks=true,linkcolor=blue,citecolor=blue,urlcolor=blue}
\usepackage{orcidlink}

\newcommand{\doi}[1]{{doi:\href{http://doi.org/#1}{\nolinkurl{#1}}}}

\spnewtheorem{sdef}{Definition}{\bfseries}{\rmfamily}
\spnewtheorem{sass}{Assumption}{\bfseries}{\rmfamily}
\spnewtheorem{snot}{Notation}{\bfseries}{\rmfamily}
\spnewtheorem{scnv}{Convention}{\bfseries}{\rmfamily}
\spnewtheorem{sexa}{Example}{\bfseries}{\rmfamily}
\spnewtheorem{sclm}{Claim}{\bfseries}{\rmfamily}

\title{Probabilistic Imperative Process Algebra}
\author{C.A. Middelburg\,\orcidlink{0000-0002-8725-0197}}
\institute{Informatics Institute, Faculty of Science, University of
           Amsterdam \\
           Science Park~900, 1098~XH Amsterdam, the Netherlands \\
           \href{mailto:C.A.Middelburg@uva.nl}{C.A.Middelburg@uva.nl}}

\titlerunning{Probabilistic Imperative Process Algebra}
\authorrunning{C.A. Middelburg}

\begin{document}
\maketitle

\begin{abstract}
In a previous paper, a process algebra based on ACP (Algebra of 
Communicating Processes) was proposed in which processes involving data 
can be handled by means of features originating from imper\-ative 
programming.
In this paper, an extension of that process alge\-bra with probabilistic
choice operators is presented that rests on the principle that 
probabilistic choices are always resolved before choices involved in 
alternative composition and parallel composition are resolved.
This extension is devised, among other things, to be used for modeling 
and analyzing algorithms that are important in the area of distributed 
computing.
Many canonical problems in that area call for a probabilistic algorithm.
In this paper, a probabilistic algorithm for the leader election problem
is modeled using the presented process algebra.
\keywords
{imperative process algebra \and probabilistic choice \and
 abstraction \and branching bisimulation \and leader election}
\begin{classcode}
D.1.3, D.2.4, F.1.2, F.3.1, C.2.1.
\end{classcode}
\end{abstract}

\section{Introduction}
\label{sect-intro}

A contemporary computer-based system usually carries out a process that 
is in ongoing interaction with its environment and in which data plays 
a crucial role.
That is, data change in the course of the process, the process proceeds 
at certain stages in a way that depends on changing data, and the 
interaction of the process with other processes consists of 
communication of data.
In~\cite{Mid21a}, an extension of ACP~\cite{BK84b} was introduced that
was devised to be used for modeling and analyzing such systems.
The extension concerned is called \deACPet. 
In~\cite{NP97a}, the term imperative process algebra was coined for 
process algebras like \deACPet.

However, contemporary computer-based systems are often distributed 
systems.
These systems are usually faced with problems typical of distributed 
systems (see e.g.~\cite{AW04a}). 
For many canonical problems in this area, such as the leader election 
problem and the consensus problem, it is desirable or even required to 
solve them using a probabilistic algorithm.
This calls for an extension of \deACPet\ in which probabilistic 
processes can be handled.
In this paper, an extension\linebreak[2] of \deACPet\ with probabilistic 
choice operators, called \depACPet\ (probabilistic \deACPet), is 
presented.

The main motivation for the development of the presented process algebra 
was the wish to be able to use an imperative process algebra for 
modeling and analyzing the probabilistic leader election algorithms for 
anonymous rings given in~\cite{FP06a} and~\cite{BFPP08a}.
In those papers, probabilistic leader election algorithms are modeled 
and analyzed using the $\mu$CRL specification language and 
toolset~\cite{BFGLLP01a}.
The $\mu$CRL specification language is based on ACP, but cannot handle
probabilistic processes.
Moreover, data must be dealt with in a non-imperative way, namely by 
means of parameterized processes.
Therefore, I wondered whether a probabilistic imperative process algebra 
could be devised that is better suited for modeling and analyzing those 
algorithms.
The suitability of the presented process algebra for modeling the 
above-mentioned leader election algorithms is demonstrated in this 
paper.
The suitability of the presented process algebra for analyzing them 
remains the subject of future work.

The extension of \deACPet\ with probabilistic choice operators rests on 
the principle that probabilistic choices are always resolved before 
choices involved in alternative composition and parallel composition are 
resolved.
This principle is also followed in~\cite{AG09a,Geo11a}.
However, in \depACPet, we take functions whose range is the carrier of 
a signed cancellation meadow instead of a field as probability measures, 
add probabilistic choice operators for the probabilities $0$ and $1$, 
and have an additional axiom because of the inclusion of these 
operators.
The probabilistic choice operators for the probabilities $0$ and $1$ 
cause no problem because a meadow has a total multiplicative inverse 
operation where the multiplicative inverse of zero is zero.
Because of this property, we could reduce the number of rules for the 
operational semantics of \depACPet\ and replace all negative premises by 
positive premises in the remaining rules. 

This paper is organized as follows.
First, the theory of signed cancellation meadows is briefly summarized  
(Section~\ref{sect-meadows}).
Next, the algebraic theory \mbox{\depACPet}\ is introduced
(Sections~\ref{sect-pACPet} and~\ref{sect-depACPet}).
After that, the extension of \depACPet\ with guarded linear recursion is 
treated (Section~\ref{sect-depACPetr}).
Then, a structural operational semantics of the resulting 
theory is presented and a notion of rooted branching bisimulation 
equivalence based on it is defined (Section~\ref{sect-semantics}).
After that, a soundness result with respect to rooted branching 
bisimulation equivalence for the axiom system of \depACPet\ is 
presented (Section~\ref{sect-sound-compl}).
Thereafter, the probabilistic algorithm for the leader election problem
given in~\cite{BFPP08a} is modeled using \depACPet\
(Section~\ref{sect-example}).
Finally, some concluding remarks are made 
(Section~\ref{sect-conclusions}).

\section{Signed Cancellation Meadows}
\label{sect-meadows}

Later in this paper, we will take functions whose range is the carrier 
of a signed cancellation meadow as probability measures.
Therefore, we briefly summarize the theory of signed cancellation 
meadows in this section.

In~\cite{BT07a}, meadows are proposed as alternatives for fields with a
purely equational axiomatization.
Meadows are commutative rings with a multiplicative identity element and 
a total multiplicative inverse operation where the multiplicative 
inverse of zero is zero.
Fields whose multiplicative inverse operation is made total by imposing 
that the multiplicative inverse of zero is zero are called 
zero-totalized fields.
A zero-totalized field is a meadow that satisfies the 
\emph{cancellation axiom} 
\begin{ldispl}
x \neq 0 \Land x \mul y = x \mul z \Limpl y = z
\end{ldispl}%
and the \emph{separation axiom} 
\begin{ldispl}
0 \neq 1\;.
\end{ldispl}%

Meadows that satisfy the cancellation axiom are called cancellation 
meadows.
Signed cancellation meadows are cancellation meadows expanded with a
signum operation.
The signum operation makes it possible that the predicates $<$ and 
$\leq$ are defined (see below).

\begin{sdef}
The signature of signed cancellation meadows consists of the following
constants and operators:
\begin{itemize}
\item
the \emph{additive identity} constant $0$;
\item
the \emph{multiplicative identity} constant $1$;
\item
the binary \emph{addition} operator ${} +$ {};
\item
the binary \emph{multiplication} operator ${} \mul {}$;
\item
the unary \emph{additive inverse} operator $- {}$\,;
\item
the unary \emph{multiplicative inverse} operator ${}\minv$\,;
\item
the unary \emph{signum} operator $\sign$.
\end{itemize}
\end{sdef}

Terms are build as usual.
We use prefix, infix, and postfix notation as usual.
We also use the usual precedence convention.
Subtraction and division are introduced as abbreviations:
$t - t'$ abbreviates $t + (-t')$ and
$t / t'$ abbreviates $t \mul ({t'}\minv)$.

\begin{sdef}
The axioms of a signed cancellation meadow are the equations in 
Tables~\ref{eqns-meadow} and~\ref{eqns-signum} and the above-mentioned
cancellation axiom.
\begin{table}[!t]
\caption
{Axioms of a meadow}
\label{eqns-meadow}
\begin{eqntbl}
\begin{eqncol}
(x + y) + z = x + (y + z)                                             \\
x + y = y + x                                                         \\
x + 0 = x                                                             \\
x + (-x) = 0
\end{eqncol}
\qquad\quad
\begin{eqncol}
(x \mul y) \mul z = x \mul (y \mul z)                                 \\
x \mul y = y \mul x                                                   \\
x \mul 1 = x                                                          \\
x \mul (y + z) = x \mul y + x \mul z
\end{eqncol}
\qquad\quad
\begin{eqncol}
(x\minv)\minv = x                                                     \\
x \mul (x \mul x\minv) = x                           
\end{eqncol}
\end{eqntbl}
\end{table}

\begin{table}[!t]
\caption{Additional axioms for the signum operator}
\label{eqns-signum}
\begin{eqntbl}
\begin{eqncol}
\sign(x / x) = x / x                                                  \\
\sign(1 - x / x) = 1 - x / x                                          \\
\sign(-1) = -1
\end{eqncol}
\qquad\quad
\begin{eqncol}
\sign(x\minv) = \sign(x)                                              \\
\sign(x \mul y) = \sign(x) \mul \sign(y)                              \\
(1 - \frac{\sign(x) - \sign(y)}{\sign(x) - \sign(y)}) \mul
(\sign(x + y) - \sign(x)) = 0
\end{eqncol}
\end{eqntbl}
\end{table}
\end{sdef}

\begin{sdef}
The predicates $<$ and $\leq$ are defined in signed cancellation meadows
as follows: 
\begin{ldispl}
x < y \Liff \sign(y - x) = 1\;,
\\
x \leq y \Liff \sign(\sign(y - x) + 1) = 1\;.
\end{ldispl}%
\end{sdef}

It is easy to see that
\begin{ldispl}
0 \leq x \leq 1 \Liff
 \sign(\sign(x) + 1) \mul \sign(\sign(1 - x) + 1) = 1\;.
\end{ldispl}%
We will use this equivalence below to describe the set of probabilities.

\sloppy
In~\cite{BP13a}, Kolmogorov's probability axioms for finitely additive 
probability spaces are rephrased for the case where probability measures 
are functions whose range is the carrier of a signed cancellation 
meadow.

\section{Probabilistic \ACP\ with Empty Process and Silent Step}
\label{sect-pACPet}

In this section, the process algebra \pACPet\ is presented.
\pACPet\ is the probabilistic process algebra \pACP\ presented 
in~\cite{Mid19a} extended with the termination constant $\ep$, the 
silent step constant $\tau$, and the abstraction operators $\abstr{I}$ 
as in the extension \ACPet\ of \ACP~\cite{BK84b} presented 
in~\cite{BW90}.
In Section~\ref{sect-depACPet}, \pACPet\ is extended with features that 
are relevant to processes in which data are involved.

\begin{sass}
\label{assumption-basic-actions}
It is assumed that a fixed but arbitrary finite set $\Act$ 
of \emph{basic actions}, with $\tau,\dead,\ep \not\in \Act$, and a fixed 
but arbitrary commutative and associative \emph{communication} function 
$\funct{\commf}
 {(\Act \union \set{\tau,\dead}) \x (\Act \union \set{\tau,\dead})}
 {(\Act \union \set{\tau,\dead})}$, 
such that $\commf(\tau,a) = \dead$ and $\commf(\dead,a) = \dead$
for all $a \in \Act \union \set{\tau,\dead}$, have been given.
\end{sass}
Basic actions are taken as atomic processes.
For any two basic actions $a$ and $b$, $\commf(a,b)$ is considered to be 
the basic action that results from performing them synchronously if they 
can be performed synchronously, and $\dead$ otherwise.
\begin{snot}
We write $\Actt$ for $\Act \union \set{\tau}$ and 
$\Acttd$ for $\Act \union \set{\tau,\dead}$.
\end{snot}

\begin{sass}
\label{assumption-meadow}
It is assumed that a fixed but arbitrary signed cancellation meadow 
$\fM$ has been given.
\end{sass}
\begin{snot}
We denote the interpretations of the constants and operators from the 
signature of signed cancellation meadows in $\fM$ by the constants and 
operators themselves.
\end{snot}
\begin{snot}
We write $\Prob$ for the set
$\set{\pi \in \fM \where
 \sign(\sign(\pi) + 1) \mul \sign(\sign(1 - \pi) + 1) = 1}$
of \emph{probabilities}.
\end{snot}

The algebraic theory \pACPet\ consists of a signature and an axiom 
system.
What is assumed to be given in 
Assumptions~\ref{assumption-basic-actions} and~\ref{assumption-meadow} 
can be considered formal parameters of the algebraic theory \pACPet\ --- 
of which different concretizations yield different instantiations of the 
theory.

\begin{sdef}
The signature of the algebraic theory \pACPet\ consists of the following 
sorts, constants, and operators: 
\begin{itemize}
\item
the sort $\Proc$ of \emph{processes};
\item
for each $a \in \Act$, the \emph{basic action} constant 
$\const{a}{\Proc}$;
\item
the \emph{silent step} constant $\const{\tau}{\Proc}$;
\item
the \emph{inaction} constant $\const{\dead}{\Proc}$;
\item
the \emph{termination} constant $\const{\ep}{\Proc}$;
\item
the binary \emph{alternative composition} operator 
$\funct{\altc}{\Proc \x \Proc}{\Proc}$;
\item
the binary \emph{sequential composition} operator 
$\funct{\seqc}{\Proc \x \Proc}{\Proc}$;
\item
the binary \emph{parallel composition} operator 
$\funct{\parc}{\Proc \x \Proc}{\Proc}$;
\item
the binary \emph{left merge} operator 
$\funct{\leftm}{\Proc \x \Proc}{\Proc}$;
\item
the binary \emph{communication merge} operator 
$\funct{\commm}{\Proc \x \Proc}{\Proc}$;
\item
the unary \emph{termination} operator 
$\funct{\encapa}{\Proc}{\Proc}$;
\item
for each $H \subseteq \Act$, 
the unary \emph{encapsulation} operator 
$\funct{\encap{H}}{\Proc}{\Proc}$;
\item
for each $I \subseteq \Act$, 
the unary \emph{abstraction} operator 
$\funct{\abstr{I}}{\Proc}{\Proc}$;
\item
for each $\pi \in \Prob$,
the binary \emph{probabilistic choice} operator 
$\funct{\paltc{\pi}}{\Proc \x \Proc}{\Proc}$.
\end{itemize}
\end{sdef}
\begin{sass}
It is assumed that there is a countably infinite set $\cX$ of variables 
of sort $\Proc$, which contains $x$, $y$ and $z$.
\end{sass}
Terms are built as usual.
\begin{snot}
Infix notation is used for the binary operators.

The following precedence conventions are used to reduce the need for
parentheses: the operator $\altc$ has a lower precedence than all other 
binary operators and the operator $\seqc$ has a higher precedence than 
all other binary operators.

The following associativity convention is used to reduce the need for
parentheses: the binary operators $\altc$ and $\seqc$ are right 
associative.
\end{snot}
Because the axiom system of \pACPet\ includes associative laws for
the operators $\altc$ and $\seqc$, the choice between left associativity 
and right associativity seems arbitrary.
However, right associativity is more convenient when deriving equations 
from the axioms of \pACPet.

\begin{scnv}
In explanations, we sloppily say ``$t$'', where $t$ is a closed term of 
sort $\Proc$, instead of ``the process denoted by $t$''.
\end{scnv}

Let  $t$ and $t'$ be closed \pACPet\ terms, $a \in \Act$,
$H,I \subseteq \Act$, and $\pi \in \Prob$.
Then the constants and operators of \pACPet\ can be explained as 
follows:
\begin{itemize}
\item
$a$ performs the observable action $a$ and after that terminates 
successfully;
\item
$\tau$ performs the unobservable action $\tau$ and after that terminates 
successfully;
\item
$\dead$ cannot do anything, it cannot even terminate successfully;
\item
$\ep$ terminates successfully without performing any action.
\item
$t \altc t'$ behaves as either $t$ or $t'$;
 \item
$t \seqc t'$ behaves as $t$ and $t'$ in sequence;
\item
$t \parc t'$  behaves as $t$ and $t'$ in parallel; 
\item
$t \leftm t'$ behaves the same as $t \parc t'$, except that it starts 
with performing an action of $t$;
\item
$t \commm t'$ behaves the same as $t \parc t'$, except that it starts 
with performing an action of $t$ and an action of $t'$ synchronously;
\item
$\encapa(t)$ terminates successfully without performing any action if 
$t$ has the option to terminate successfully and cannot do anything 
otherwise;
\item
$\encap{H}(t)$ behaves the same as $t$, except that actions from $H$ are 
blocked from being performed;
\item
$\abstr{I}(t)$ behaves the same as $t$, except that actions from $I$ are 
turned into the unobservable action $\tau$;
\item
$t \paltc{\pi} t'$ behaves as $t$ with probability $\pi$ and as $t'$ 
with probability $1 - \pi$.
\end{itemize}
Here ``behaves as $t$ and $t'$ in parallel'' means that 
(a)~each time an action is performed, either a next action of $t$ is 
performed or a next action of $t'$ is performed or a next action of $t$ 
and a next action of $t'$ are performed synchronously and
(b)~successful termination may take place at any time that both $t$ and 
$t'$ can terminate successfully.

In the case of $t \altc t'$, the choice between $t$ and $t'$ is resolved 
at the instant that one of them performs its first action or terminates 
successfully without performing any action, and not before.
In the case of $t \paltc{\pi} t'$, the choice between $t$ and $t'$ is 
resolved before one of them performs its first action or terminates 
successfully without performing any action.

The operators $\leftm$, $\commm$, and $\encapa$ are of an auxiliary 
nature.
They make a finite axiomatization of \pACPet\ possible.

\begin{sdef}
The axiom system of the algebraic theory \pACPet\ consists of the 
equations and conditional equations presented in 
Tables~\ref{axioms-pACPet-1} and~\ref{axioms-pACPet-2}.
\begin{table}[!t]
\caption{Axioms of \pACPet\ (part~1)}
\label{axioms-pACPet-1}
\begin{eqntbl}
\begin{axcol}
x \altc y = y \altc x                                & & \axiom{A1}   \\
(x \altc y) \altc z = x \altc (y \altc z)            & & \axiom{A2}   \\
a \altc a = a                                        & & \axiom{A3'}  \\
\ep \altc \ep = \ep                                  & & \axiom{A3''} \\
(x \altc y) \seqc z = x \seqc z \altc y \seqc z      & & \axiom{A4}   \\
(x \seqc y) \seqc z = x \seqc (y \seqc z)            & & \axiom{A5}   \\
x \altc \dead = x                                    & & \axiom{A6}   \\
\dead \seqc x = \dead                                & & \axiom{A7}   \\
x \seqc \ep = x                                      & & \axiom{A8}   \\
\ep \seqc x = x                                      & & \axiom{A9} 
\eqnsep
x = x \altc x \Land y = y \altc y \Limpl \\ \quad\;\;
x \parc y = x \leftm y \altc y \leftm x \altc x \commm y \altc
\encapa(x) \seqc \encapa(y)                          & & \axiom{CM1E'} \\
\ep \leftm x = \dead                                 & & \axiom{CM2E} \\
\alpha \seqc x \leftm y = \alpha \seqc (x \parc y)   & & \axiom{CM3}  \\
(x \altc y) \leftm z = x \leftm z \altc y \leftm z   & & \axiom{CM4}  \\
\ep \commm x = \dead                                 & & \axiom{CM5E} \\
x \commm \ep = \dead                                 & & \axiom{CM6E} \\
a \seqc x \commm b \seqc y = \commf(a,b) \seqc (x \parc y) 
                                                     & & \axiom{CM7}  \\
(x \altc y) \commm z = x \commm z \altc y \commm z   & & \axiom{CM8}  \\
x \commm (y \altc z) = x \commm y \altc x \commm z   & & \axiom{CM9}  \\
\dead \commm x = \dead                               & & \axiom{CM10} \\
x \commm \dead = \dead                               & & \axiom{CM11} \\
a \commm b = \commf(a,b)                             & & \axiom{CM12} 
\eqnsep
\encapa(\ep) = \ep                                   & & \axiom{TE1}  \\
\encapa(\alpha) = \dead                              & & \axiom{TE2}  \\
\encapa(x \altc y) = \encapa(x) \altc \encapa(y)     & & \axiom{TE3}  \\
\encapa(x \seqc y) = \encapa(x) \seqc \encapa(y)     & & \axiom{TE4}
\eqnsep
\encap{H}(\ep) = \ep                                   & & \axiom{D0} \\
\encap{H}(\alpha) = \alpha      & \mif \alpha \notin H   & \axiom{D1} \\ 
\encap{H}(\alpha) = \dead       & \mif \alpha \in H      & \axiom{D2} \\
\encap{H}(x \altc y) = \encap{H}(x) \altc \encap{H}(y) & & \axiom{D3} \\
\encap{H}(x \seqc y) = \encap{H}(x) \seqc \encap{H}(y) & & \axiom{D4}
\eqnsep
\abstr{I}(\ep) = \ep                                   & & \axiom{T0} \\
\abstr{I}(\alpha) = \alpha      & \mif \alpha \notin I   & \axiom{T1} \\
\abstr{I}(\alpha) = \tau        & \mif \alpha \in I      & \axiom{T2} \\
\abstr{I}(x \altc y) = \abstr{I}(x) \altc \abstr{I}(y) & & \axiom{T3} \\
\abstr{I}(x \seqc y) = \abstr{I}(x) \seqc \abstr{I}(y) & & \axiom{T4} 
\end{axcol}
\end{eqntbl}
\end{table}
\begin{table}[!t]
\caption{Axioms of \pACPet\ (part~2)}
\label{axioms-pACPet-2}
\begin{eqntbl}
\begin{axcol}
x \paltc{\pi} y = y \paltc{1{-}\pi} x                  & & \axiom{pA1}  \\
(x \paltc{\pi} y) \paltc{\rho} z = 
x \paltc{\pi{\mul}\rho}
\smash{(y \paltc{\frac{(1{-}\pi){\mul}\rho}{1{-}\pi{\mul}\rho}} z)}
                                                       & & \axiom{pA2}  \\
x \paltc{\pi} x = x                                    & & \axiom{pA3}  \\
(x \paltc{\pi} y) \seqc z = x \seqc z \paltc{\pi} y \seqc z
                                                       & & \axiom{pA4}  \\
(x \paltc{\pi} y) \altc z = (x \altc z) \paltc{\pi} (y \altc z)
                                                       & & \axiom{pA5}  \\
x \paltc{1} y = x                                      & & \axiom{pA6}  
\eqnsep
(x \paltc{\pi} y) \parc z = (x \parc z) \paltc{\pi} (y \parc z)
                                                       & & \axiom{pCM1} \\                                                               
x \parc (y \paltc{\pi} z) = (x \parc y) \paltc{\pi} (x \parc z)
                                                       & & \axiom{pCM2} \\
(x \paltc{\pi} y) \leftm z = (x \leftm z) \paltc{\pi} (y \leftm z)
                                                       & & \axiom{pCM3} \\                                                               
x \leftm (y \paltc{\pi} z) = (x \leftm y) \paltc{\pi} (x \leftm z)
                                                       & & \axiom{pCM4} \\
(x \paltc{\pi} y) \commm z = (x \commm z) \paltc{\pi} (y \commm z)
                                                       & & \axiom{pCM5} \\                                                               
x \commm (y \paltc{\pi} z) = (x \commm y) \paltc{\pi} (x \commm z)
                                                       & & \axiom{pCM6} 
\eqnsep
\encapa(x \paltc{\pi} y)   = \encapa(x) \paltc{\pi} \encapa(y)
                                                       & & \axiom{pTE}   \\
\encap{H}(x \paltc{\pi} y) = \encap{H}(x) \paltc{\pi} \encap{H}(y)
                                                       & & \axiom{pD}   \\
\abstr{I}(x \paltc{\pi} y) = \abstr{I}(x) \paltc{\pi} \abstr{I}(y)
                                                       & & \axiom{pT}  
\eqnsep 

x = x \altc x \Land y = y \altc y \Land \encapa(x \altc y) = \dead \Limpl 
\\ \quad\;\;                                                        
\alpha \seqc ((\tau \seqc (x \altc y) \altc x) \paltc{\pi} z) = 
\alpha \seqc ((x \altc y) \paltc{\pi} z)
                                                       & & \axiom{pBE} 
\end{axcol}
\end{eqntbl}
\end{table}
In these tables, 
$a$, $b$, and $\alpha$ stand for arbitrary constants of \pACPet\ other 
than $\ep$,\, 
$H$ and $I$ stand for arbitrary subsets of $\Act$, and 
$\pi$ and $\rho$ stand for arbitrary probabilities from $\Prob$.
\end{sdef}
A3$'$, CM3, CM7, CM12, TE2, D0--D4, T0--T4, pA1--pA5, pCM1--pCM6, pD, 
pT, and pBE are actually axiom schemas.
\begin{scnv}
In this paper, axiom schemas will usually be referred to as axioms.
\end{scnv}

The occurrence of the strange-looking term $\encapa(x) \seqc \encapa(y)$ 
in axiom CM1E deserves some explanation.
This term is needed to handle successful termination in the presence of 
the constant $\ep$:
it stands for the process that behaves the same as $\ep$ if both $x$ and 
$y$ stand for a process that has the option to behave the same as $\ep$ 
and it stands for the process that behaves the same as $\dead$ 
otherwise.

Axiom pBE is axiom BE of \ACPet~\cite{Mid21a} generalized to the 
probabilistic setting.
The consequent of pBE holds only if:
\begin{itemize}
\item 
$x$ and $y$ stand for processes that do not have to resolve a 
probabilistic choice before they can perform their first action (which 
is expressed by the first two conjuncts of the antecedent);
\item
$x$ and $y$ stand for processes that do not have the option to terminate 
successfully without performing any action (which is expressed by the 
third conjunct of the antecedent).
\end{itemize}
This means that pBE does not allow the removal of a silent step if it 
is immediately followed by a process that has to resolve a probabilistic 
choice before it can perform its first action or it is immediately 
followed by a process that has the option to terminate successfully 
without performing any action.
Without the antecedent of pBE, this axiom would not be sound with 
respect to rooted branching bisimulation equivalence as defined in 
Section~\ref{sect-sound-compl}.

In the sequel, the notation $\Altc{i=1}{n} t_i$, where $n \geq 0$, will 
be used for right-nested alternative compositions.
\begin{snot}
For each $n \in \Nat$, 
the term $\Altc{i = 1}{n} t_i$ is defined by induction on $n$ as 
follows:
\begin{ldispl}
\Altc{i = 1}{0} t_i = \dead
\quad \mathrm{and} \quad
\Altc{i = 1}{1} t_i = t_1  
\quad \mathrm{and} \quad
\Altc{i = 1}{n + 2} t_i = t_1 \altc \Altc{i = 1}{n + 1} t_{i+1}\;.
\end{ldispl}%
\end{snot}

In the sequel, the notation $\Parc{i=1}{n} t_i$, where $n \geq 0$, will 
be used for right-nested parallel compositions.
\begin{snot}
For each $n \in \Nat$, 
the term $\Parc{i = 1}{n} t_i$ is defined by induction on $n$ as 
follows:
\begin{ldispl}
\Parc{i = 1}{0} t_i = \dead
\quad \mathrm{and} \quad
\Parc{i = 1}{1} t_i = t_1  
\quad \mathrm{and} \quad
\Parc{i = 1}{n + 2} t_i = t_1 \parc \Parc{i = 1}{n + 1} t_{i+1}\;.
\end{ldispl}%
\end{snot}

In the sequel, the notation $\PRC{i = 1}{n}{\pi_i}{t_i}$, where 
$n \geq 1$ and $\sum_{i \leq n} \pi_i = 1$, will be used for 
right-nested probabilistic choices.
\begin{snot}
For each $n \in \Natpos$,%
\footnote
{We write $\Natpos$ for the set $\set{n \in \Nat \where n \geq 1}$ of 
 positive natural numbers.} 
the term $\PRC{i = 1}{n}{\pi_i}{t_i}$ is defined by induction on $n$ as 
follows:
\begin{ldispl}
\PRC{i = 1}{1}{\pi_i}{t_i}   = t_1
\quad \mathrm{and} \quad
\PRC{i = 1}{n+1}{\pi_i}{t_i} =
t_1 \paltc{\pi_1} 
(\PRC{i = 1}{n}{\frac{\pi_{i+1}}{1 - \pi_1}}{t_{i+1}})\;.
\end{ldispl}%
\end{snot}
The process denoted by $\PRC{i = 1}{n + 1}{\pi_i}{t_i}$ behaves as 
the process denoted by $t_1$ with probability $\pi_1$, \ldots,  
like the process denoted by $t_{n+1}$ with probability $\pi_{n+1}$.

\begin{sexa}
The process of throwing a die once can be described as follows:
\begin{ldispl}
\throw{1} \paltc{1/6} (\throw{2} \paltc{1/5} (\throw{3} \paltc{1/4}
(\throw{4} \paltc{1/3} (\throw{5} \paltc{1/2} \throw{6}))))\;,
\end{ldispl}%
or using the notation just introduced:
\begin{ldispl}
\PRC{i = 1}{6}{1/6}{\throw{i}}\;.
\end{ldispl}%
\end{sexa}

\section{Imperative \pACPet}
\label{sect-depACPet}

In this section, \depACPet, imperative \pACPet, is presented.
This extension of \pACPet\ has its origin in~\cite{Mid21a}.
It has features that are relevant to processes in which data are 
involved, such as data parameterized actions (to deal with process 
interactions with data transfer), assignment actions (to deal with 
data that change in the course of a process), and guarded commands 
(to deal with processes that only take place if some data-dependent 
condition holds). 

\begin{sass}
\label{assumption-data}
It is assumed that the following has been given with 
respect to data:
\begin{itemize}
\item
a many-sorted signature $\sig_\gD$ that includes:
\begin{itemize}
\item
sorts $\Data_1,\ldots,\Data_n$ ($n > 0$) of \emph{data};
\item 
a sort $\Bool$ of \emph{booleans};
\item
constants and operators such that there exist closed terms of the sorts 
$\Data_1,\ldots,\Data_n$;
\item
constants $\Btrue$ and $\Bfalse$ of sort $\Bool$;
\end{itemize}
\item
a minimal algebra $\gD$ of the signature $\sig_\gD$ in which: 
\begin{itemize}
\item
the cardinality  of the carrier of sort $\Bool$ is $2$;
\item
the equation $\Btrue = \Bfalse$ does not hold.
\end{itemize}
\end{itemize}
\end{sass}
The sort $\Bool$ is assumed to be given primarily in order to make it 
possible for operators to serve as predicates.
\begin{snot}
We write $\DataVal_1$, \ldots, $\DataVal_n$, for the sets of 
all closed terms over the signature $\sig_\gD$ that are of sort 
$\Data_1$, \ldots, $\Data_n$, respectively.
\end{snot}

In \deACPet~\cite{Mid21a} --- of which \depACPet\ is an extension with
probabilistic choice operators --- only one sort of data is assumed to 
be given.
For convenience in applications, the restriction to one sort of data has
been removed in \depACPet.

\begin{sass}
\label{assumption-flex-vars}
It is assumed that finite or countably infinite sets $\ProgVar_1$, 
\ldots, $\ProgVar_n$ of \emph{flexible variables of sort $\Data_1$, 
\ldots, $\Data_n$}, respectively, have been given and that those sets 
are mutually disjoint.
\end{sass}
A flexible variable is a variable whose value may change in the course 
of a process.%
\footnote
{The term flexible variable is used for this kind of variables in 
 e.g.~\cite{Sch97a,Lam94a}.} 
Typical examples of flexible variables are the program variables known 
from imperative programming.

\begin{sdef}
An \emph{evaluation map} is a function $\sigma$ 
from $\Union_{i \in \set{1,\ldots,n}} \ProgVar_i$ 
to $\Union_{i \in \set{1,\ldots,n}} \DataVal_i$ 
such that, for all $i,j \in \set{1,\ldots,n}$ with $i \neq j$,
for all $v \in \ProgVar_i$ and $e \in \DataVal_j$, $\sigma(v) \neq e$.
\end{sdef}
\begin{snot}
We write $\EvalMap$ for the set of all evaluation maps.
\end{snot}

Evaluation maps are intended to provide the data values assigned to 
flexible variables when an \depACPet\ term of a sort of data is 
evaluated.
However, in order to fit better in an algebraic setting, they provide 
closed terms over the signature $\sig_\gD$ that denote those data 
values instead.
The requirement that $\gD$ is a minimal algebra guarantees that each 
data value can be represented by a closed term. 

Below, the signature of \depACPet\ is introduced.
The signature of \depACPet\ includes a variable-binding operator.
The formation rules for \depACPet\ terms are the usual ones for the 
many-sorted case (see e.g.~\cite{ST99a,Wir90a}) and in addition the 
following rule:
\begin{itemize}
\item
if $O$ is a variable-binding operator 
$\funct{O}{S_1 \x \ldots \x S_n}{S}$ that binds a variable of sort~$S'$,
$t_1,\ldots,t_n$~are terms of sorts $S_1,\ldots,S_n$, respectively, and 
$X$ is a variable of sort $S'$, then $O X (t_1,\ldots,t_n)$ is a term of 
sort $S$.
\end{itemize}
An extensive formal treatment of the phenomenon of variable-binding 
operators can be found in~\cite{PS95a}.

The signature and axiom system of the algebraic theory \depACPet\ 
are extensions of the signature and axiom system of \pACPet.
What is assumed to be given in 
Assumptions~\ref{assumption-basic-actions}, \ref{assumption-meadow},
\ref{assumption-data}, and~\ref{assumption-flex-vars} can be considered 
formal parameters of the algebraic theory \depACPet\ --- of which 
different concretizations yield different instantiations of the theory.

\begin{sdef}
The signature of the algebraic theory \depACPet\ consists of 
the sorts, constants, and operators from the signature $\sig_\gD$,
the sorts, constants, and operators from the signature of \deACPet, and
the following sorts, constants, and operators:
\begin{itemize}
\item
the sort $\Cond$ of \emph{conditions};
\item
for each $i \in \set{1,\ldots,n}$, for each $v \in \ProgVar_i$, 
the \emph{flexible variable} constant $\const{v}{\Data_i}$;
\item
for each $i \in \set{1,\ldots,n}$,
the binary \emph{equality} operator
$\funct{\Leq}{\Data_i \x \Data_i}{\Cond}$;%
\footnote
{The overloading of $=$ can be trivially resolved if $\sig_\gD$ is
 without overloaded symbols.}
\item
the binary \emph{equality} operator
$\funct{\Leq}{\Bool \x \Bool}{\Cond}$;
\item
the \emph{falsity} constant $\const{\False}{\Cond}$;
\item
the unary \emph{negation} operator $\funct{\Lnot}{\Cond}{\Cond}$;
\item
the binary \emph{disjunction} operator 
$\funct{\Lor}{\Cond \x \Cond}{\Cond}$;
\item
the unary variable-binding \emph{existential quantification} operator 
$\funct{\exists}{\Cond}{\Cond}$ that binds a variable of sort $\Data$; 
\item
for each $i \in \set{1,\ldots,n}$, for each $a \in \Act$,
the unary \emph{data parameterized action} operator
$\funct{a}{\Data_i}{\Proc}$;
\item
for each $i \in \set{1,\ldots,n}$, for each $v \in \ProgVar_i$, 
the unary \emph{assignment action} operator
$\funct{\assop{v}\,}{\Data_i}{\Proc}$;
\item
the binary \emph{guarded command} operator 
$\funct{\gc\,}{\Cond \x \Proc}{\Proc}$;
\item
for each $\sigma \in \EvalMap$, 
the unary \emph{evaluation} operator 
$\funct{\eval{\sigma}}{\Proc}{\Proc}$.
\end{itemize}
\end{sdef}
\begin{sass}
It is assumed that there are countably infinite sets of variables of 
sort $\Data_1$, \ldots, $\Data_n$, and $\Cond$ and that the sets of 
variables of sort $\Data_1$, \ldots, $\Data_n$, $\Cond$, and $\Proc$ 
are mutually disjoint and disjoint from 
$\Union_{i \in \set{1,\ldots,n}} \ProgVar_i$.
\end{sass}
\begin{snot}
Infix notation is also used for the additional binary operators.
\end{snot}
\begin{snot}
The notation $\ass{v}{e}$, where $v \in \ProgVar_i$ and $e$ is a closed
\depACPet\ term of sort $\Data_i$ for some $1 \leq i \leq n$, is used 
for the term $\assop{v}(e)$.
\end{snot}

Because operators with result sort $\Bool$ serve as predicates, the 
following notation is convenient.
\begin{snot}
We write $e_1 \diamond e_2$, 
where $\funct{\diamond}{\Data_i \x \Data_j}{\Bool}$, 
$e_1$ is a closed \depACPet\ term of sort $\Data_i$, and
$e_2$ is a closed \depACPet\ term of sort $\Data_j$,  
for some $i,j \in \set{1,\ldots,n}$, instead of 
$e_1 \diamond e_2 = \Btrue$ where a term of sort $\Cond$ is expected.
\end{snot}

We also use the common logical abbreviations.
\begin{snot}
Let $\phi$ and $\psi$ be \depACPet\ terms of sort $\Cond$ and
let $X$ be a variable of one of the sorts $\Data_1$, \ldots, $\Data_n$.
Then $\True$ stands for $\Lnot \False$,
$\phi \Land \psi$ stands for $\Lnot (\Lnot \phi \Lor \Lnot \psi)$,
$\phi \Limpl \psi$ stands for $\Lnot \phi \Lor \psi$,
$\phi \Liff \psi$ stands for 
$(\phi \Limpl \psi) \Land (\psi \Limpl \phi)$, and
$\Forall{X}{\phi}$ stands for $\Lnot \Exists{X}{\Lnot \phi}$.
\end{snot}

\begin{snot}
We write:
\begin{itemize}
\item
$\ProcTerm$\, for the set of all closed \depACPet\ terms of sort 
$\Proc$;
\item
$\CondTerm$\,\, for the set of all closed \depACPet\ terms of sort 
$\Cond$;
\item
$\DataTerm_1$, \ldots, $\DataTerm_n$ for the sets of all closed 
\depACPet\ terms of sort $\Data_1$, \ldots, $\Data_n$, respectively.
\end{itemize}
\end{snot}

Each term from $\CondTerm$ can be taken as a formula of a first-order 
language with equality of $\gD$ by taking the flexible variable 
constants of sorts $\Data_1$, \ldots, $\Data_n$ as additional variables 
of sorts $\Data_1$, \ldots, $\Data_n$, respectively.
The flexible variable constants are implicitly taken as additional 
variables wherever the context asks for a formula.
In this way, each term from $\CondTerm$ can be interpreted in $\gD$ as a
formula.
The axioms of \depACPet\ (given below) include an equation $\phi = \psi$ 
for each two terms $\phi$ and $\psi$ from $\CondTerm$ for which the 
formula $\phi \Liff \psi$ holds in $\gD$.

Let $t$ be a term from $\ProcTerm$, $\phi$ be a term from $\CondTerm$, 
$e$ be a term from one of the sets $\DataTerm_1$, \ldots, $\DataTerm_n$, 
and $a$ be a basic action from $\Act$. 
Then the additional operators to build terms of sort $\Proc$ can be 
explained as follows:
\begin{itemize}
\item
$a(e)$ performs the data parameterized action $a(e)$ and after that 
terminates successfully;
\item
$\ass{v}{e}$ performs the assignment action $\ass{v}{e}$, whose intended 
effect is the assignment of the result of evaluating $e$ to flexible 
variable $v$, and after that terminates successfully; 
\item
$\phi \gc t$ behaves as $t$ if condition $\phi$ holds and as $\dead$ 
otherwise;
\item
$\eval{\sigma}(t)$ behaves as $t$ after each subterm $e$ of $t$ of one 
of the sorts $\Data_1$, \ldots, $\Data_n$ has been evaluated using the 
evaluation map $\sigma$ updated according to the assignment actions that 
have taken place at the point where the subterm is encountered.
\end{itemize}
Evaluation operators are a variant of state operators 
(see e.g.~\cite{BB88}).

An evaluation map $\sigma$ can be extended homomorphically from flexible
variables to terms from $\DataTerm_1$, \ldots, $\DataTerm_n$, and 
$\CondTerm$.
\begin{snot}
The homomorphic extensions of an evaluation map $\sigma$ from flexible
variables to terms from $\DataTerm_1$, \ldots, $\DataTerm_n$, and 
$\CondTerm$ are denoted by $\sigma$ as well.
\end{snot}
\begin{snot}
We write $\sigma\mapupd{e}{v}$, where $e \in \DataTerm_i$ and 
$v \in \ProgVar_i$ for some $i \in \set{1,\ldots,n}$ and 
$\sigma \in \EvalMap$, for the evaluation map $\sigma'$ defined by 
$\sigma'(v') = \sigma(v')$ if $v' \neq v$ and $\sigma'(v) = e$.
\end{snot}

In the sequel, reference is made to the following two subsets of 
$\ProcTerm$.
\begin{sdef}
\begin{ldispl}
\begin{aeqns}
\AProcDPA  & = & \,  
\Union_{i \in \set{1,\ldots,n}} 
 \set{a(e) \where a \in \Act \Land e \in \DataTerm_i}\;, \\
\AProcTerm & = & 
\set{a \where a \in \Act} \union \AProcDPA \union
\Union_{i \in \set{1,\ldots,n}} 
 \set{\ass{v}{e} \where v \in \ProgVar_i \Land e \in \DataTerm_i}\;.
\end{aeqns}
\end{ldispl}%
\end{sdef}
The elements of $\AProcTerm$ are the terms from $\ProcTerm$ that denote 
the processes that are considered to be atomic.
\begin{snot}
We write $\AProcTermt$ for $\AProcTerm \union \set{\tau}$ and 
$\AProcTermtd$ for $\AProcTerm \union \set{\tau,\dead}$.
\end{snot}

\begin{sdef}
The axiom system of \depACPet\ consists of the equations and conditional
equations presented in Tables~\ref{axioms-pACPet-1} 
and~\ref{axioms-pACPet-2}, on the understanding that $\alpha$ now stands 
for an arbitrary term from $\AProcTermtd$ and $H$ and $I$ now stand for 
an arbitrary subsets of $\AProcTerm$, and in addition the axioms 
presented in Table~\ref{axioms-depACPet}.
\begin{table}[!t]
\caption{Additional axioms of \depACPet}
\label{axioms-depACPet}
\begin{eqntbl}
\begin{axcol}
e = e'         & \mif \Sat{\gD}{\fol{e = e'}}          & \axiom{IMP1} \\
\phi = \psi    & \mif \Sat{\gD}{\fol{\phi \Liff \psi}} & \axiom{IMP2} 
\eqnsep
\True \gc x = x                                      & & \axiom{GC1}  \\
\False \gc x = \dead                                 & & \axiom{GC2}  \\
\phi \gc \dead = \dead                               & & \axiom{GC3}  \\
\phi \gc (x \altc y) = \phi \gc x \altc \phi \gc y   & & \axiom{GC4}  \\
\phi \gc x \seqc y = (\phi \gc x) \seqc y            & & \axiom{GC5}  \\
\phi \gc (\psi \gc x) = (\phi \Land \psi) \gc x      & & \axiom{GC6}  \\
(\phi \Lor \psi) \gc x = \phi \gc x \altc \psi \gc x & & \axiom{GC7}  \\
(\phi \gc x) \leftm y = \phi \gc (x \leftm y)        & & \axiom{GC8}  \\
(\phi \gc x) \commm y = \phi \gc (x \commm y)        & & \axiom{GC9}  \\
x \commm (\phi \gc y) = \phi \gc (x \commm y)        & & \axiom{GC10} \\
\encapa(\phi \gc x) = \phi \gc \encapa(x)            & & \axiom{GC11} \\
\encap{H}(\phi \gc x) = \phi \gc \encap{H}(x)        & & \axiom{GC12} \\
\abstr{I}(\phi \gc x) = \phi \gc \abstr{I}(x)        & & \axiom{GC13} 
\eqnsep
\eval{\sigma}(\ep) = \ep                             & & \axiom{V0}   \\
\eval{\sigma}(\tau \seqc x) = \tau \seqc \eval{\sigma}(x)
                                                     & & \axiom{V1}   \\
\eval{\sigma}(a \seqc x) = a \seqc \eval{\sigma}(x)  & & \axiom{V2}   \\
\eval{\sigma}(a(e) \seqc x) = a(\sigma(e)) \seqc \eval{\sigma}(x)
                                                     & & \axiom{V3}   \\
\eval{\sigma}(\ass{v}{e} \seqc x) = 
{\ass{v}{\sigma(e)} \seqc \eval{\sigma\mapupd{\sigma(e)}{v}}(x)} 
                                                     & & \axiom{V4}   \\
\eval{\sigma}(x \altc y) = \eval{\sigma}(x) \altc \eval{\sigma}(y)
                                                     & & \axiom{V5}   \\
\eval{\sigma}(\phi \gc y) = \sigma(\phi) \gc \eval{\sigma}(x)
                                                     & & \axiom{V6}   
\eqnsep
a(e) \seqc x \commm b(e') \seqc y = (e = e' \gc c(e)) \seqc (x \parc y)
                          & \mif \commf(a,b) = c      & \axiom{CM7Da} \\
a(e) \seqc x \commm b(e') \seqc y = \dead
                          & \mif \commf(a,b) = \dead  & \axiom{CM7Db} \\
a(e) \seqc x \commm \alpha \seqc y = \dead 
                       & \mif \alpha \notin \AProcDPA & \axiom{CM7Dc} \\
\alpha \seqc x \commm a(e) \seqc y = \dead 
                       & \mif \alpha \notin \AProcDPA & \axiom{CM7Dd} \\
\ass{v}{e} \seqc x \commm \alpha \seqc y = \dead    & & \axiom{CM7De} \\
\alpha \seqc x \commm \ass{v}{e} \seqc y = \dead    & & \axiom{CM7Df}   
\eqnsep
\phi \gc (x \paltc{\pi} y) = \phi \gc x \paltc{\pi} \phi \gc y
                                                     & & \axiom{pGC}  \\
\eval{\sigma}(x \paltc{\pi} y) =
 \eval{\sigma}(x) \paltc{\pi} \eval{\sigma}(y)       & & \axiom{pV}   \\
x = x \altc x \Land y = y \altc y \Land 
\encapa(x \altc y) = \dead \Limpl \\ 
\multicolumn{2}{l@{\quad}}{\quad\;\;                                                        
\alpha \seqc 
((\phi \gc \tau \seqc (x \altc y) \altc \phi \gc x) \paltc{\pi} z) = 
\alpha \seqc ((\phi \gc (x \altc y)) \paltc{\pi} z)}  & \axiom{pBED}
\end{axcol}
\end{eqntbl}
\end{table}
In the latter table, 
$\phi$ and $\psi$ stand for arbitrary terms from $\CondTerm$,\,
$e$ and $e'$ stand for arbitrary terms from one of the sets 
$\DataTerm_1$, \ldots, $\DataTerm_n$,\, 
$v$ stands for an arbitrary flexible variable from one of the sets 
$\ProgVar_1$, \ldots, $\ProgVar_n$,\,
$\sigma$ stands for an arbitrary evaluation map from $\EvalMap$,\,
$a,b$, and $c$ stand for arbitrary basic actions from $\Act$, and
$\alpha$ stands for an arbitrary term from $\AProcTermtd$.
Here, ``arbitrary'' must be interpreted as arbitrary to such an extent 
that it does not lead to syntactically incorrect equations and
conditional equations.
\end{sdef}
Axioms GC1--GC10, GC12, and GC13 have been taken from~\cite{BB92c} 
(using a different numbering), but with the axioms with occurrences of 
Hoare's ternary counterpart of the guarded command operator (see below) 
replaced by simpler axioms.
Axioms CM7Da and CM7Db have been inspired by~\cite{BM09d}.
Axiom pBED is axiom BED of \deACPet~\cite{Mid21a} generalized to the 
probabilistic setting in the same way as axiom BE of \ACPet\ has been
generalized to the probabilistic setting in Section~\ref{sect-pACPet}.

\begin{sexa}
The following closed \depACPet\ term of sort $\Proc$ denotes a process
with probabilistic behaviour 
($v_1, \ldots, v_n, v, v', w \in \ProgVar_1$):
\begin{ldispl}
(\PRC{i = 1}{n}{1/n}{\ass{v}{v_i}}) \seqc \ass{v'}{0} \\
{} \seqc
(v_1 = v \gc \ass{v'}{v'+1} \altc \Lnot (v_1 = v) \gc \ep) \\
\hfill \vdots \hfill \\
{} \seqc 
(v_n = v \gc \ass{v'}{v'+1} \altc \Lnot (v_n = v) \gc \ep) \\
{} \seqc
(v' > n/2 \gc \ass{w}{1} \altc
 \Lnot (v' > n/2)  \gc \ass{w}{0})\;.%
\footnotemark
\end{ldispl}%
\footnotetext
{In this example and the next example, the carrier of $\Data_1$ is 
 assumed to be the set of all natural numbers. Moreover, the usual 
 natural number constants, operators on natural numbers, and predicate
 on natural numbers are assumed.}%
The process denoted by this term behaves such that, if there exists a 
natural number $m$ such that $m$ is the initial value of more than $n/2$ 
of the flexible variables $v_1, \ldots, v_n$, then the final value of 
$w$ is $1$ with a probability greater than or equal to $1/2$.
\end{sexa}

\section{\depACPet\ with Recursion}
\label{sect-depACPetr}

A closed \depACPet\ term of sort $\Proc$ denotes a process with a finite 
upper bound to the number of actions that it can perform. 
Recursion allows the description of processes without a finite upper 
bound to the number of actions that it can perform.
In this section, the extension of \depACPet\ with guarded linear 
recursion, called \depACPetr, is presented.

\begin{sdef}
A \emph{recursive specification} over \depACPet\ is a set 
$\set{X_i = t_i \where i \in I}$, where $I$ is a finite set, 
each $X_i$ is a variable from $\cX$, each $t_i$ is a \depACPet\ term of 
sort $\Proc$ in which only variables from $\set{X_i \where i \in I}$ 
occur, and $X_i \neq X_j$ for all $i,j \in I$ with $i \neq j$. 
\end{sdef}
\begin{snot}
\sloppy
We write $\vars(E)$, where $E$ is a recursive specification over 
\depACPet, for the set of all variables that occur in $E$.
\end{snot}
\begin{sdef}
Let $E$ be a recursive specification and let $X \in \vars(E)$.
Then the unique equation $X \!= t \;\in\, E$ is called the 
\emph{recursion equation for $X$ in $E$}.
\end{sdef}

Below, recursive specifications over \depACPet\ are introduced in which 
the right-hand sides of the recursion equations are linear \depACPet\ 
terms.
\begin{sdef}
The set $\LT$ of \emph{linear \depACPet\ terms} is inductively defined 
by the following rules:
\begin{itemize}
\item
$\dead \in \LT$;
\item 
if $\phi \in \CondTerm$, then $\phi \gc \ep \in \LT$;
\item
if $\phi \in \CondTerm$, $\alpha \in \AProcTermt$, and $X \in \cX$, then 
$\phi \gc \alpha \seqc X \in \LT$;
\item
if $t,t' \in \LT \diff \set{\dead}$, then $t \altc t' \in \LT$;
\item
if $t,t' \in \LT$ and $\pi \in \Prob \diff \set{0,1}$, then 
$t \paltc{\pi} t' \in \LT$.
\end{itemize}
Let $t \in \LT$.
Then we refer to the subterms of $t$ that have the form $\phi \gc \ep$ 
or the form $\phi \gc \alpha \seqc X$ as the \emph{summands of} $t$.
\end{sdef}

\begin{sdef}
Let $X$ be a variable from $\cX$ and
let $t$ be an \depACPet\ term in which $X$ occurs. 
Then an occurrence of $X$ in $t$ is \emph{guarded} if $t$ has a subterm 
of the form $\alpha \seqc t'$ where $\alpha \in \AProcTerm$ and $t'$ 
contains this occurrence of $X$.
\end{sdef}
An occurrence of a variable $X$ in a linear \depACPet\ term may not be 
guarded because a linear \depACPet\ term may have summands of the form 
$\phi \gc \tau \seqc X$.

\begin{sdef}
A \emph{guarded linear recursive specification} over \depACPet\ is a 
recursive specification $\set{X_i = t_i \where i \in I}$ over \depACPet\ 
where each $t_i$ is a linear \depACPet\ term, and there does not exist an 
infinite sequence $i_0\;i_1\;\ldots\,$ over $I$ such that, for each 
$k \in \Nat$, there is an occurrence of $X_{i_{k+1}}$ in $t_{i_k}$ 
that is not guarded.
\end{sdef}

\begin{sdef}
A \emph{solution} of a guarded linear recursive specification $E$ over 
\depACPet\ in some model of \depACPet\ is a set 
$\set{p_X \where X \in \vars(E)}$ of elements of the carrier of sort 
$\Proc$ in that model such that each equation in $E$ holds if, for all 
$X \in \vars(E)$, $X$ is assigned~$p_X$. 
\end{sdef}
A guarded linear recursive specification has a unique solution under 
rooted branching bisimulation equivalence as defined in 
Section~\ref{sect-semantics} for \depACPet\ extended with guarded linear 
recursion.
\begin{sdef}
If $\set{p_X \where X \in \vars(E)}$ is the unique solution of a guarded 
linear recursive specification $E$, then, for 
each $X \in \vars(E)$, $p_X$ is called the \emph{$X$-component} of the 
unique solution of $E$. 
\end{sdef}

The algebraic theory \depACPet\ is extended with guarded linear 
recursion by adding constants for solutions of guarded linear recursive 
specifications over \depACPet\ to the signature of \depACPet\ and axioms 
concerning these additional constants to the axiom system of \depACPet.
\begin{sdef}
The signature of the algebraic theory \depACPetr\ consists of the sorts, 
constants, and operators from the signature of \depACPet\ and, for each 
guarded linear recursive specification $E$ over \depACPet\ and 
$X \in \vars(E)$, a constant $\rec{X}{E}$ of sort $\Proc$.
\end{sdef}
\begin{snot}
We write $\ProcTermr$ for the set of all closed $\depACPetr$ terms of 
sort $\Proc$.
\end{snot}
Let $E$ be a guarded linear recursive specification over \depACPet, and 
let $X \in \vars(E)$. 
Then $\rec{X}{E}$ stands for the $X$-component of the unique solution of 
$E$.

\begin{sdef}
\sloppy
The axiom system of the algebraic theory \depACPetr\ consists of the 
equations and conditional equations from the axiom system of \depACPet\
and in addition the equation RDP (Recursive Definition Principle) and 
the conditional equation RSP (Recursive Specification Principle) given 
in Table~\ref{axioms-REC}.
\begin{table}[!t]
\caption{Axioms for guarded linear recursion}
\label{axioms-REC}
\begin{eqntbl}
\begin{axcol}
\rec{X}{E} = \rec{t}{E} & \mif X \!= t \;\in\, E & \axiom{RDP} \\
E \Limpl X = \rec{X}{E} & \mif X \in \vars(E)    & \axiom{RSP} 
\end{axcol}
\end{eqntbl}
\end{table}
In RDP and RSP, $X$ stands for an arbitrary variable from $\cX$, 
\linebreak[2] 
$t$ stands for an arbitrary \depACPet\ term of sort $\Proc$,\, 
$E$ stands for an arbitrary guarded linear recursive specification over 
\depACPet, and 
the notation $\rec{t}{E}$ is used for $t$ with, for all 
$X \in \vars(E)$, all occurrences of $X$ in $t$ replaced by 
$\rec{X}{E}$.
Side conditions restrict what $X$, $t$ and $E$ stand for.
\end{sdef}
RDP and RSP together postulate that guarded linear recursive 
specifications over \depACPet\ have unique solutions: 
the equations $\rec{X}{E} = \rec{t}{E}$ and the conditional equations 
$E \Limpl X \!=\! \rec{X}{E}$ for a fixed $E$ express that the constants 
$\rec{X}{E}$ make up a solution of $E$ and that this solution is the 
only one, respectively.

\sloppy
Because conditional equational formulas must be dealt with in \pACPet, 
\depACPet, and \depACPetr, it is understood that conditional equational 
logic is used in deriving equations from the axioms of these algebraic
theories.
A complete inference system for conditional equational logic can for
example be found in~\cite{BW90,Gog21a}.

\begin{sexa}
\label{example-rec}
The following closed \depACPetr\ term of sort $\Proc$ denotes a process
with probabilistic behaviour ($v \in \ProgVar_1$):
\begin{ldispl}
\rec{X}
 {\set{X = \True \gc \ass{v}{0} \seqc Y,\, 
       Y = \True \gc \ass{v}{v+1} \seqc Y \paltc{1/2} \True \gc \ep}}\;.
\end{ldispl}%
The process denoted by this term behaves such that, for each 
$n \in \Nat$, the probability that $n$ is the final value of $v$ is 
$1/2^{n+1}$.
\end{sexa}

\begin{snot}
We often write $X$ for $\rec{X}{E}$, where $\rec{X}{E} \in \ProcTermr$, 
if $E$ is clear from the context. 
\end{snot}

Let $E$ be a guarded linear recursive specification over \depACPet.
Then there may exist a recursive specification $E'$ over \depACPet\ that 
is not a guarded linear recursive specification such that $E'$ has a 
unique solution and, for all $X \in \vars(E')$, the $X$-component of the 
unique solution of $E'$ is the same as the $X$-component of the unique 
solution of $E$.
In Section~\ref{sect-example}, examples of such recursive specifications
are given.

\begin{scnv}
Let $E$ be a recursive specification over \depACPet\ that has a unique
solution, and let $X \in \vars(E)$.
Then we sloppily say ``$X$ is defined by the recursive specification 
$E$'' instead of ``$X$ is used as a constant standing for the 
$X$-component of the unique solution of $E$''.
\end{scnv}

\section{Branching Bisimulation Semantics}
\label{sect-semantics}

In this section, a structural operational semantics of \depACPetr\ is 
presented and two notions of branching bisimulation equivalence for 
\depACPetr\ based on this structural operational semantics is defined.

The structural operational semantics of \depACPetr\ consists of
\begin{itemize}
\item 
a binary \emph{conditional transition} relation \smash{$\step{\ell}$} on 
$\ProcTermr$ for each $\ell \in \EvalMap \x \AProcTermt$;
\item 
a unary \emph{successful termination} relation $\sterm{\sigma}$ on 
$\ProcTermr$ for each $\sigma \in \EvalMap$;
\item 
a binary \emph{probability} relation $\pstep{}{\ell}{}$ on 
$\ProcTermr$ for each $\ell \in \EvalMap \x \Prob$.
\end{itemize}
\begin{snot}
We write 
\smash{$\astep{t}{\gact{\sigma}{\alpha}}{t'}$} instead of 
\smash{$\tup{t,t'} \in {\step{\tup{\sigma,\alpha}}}$},\;
$\isterm{t}{\sigma}$ instead of $t \in {\sterm{\sigma}}$,\, 
and
\smash{$\pstep{t}{\gprob{\sigma}{\pi}}{t'}$} instead of 
\smash{$\tup{t,t'} \in {\pstep{}{\tup{\sigma,\pi}}{}}$}.
\end{snot}

The relations from the structural operational semantics describe what 
the processes denoted by terms from $\ProcTermr$ are capable of doing as 
follows:
\begin{itemize}
\item
$\astep{t}{\gact{\sigma}{\alpha}}{t'}$: 
if the data values assigned to the flexible variables are as defined by 
$\sigma$, then $t$ has the potential to make a transition to $t'$ by 
performing action~$\alpha$;
\item
$\isterm{t}{\sigma}$: 
if the data values assigned to the flexible variables are as defined  
by~$\sigma$, then $t$ has the potential to terminate successfully;
\item
\smash{$\pstep{t}{\gprob{\sigma}{\pi}}{t'}$}:
if the data values assigned to the flexible variables are as defined by 
$\sigma$, then $t$ has the potential to behave as $t'$ with probability 
$\pi$.
\end{itemize}

\begin{sdef}
The relations belonging to the \emph{structural operational semantics} 
of \depACPetr\ are defined by means of the rules given in 
Tables~\ref{sos-depACPet-1} and~\ref{sos-depACPet-2}.%
\begin{table}[!p]
\caption{Transition rules for \depACPet\ (part~1)}
\label{sos-depACPet-1}
\begin{ruletbl}
{} \\[-3ex]
\Rule
{}
{\astep{\alpha}{\gact{\sigma}{\alpha}}{\ep}}
\\[-2ex]
\Rule
{\phantom{\isterm{\ep}{\sigma}}}
{\isterm{\ep}{\sigma}}
\\
\Rule
{\astep{x}{\gact{\sigma}{\alpha}}{x'},\; \pstep{y}{\gprob{\sigma}{1}}{y'}}
{\astep{x \altc y}{\gact{\sigma}{\alpha}}{x'}}
\quad\;
\Rule
{\pstep{x}{\gprob{\sigma}{1}}{x'},\; \astep{y}{\gact{\sigma}{\alpha}}{y'}}
{\astep{x \altc y}{\gact{\sigma}{\alpha}}{y'}}
\quad\;
\Rule
{\isterm{x}{\sigma},\; \pstep{y}{\gprob{\sigma}{1}}{y'}}
{\isterm{x \altc y}{\sigma}}
\quad\;
\Rule
{\pstep{x}{\gprob{\sigma}{1}}{x'},\; \isterm{y}{\sigma}}
{\isterm{x \altc y}{\sigma}}
\\
\Rule
{\astep{x}{\gact{\sigma}{\alpha}}{x'}}
{\astep{x \seqc y}{\gact{\sigma}{\alpha}}{x' \seqc y}}
\quad\;
\Rule
{\isterm{x}{\sigma},\; \astep{y}{\gact{\sigma}{\alpha}}{y'}}
{\astep{x \seqc y}{\gact{\sigma}{\alpha}}{y'}}
\quad\;
\Rule
{\isterm{x}{\sigma},\; \isterm{y}{\sigma}}
{\isterm{x \seqc y}{\sigma}}
\\
\Rule
{\astep{x}{\gact{\sigma}{\alpha}}{x'},\; \pstep{y}{\gprob{\sigma}{1}}{y'}}
{\astep{x \parc y}{\gact{\sigma}{\alpha}}{x' \parc y}}
\quad\;
\Rule
{\pstep{x}{\gprob{\sigma}{1}}{x'},\; \astep{y}{\gact{\sigma}{\alpha}}{y'}}
{\astep{x \parc y}{\gact{\sigma}{\alpha}}{x \parc y'}}
\quad\;
\RuleC
{\astep{x}{\gact{\sigma}{a}}{x'},\; \astep{y}{\gact{\sigma}{b}}{y'}}
{\astep{x \parc y}{\gact{\sigma}{c}}{x' \parc y'}}
{\commf(a,b) = c}
\\
\RuleC
{\astep{x}{\gact{\sigma}{a(e)}}{x'},\; 
 \astep{y}{\gact{\sigma}{b(e')}}{y'}}
{\astep{x \parc y}{\gact{\sigma}{c(e)}}{x' \parc y'}}
{\commf(a,b) = c,\; \Sat{\gD}{\sigma(\fol{e = e'})}}
\\
\Rule
{\isterm{x}{\sigma},\; \isterm{y}{\sigma}}
{\isterm{x \parc y}{\sigma}}
\\
\Rule
{\astep{x}{\gact{\sigma}{\alpha}}{x'}}
{\astep{x \leftm y}{\gact{\sigma}{\alpha}}{x' \parc y}}
\\
\RuleC
{\astep{x}{\gact{\sigma}{a}}{x'},\; \astep{y}{\gact{\sigma}{b}}{y'}}
{\astep{x \commm y}{\gact{\sigma}{c}}{x' \parc y'}}
{\commf(a,b) = c}
\\
\RuleC
{\astep{x}{\gact{\sigma}{a(e)}}{x'},\; 
 \astep{y}{\gact{\sigma}{b(e')}}{y'}}
{\astep{x \commm y}
  {\gact{\sigma}{c(e)}}{x' \parc y'}}
{\commf(a,b) = c,\; \Sat{\gD}{\sigma(\fol{e = e'})}}
\\
\Rule
{\isterm{x}{\sigma}}
{\isterm{\encapa(x)}{\sigma}}
\\
\RuleC
{\astep{x}{\gact{\sigma}{\alpha}}{x'}}
{\astep{\encap{H}(x)}{\gact{\sigma}{\alpha}}{\encap{H}(x')}}
{\alpha \notin H}
\quad\;
\Rule
{\isterm{x}{\sigma}}
{\isterm{\encap{H}(x)}{\sigma}}
\\
\RuleC
{\astep{x}{\gact{\sigma}{\alpha}}{x'}}
{\astep{\abstr{I}(x)}{\gact{\sigma}{\alpha}}{\abstr{I}(x')}}
{\alpha \notin I}
\quad\;
\RuleC
{\astep{x}{\gact{\sigma}{\alpha}}{x'}}
{\astep{\abstr{I}(x)}{\gact{\sigma}{\tau}}{\abstr{I}(x')}}
{\alpha \in I}
\quad\;
\Rule
{\isterm{x}{\sigma}}
{\isterm{\abstr{I}(x)}{\sigma}}
\\
\RuleC
{\astep{x}{\gact{\sigma}{\alpha}}{x'}}
{\astep{\phi \gc x}{\gact{\sigma}{\alpha}}{x'}}
{\Sat{\gD}{\sigma(\fol{\phi})}}
\quad\;
\RuleC
{\isterm{x}{\sigma}}
{\isterm{\phi \gc x}{\sigma}}
{\Sat{\gD}{\sigma(\fol{\phi})}}
\\
\Rule
{\astep{x}{\gact{\sigma}{\tau}}{x'}}
{\astep{\eval{\sigma}(x)}{\gact{\sigma'}{\tau}}{\eval{\sigma}(x')}}
\quad\;
\Rule
{\astep{x}{\gact{\sigma}{a}}{x'}}
{\astep{\eval{\sigma}(x)}{\gact{\sigma'}{a}}{\eval{\sigma}(x')}}
\quad\;
\Rule
{\astep{x}{\gact{\sigma}{a(e)}}{x'}}
{\astep{\eval{\sigma}(x)}{\gact{\sigma'}{a(\sigma(e))}}
       {\eval{\sigma}(x')}}
\\
\Rule
{\astep{x}{\gact{\sigma}{\ass{v}{e}}}{x'}}
{\astep{\eval{\sigma}(x)}{\gact{\sigma'}{\ass{v}{\sigma(e)}}}
       {\eval{\sigma\mapupd{\sigma(e)}{v}}(x')}}
\quad\;
\Rule
{\isterm{x}{\sigma}}
{\isterm{\eval{\sigma}(x)}{\sigma'}}
\\
\RuleC
{\astep{\rec{t}{E}}{\gact{\sigma}{\alpha}}{x'}}
{\astep{\rec{X}{E}}{\gact{\sigma}{\alpha}}{x'}}
{X \!\!=\! t \,\in\, E}
\quad\;
\RuleC
{\isterm{\rec{t}{E}}{\sigma}}
{\isterm{\rec{X}{E}}{\sigma}}
{X \!\!=\! t \,\in\, E}
\vspace*{1ex}
\end{ruletbl}
\end{table}
\begin{table}[!t]
\caption{Transition rules for \depACPet\ (part~2)}
\label{sos-depACPet-2}
\begin{ruletbl}
{} \\[-3ex]
\Rule
{}
{\pstep{a}{\gprob{\sigma}{1}}{a}}
\qquad
\Rule
{}
{\pstep{\dead}{\gprob{\sigma}{1}}{\dead}}
\qquad
\Rule
{}
{\pstep{\ep}{\gprob{\sigma}{1}}{\ep}}
\\
\Rule
{\pstep{x}{\gprob{\sigma}{\pi}}{x'},\;
 \pstep{y}{\gprob{\sigma}{\rho}}{y'}}
{\pstep{x \altc y}{\gprob{\sigma}{\pi \mul \rho}}{x' \altc y'}}
\\
\Rule
{\pstep{x}{\gprob{\sigma}{\pi}}{x'},\; \pstep{y}{\gprob{\sigma}{1}}{y'}}
{\pstep{x \seqc y}{\gprob{\sigma}{\pi}}{x' \seqc y}}
\qquad
\Rule
{\pstep{x}{\gprob{\sigma}{\pi}}{x'},\; \nisterm{x'}{\sigma}}
{\pstep{x \seqc y}{\gprob{\sigma}{\pi}}{x' \seqc y}}
\qquad
\Rule
{\pstep{x}{\gprob{\sigma}{\pi}}{x'},\; \isterm{x'}{\sigma},\; \pstep{y}{\gprob{\sigma}{\rho}}{y'}}
{\pstep{x \seqc y}{\gprob{\sigma}{\pi \mul \rho}}{x' \seqc y'}}
\\
\Rule
{\pstep{x}{\gprob{\sigma}{\pi}}{x'},\; \pstep{y}{\gprob{\sigma}{\rho}}{y'}}
{\pstep{x \parc y}{\gprob{\sigma}{\pi \mul \rho}}{x' \parc y'}}
\qquad
\Rule
{\pstep{x}{\gprob{\sigma}{\pi}}{x'},\; \pstep{y}{\gprob{\sigma}{\rho}}{y'}}
{\pstep{x \leftm y}{\gprob{\sigma}{\pi \mul \rho}}{x' \leftm y'}}
\qquad
\Rule
{\pstep{x}{\gprob{\sigma}{\pi}}{x'},\; \pstep{y}{\gprob{\sigma}{\rho}}{y'}}
{\pstep{x \commm y}{\gprob{\sigma}{\pi \mul \rho}}{x' \commm y'}}
\\
\Rule
{\pstep{x}{\gprob{\sigma}{\pi}}{x'}}
{\pstep{\encapa(x)}{\gprob{\sigma}{\pi}}{\encapa(x')}}
\qquad
\Rule
{\pstep{x}{\gprob{\sigma}{\pi}}{x'}}
{\pstep{\encap{H}(x)}{\gprob{\sigma}{\pi}}{\encap{H}(x')}}
\qquad
\Rule
{\pstep{x}{\gprob{\sigma}{\pi}}{x'}}
{\pstep{\abstr{I}(x)}{\gprob{\sigma}{\pi}}{\abstr{I}(x')}}
\\
\Rule
{\pstep{x}{\gprob{\sigma}{\rho}}{z},\; \pstep{y}{\gprob{\sigma}{\rho'}}{z}}
{\pstep{x \paltc{\pi} y}{\gprob{\sigma}{\pi \mul \rho + (1-\pi) \mul \rho'}}{z}}
\\
\RuleC
{\pstep{x}{\gprob{\sigma}{\pi}}{x'}}
{\pstep{\phi \gc x}{\gprob{\sigma}{\pi}}{\phi \gc x'}}
{\Sat{\gD}{\sigma(\fol{\phi})}}
\qquad
\RuleC
{\pstep{x}{\gprob{\sigma}{\pi}}{x'}}
{\pstep{\phi \gc x}{\gprob{\sigma}{1}}{\phi \gc x}}
{\nSat{\gD}{\sigma(\fol{\phi})}}
\\
\Rule
{\pstep{x}{\gprob{\sigma}{\pi}}{x'}}
{\pstep{\eval{\sigma}(x)}{\gprob{\sigma'}{\pi}}{\eval{\sigma}(x')}}
\\
\RuleC
{\pstep{\srec{t}{E}}{\gprob{\sigma}{\pi}}{z}}
{\pstep{\rec{X}{E}}{\gprob{\sigma}{\pi}}{z}}
{X = t\; \in \;E}
\\
\Rule
{\npstep{x}{\gprob{\sigma}{\pi}}{x'}
 \;\text{for all}\; \pi \in \Prob \diff \set{0}}
{\pstep{x}{\gprob{\sigma}{0}}{x'}}
\vspace*{1ex}
\end{ruletbl}
\end{table}
In these tables, 
$\sigma$ and $\sigma'$ stand for arbitrary evaluation maps from 
$\EvalMap$,\,
$\alpha$ stands for an arbitrary action from~$\AProcTermt$,\,
$a,b$, and $c$ stand for arbitrary actions from~$\Act$,\,
$e$ and $e'$ stand for arbitrary terms from one of the sets 
$\DataTerm_1$, \ldots, $\DataTerm_n$,\,
$H$ and $I$ stands for arbitrary subsets of~$\AProcTerm$,\,
$\phi$~stands for an arbitrary term from $\CondTerm$,\,
$v$ stands for an arbitrary flexible variable from one of the sets 
$\ProgVar_1$, \ldots, $\ProgVar_n$,\,
$X$ stands for an arbitrary variable from $\cX$,\, 
$t$ stands for an arbitrary \depACPet\ term of sort $\Proc$, 
$E$ stands for an arbitrary guarded linear recursive specification over 
\depACPet, and
$\pi$, $\rho$, and $\rho'$ stand for arbitrary probabilities from 
$\Prob$.
Here, ``arbitrary'' must be interpreted as arbitrary to such an extent 
that it does not lead to syntactically incorrect terms.
It should be clear that $\nisterm{t}{\sigma}$ stands for the negation of 
$\isterm{t}{\sigma}$ and \smash{$\npstep{t}{\gprob{\sigma}{\pi}}{t'}$} 
stands for the negation of \smash{$\pstep{t}{\gprob{\sigma}{\pi}}{t'}$}.
\end{sdef}

Below, a premise of the form $\nisterm{t}{\sigma}$ or 
\smash{$\npstep{t}{\gprob{\sigma}{\pi}}{t'}$} is called a negative 
premise,
$\isterm{t}{\sigma}$ is called the denial of $\nisterm{t}{\sigma}$, and
\smash{$\pstep{t}{\gprob{\sigma}{\pi}}{t'}$} is called the denial of
\smash{$\npstep{t}{\gprob{\sigma}{\pi}}{t'}$}.

Because of the presence of negative premises, some explanation is needed
about how the relations from the structural operational semantics of 
\depACPetr\ are defined by means of the rules given in 
Tables~\ref{sos-depACPet-1} and~\ref{sos-depACPet-2}. 
Because these rules constitute a well-supported complete transition 
system specification (see e.g.~\cite{Gla04a}),
\smash{$\astep{t}{\gact{\sigma}{\alpha}}{t'}$}, $\isterm{t}{\sigma}$ or 
\smash{$\pstep{t}{\gprob{\sigma}{\pi}}{t'}$}, holds iff it is provable 
from the rules given in Tables~\ref{sos-depACPet-1} 
and~\ref{sos-depACPet-2}, where $\nisterm{t}{\sigma}$ or 
\smash{$\npstep{t}{\gprob{\sigma}{\pi}}{t'}$} is considered provable if 
every conceivable proof of its denial involves a negative premise of 
which the denial has already been proved. 

We could have excluded the relations 
\smash{$\pstep{}{\gprob{\sigma}{0}}{}$} and by that have obviated the 
need for the last rule in Table~\ref{sos-depACPet-2}.
In that case, however, 11 additional rules concerning the relations 
\smash{$\pstep{}{\gprob{\sigma}{\pi}}{}$}, all with negative premises, 
would be needed instead.

Notice that, if $t$ is not derivably equal to a term whose outermost 
operator is a probabilistic choice operator, then $t$ can only behave as 
itself and consequently we have that 
\smash{$\pstep{t}{\gprob{\sigma}{1}}{t}$} and 
\smash{$\pstep{t}{\gprob{\sigma}{0}}{t'}$} for each term $t'$ other than 
$t$.

The next two propositions express properties of the relations 
\smash{$\pstep{}{\gprob{\sigma}{\pi}}{}$}.
\begin{proposition}
\label{prop-pstep-1}
For all $\sigma \in \EvalMap$ and $t,t' \in \ProcTermr$, 
$\pstep{t}{\gprob{\sigma}{1}}{t'}$ only if $t \equiv t'$.
\end{proposition}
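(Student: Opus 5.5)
The plan is to argue by induction on the derivation of $\pstep{t}{\gprob{\sigma}{1}}{t'}$ from the rules of Table~\ref{sos-depACPet-2}, with a case distinction on the last rule applied. Since these rules form a well-supported complete transition system specification, every true instance of a probability relation has such a derivation. Negative premises are then read as described in the paper: the denial of the premise is not provable. This makes the induction well founded.

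The base cases are the axioms for $a$, $\dead$ and $\ep$, which give $t' \equiv t$ immediately. The last rule, with conclusion label $\gprob{\sigma}{0}$, never produces probability $1$. The meadow axioms guarantee $0 \neq 1$ in a signed cancellation meadow, so this rule is not applicable.

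The structural rules for $\altc$, $\seqc$, $\parc$, $\leftm$, $\commm$, $\encapa$, $\encap{H}$, $\abstr{I}$, $\phi \gc \_$ and $\eval{\sigma}$ are handled by an arithmetic lemma on $\Prob$. For $\pi,\rho \in \Prob$, I will show that $\pi \mul \rho = 1$ implies $\pi = \rho = 1$. This uses $0 \leq \pi,\rho \leq 1$ and the order properties of signed cancellation meadows as axiomatized in Tables~\ref{eqns-meadow} and~\ref{eqns-signum}. With the lemma, every premise of such a rule carries probability $1$, so the induction hypothesis gives $x' \equiv x$ (and $y' \equiv y$). The target of the conclusion is then syntactically the source.
\begin{itemize}
\item For $\seqc$, all three rules are treated this way: with $\pi \mul \rho = 1$ in the third rule, and with $\pi = 1$ directly in the first two.
\item For the guard rule with $\nSat{\gD}{\sigma(\fol{\phi})}$, the target is $\phi \gc x$ itself, so nothing needs to be shown.
\item The rule for $\rec{X}{E}$ needs care. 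Its premise concerns $\srec{t}{E}$ rather than $\rec{X}{E}$, so the induction hypothesis yields $z \equiv \srec{t}{E}$, not $z \equiv \rec{X}{E}$. Here I would have to appeal to linearity of $t$ and guardedness of $E$. The intended reading is presumably that $\rec{X}{E}$ is identified with its unfolding, or that $\equiv$ is taken modulo RDP.
\end{itemize}

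The main obstacle is the rule for $x \paltc{\pi} y$, where the source has a probabilistic choice operator outermost and the target $z$ is an arbitrary term. Here $\pi \mul \rho + (1-\pi) \mul \rho' = 1$ with all three values in $\Prob$. The arithmetic forces $\rho = 1$ whenever $\pi \neq 0$, and $\rho' = 1$ whenever $\pi \neq 1$. The induction hypothesis then gives $z \equiv x$ and/or $z \equiv y$, but never $z \equiv x \paltc{\pi} y$.

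In fact the instance $a \paltc{1} b$ has $\pstep{a \paltc{1} b}{\gprob{\sigma}{1}}{a}$. The premises are $\pstep{a}{\gprob{\sigma}{1}}{a}$ and $\pstep{b}{\gprob{\sigma}{0}}{a}$, the latter by the last rule. So with $\equiv$ read as syntactic identity, the statement fails in this case, and the proposition cannot be established as literally stated.

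To proceed, I would first check whether $\equiv$ is meant modulo the axioms pA3 and pA6, or whether the proposition is intended only for $t$ not derivably equal to a term with outermost $\paltc{\pi}$, as in the remark preceding it. Under that reading the case becomes: $z \equiv x$ (and $z \equiv y$ when $\pi \notin \set{0,1}$), and $x \paltc{\pi} y = z$ follows by pA3 and pA6 together with pA1. That is the version of the claim I would actually prove.
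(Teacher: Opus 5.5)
Your diagnosis is correct, and the error is in the paper, not in your plan. With $\equiv$ read as syntactic identity, the proposition is false. The paper's one-line argument (``induction on the structure of $t$'') breaks at the rule for $\paltc{\pi}$, and, as explained below, also at the rule for recursion constants. Your instance $\pstep{a \paltc{1} b}{\gprob{\sigma}{1}}{a}$ is a valid counterexample, and so is $\pstep{a \paltc{1/2} a}{\gprob{\sigma}{1}}{a}$.

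Two of your hedges should become firm statements. First, linearity and guardedness cannot rescue the recursion case, which is an independent counterexample. $E = \set{X = \dead}$ is a guarded linear recursive specification, and the rule for $\rec{X}{E}$ gives $\pstep{\rec{X}{E}}{\gprob{\sigma}{1}}{\dead}$. Structural induction on $t$ cannot even reach $\srec{t}{E}$, so your choice of induction on derivations is the right one. Second, the failure is not confined to terms whose outermost operator is a probabilistic choice; it propagates through contexts. For example, $\pstep{(a \paltc{1} b) \altc c}{\gprob{\sigma}{1}}{a \altc c}$. So your structural cases, which conclude that the target is syntactically the source, are also wrong as written. The restricted reading taken from the remark before the proposition does not help either. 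By pA3 every $t$ is derivably equal to $t \paltc{\pi} t$, so that hypothesis is never satisfied and the proposition becomes vacuous.

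The repair that works is the one your last paragraph points to, but it must be applied uniformly. Prove by induction on the derivation that $\pstep{t}{\gprob{\sigma}{1}}{t'}$ implies that $t = t'$ is derivable from the axioms of \depACPetr\ (and hence, by the soundness theorem, $t \rbbisim t'$). Use this weaker conclusion as the induction hypothesis as well. Then the cases go through as follows:
\begin{itemize}
\item the structural cases, by your product lemma and the fact that derivable equality is a congruence;
\item the $\paltc{\pi}$ case, by pA1, pA3 and pA6, as you describe;
\item the recursion case, by RDP.
\end{itemize}

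One further caveat concerns $0 \neq 1$, which you need both to exclude the last rule and for your lemma that $\pi \mul \rho = 1$ forces $\pi = \rho = 1$. The paper's definition of a signed cancellation meadow omits the separation axiom, so $0 \neq 1$ must be assumed explicitly for $\fM$. For the trivial meadow, $\Prob \diff \set{0}$ is empty, and the last rule then yields a probability-$1$ step between any two terms.
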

\begin{proof}
This is easy to prove by induction on the structure of $t$.
\qed
\end{proof}
\begin{proposition}
\label{prop-pstep-2}
For all $\sigma \in \EvalMap$ and $t,t' \in \ProcTermr$, there exists a 
$\pi \in \Prob$ such that $\pstep{t}{\gprob{\sigma}{\pi}}{t'}$.
\end{proposition}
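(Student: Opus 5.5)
The statement follows almost immediately from the last rule of Table~\ref{sos-depACPet-2}, so I would argue by a simple case split rather than by induction. Fix $\sigma \in \EvalMap$ and $t,t' \in \ProcTermr$. Either there is some $\pi \in \Prob \diff \set{0}$ with $\pstep{t}{\gprob{\sigma}{\pi}}{t'}$, or there is none.

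In the first case we are done at once: that $\pi$ is the required probability.

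In the second case, $\npstep{t}{\gprob{\sigma}{\pi}}{t'}$ holds for all $\pi \in \Prob \diff \set{0}$. This is exactly the premise of the last rule in Table~\ref{sos-depACPet-2}, whose conclusion is $\pstep{t}{\gprob{\sigma}{0}}{t'}$. So $\pi = 0$ works. Note that $0 \in \Prob$, since $\sign(\sign(0)+1)\mul\sign(\sign(1-0)+1) = \sign(1)\mul\sign(2) = 1$ in a signed cancellation meadow.

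The one point needing care is that the second case relies on a rule with negative premises. That is where the main, though modest, obstacle lies. We must check that, under the stated meaning of the transition system specification, the conclusion of this rule really holds whenever the positive relations $\pstep{t}{\gprob{\sigma}{\pi}}{t'}$ with $\pi \neq 0$ all fail. The paper states that the rules form a well-supported complete transition system specification. In such a specification a literal holds iff it is provable, and a negative literal counts as provable iff every conceivable proof of its denial uses a negative premise whose denial has already been proved. Completeness gives that for every literal either it or its denial holds in the unique model.

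Consequently, if no $\pstep{t}{\gprob{\sigma}{\pi}}{t'}$ with $\pi \neq 0$ holds, each negation $\npstep{t}{\gprob{\sigma}{\pi}}{t'}$ holds in that model. The rule then fires and yields $\pstep{t}{\gprob{\sigma}{0}}{t'}$. I would cite this property of well-supported complete specifications (cf.~\cite{Gla04a}) rather than reprove it. With that in hand the proof needs no induction on $t$, unlike Proposition~\ref{prop-pstep-1}.
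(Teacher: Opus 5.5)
Your argument is correct, but it is not the paper's route. The paper only says the result is easy by induction on the structure of $t$. You instead observe that the last rule of Table~\ref{sos-depACPet-2} is a catch-all that makes the statement true by design, so no induction is needed.

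The argument runs as follows. Either some $\pi \neq 0$ is derivable, or every literal $\pstep{t}{\gprob{\sigma}{\pi}}{t'}$ with $\pi \neq 0$ fails. In the second case, the well-supported completeness asserted in the paper guarantees that the denials of these literals are provable. Provability, not mere truth in the model, is what is needed to use them as premises, and you invoke exactly this. The rule then gives $\pi = 0$, and your check that $0 \in \Prob$ closes the argument.

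What your route buys is brevity and uniformity. There is no case analysis over the operators. There is also no difficulty with the constants $\rec{X}{E}$, for which structural induction is not directly available: their rule refers to the unfolding $\rec{t}{E}$, which is not a subterm.

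The cost is that the whole proof rests on the completeness claim. The paper states it without proof, and without it the catch-all rule need not fire.

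An operator-by-operator induction, as the paper indicates, would instead exhibit concrete derivations. For instance, for $t_1 \paltc{\pi} t_2$ it combines the $\rho$ and $\rho'$ supplied by the induction hypothesis into $\pi \mul \rho + (1-\pi) \mul \rho'$. This is closer to the computations behind Proposition~\ref{prop-pdf} and the $P_\sigma$ identities in the congruence proof. However, whenever such an induction falls back on the last rule, it must also justify the negative premises, so it does not really avoid the completeness issue either.
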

\begin{proof}
This is easy to prove by induction on the structure of $t$.
\qed
\end{proof}

The probability relations give rise to probability distribution 
functions.
\begin{sdef}
For each $\sigma \in \EvalMap$, we define a probability distribution 
function $P_\sigma$ from $\ProcTermr \x \ProcTermr$ to $\Prob$ as 
follows:
\begin{ldispl}
P_\sigma(t,t') = 
{\displaystyle \sum_{\pi \in \Pi_\sigma(t,t')}} \!\pi \;,
\quad\mathrm{where}\;\; 
\Pi_\sigma(t,t') = 
\set{\pi \where \pstep{t}{\gprob{\sigma}{\pi}}{t'}}\;.
\end{ldispl}%
\end{sdef}
The function $P_\sigma$ can be explained as follows:
$P_\sigma(t,t')$ is the total probability that $t$ will behave as $t'$ 
if the data values assigned to the flexible variables are as defined by 
$\sigma$.
\begin{snot}
We write $P_\sigma(t,T)$, where $\sigma \in \EvalMap$, 
$t \in \ProcTermr$, and $T \subseteq \ProcTermr$, 
for $\sum_{t' \in T} P_\sigma(t,t')$.
\end{snot}

The well-definedness of $P_\sigma$ is a corollary of 
Proposition~\ref{prop-pstep-2}.
\begin{corollary}
\label{corol-well-def-pdf}
For all $\sigma \in \EvalMap$ and $t,t' \in \ProcTermr$, there exists 
a unique $\pi \in \Prob$ such that $P_\sigma(t,t') = \pi$.
\end{corollary}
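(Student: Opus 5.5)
The plan is to show that the sum defining $P_\sigma(t,t')$ is a well-defined element of $\Prob$. That means three things: $\Pi_\sigma(t,t')$ is nonempty, the sum over it makes sense as a finite sum in $\fM$, and its value lies in $\Prob$. Uniqueness of $\pi$ is then immediate, because $P_\sigma(t,t')$ is a single element of $\fM$ once the sum is defined.

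\textbf{Nonemptiness.} Proposition~\ref{prop-pstep-2} gives at least one $\pi \in \Prob$ with $\pstep{t}{\gprob{\sigma}{\pi}}{t'}$. Hence $\Pi_\sigma(t,t') \neq \emptyset$ and the sum is not vacuous.

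\textbf{Finiteness.} I would prove, by induction on the structure of $t$ (and for constants $\rec{X}{E}$ by the guardedness of $E$ via unfolding), the following statement: for each $t$ only finitely many pairs $(\pi,t'')$ satisfy $\pstep{t}{\gprob{\sigma}{\pi}}{t''}$ with $\pi \neq 0$. Each rule for the probability relations combines finitely many premises. The zero case adds only the single value $0$, which contributes nothing to a sum. So $\Pi_\sigma(t,t')$ is finite up to the harmless element $0$, and the sum is a finite sum in $\fM$.

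\textbf{Membership in $\Prob$.} The key step is to strengthen the induction to the claim that $\sum_{t''} P_\sigma(t,t'') = 1$ with all summands in $\Prob$. Then each single $P_\sigma(t,t')$ is a nonnegative partial sum bounded by $1$, hence in $\Prob$. The induction checks each rule.
\begin{itemize}
\item The constant rules give $1$.
\item For $\altc$, $\parc$, $\leftm$, $\commm$ and the left-operand rule for $\seqc$, products of distributions again sum to $1$.
\item The unary operators preserve the distribution.
\item For $x \paltc{\pi} y$, the convex combination $\pi\mul\rho+(1-\pi)\mul\rho'$ sums to $\pi + (1-\pi) = 1$ and stays in $[0,1]$, using the ordering properties of signed cancellation meadows.
\end{itemize}

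The main obstacle I expect is exactly this probability-mass bookkeeping. First, when the same target $z$ arises from several derivations, the totals must be accounted for by summing in $P_\sigma$ rather than being double counted. Second, the two $\seqc$ rules split according to whether $\isterm{x'}{\sigma}$ holds. Third, the rules for $\phi \gc x$ with $\phi$ false collapse to the target $\phi\gc x$ with probability $1$. For all of these I would rely on Proposition~\ref{prop-pstep-1} and on the meadow identities, in particular that $0\minv = 0$, so the degenerate cases $\pi \in \set{0,1}$ need no special treatment.
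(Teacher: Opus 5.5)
Your nonemptiness step is the entire argument in the paper, which presents the corollary as an immediate consequence of Proposition~\ref{prop-pstep-2}. You are right that nonemptiness alone gives neither finiteness nor membership of the sum in $\Prob$. However, your route to membership has a genuine gap. $P_\sigma(t,t')$ adds up the distinct elements of the \emph{set} $\Pi_\sigma(t,t')$: equal labels from different derivations merge, and different labels add. So the product and convex-combination identities your total-mass induction needs for $\altc$, $\seqc$, $\parc$ and $\paltc{\pi}$ hold only if every $\Pi_\sigma(t,t')$ is a singleton. Even a stray label $0$ next to a nonzero label breaks them, because for a common target the rule for $\paltc{\pi}$ pairs every label of $x$ with every label of $y$. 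An induction hypothesis stated only in terms of the values $P_\sigma(t,t'')$ cannot supply this. You would have to prove functionality of the probability relations explicitly. Once you have it, the total-mass detour and the finiteness argument are unnecessary: Proposition~\ref{prop-pstep-2} plus functionality make $\Pi_\sigma(t,t')$ a singleton, and its element lies in $\Prob$ because all labels do.

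The deeper problem is that, with the rules exactly as in Table~\ref{sos-depACPet-2}, functionality fails. So do Proposition~\ref{prop-pstep-1} (which you planned to rely on), your total-mass claim, and the corollary itself.
\begin{itemize}
\item The rule for $\paltc{\pi}$ gives $\pstep{a \paltc{1/2} a}{\gprob{\sigma}{1}}{a}$, which already contradicts Proposition~\ref{prop-pstep-1}.
\item Let $u$ be $\ep \seqc (a \paltc{1/2} a)$. The first rule for $\seqc$ yields $\pstep{u}{\gprob{\sigma}{1}}{u}$.
\item The third rule for $\seqc$, using $\isterm{\ep}{\sigma}$, yields $\pstep{u}{\gprob{\sigma}{1}}{\ep \seqc a}$. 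Using $\pstep{a \paltc{1/2} a}{\gprob{\sigma}{0}}{a \paltc{1/2} a}$, it also yields $\pstep{u}{\gprob{\sigma}{0}}{u}$.
\item Hence $\Pi_\sigma(u,u) = \set{1,0}$, and $u$ has total mass $2$.
\item For $t$ equal to $u \paltc{1/2} u$, the rule for $\paltc{1/2}$ gives $\Pi_\sigma(t,u) = \set{1,1/2,0}$. Therefore $P_\sigma(t,u) = 3/2 \notin \Prob$.
\end{itemize}
The culprit is that the first and third rules for $\seqc$ can both fire for the same $x'$ when $\isterm{x'}{\sigma}$. No proof can succeed until the semantics is repaired there, or $P_\sigma$ is redefined. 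After such a repair, the lemma to prove by structural induction is functionality, as described above.
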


$P_\sigma$ is a probability distribution function indeed.
\begin{proposition}
\label{prop-pdf}
For all $\sigma \in \EvalMap$ and $t \in \ProcTermr$, 
$P_\sigma(t,\ProcTermr) = 1$.
\end{proposition}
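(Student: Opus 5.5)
We prove the statement by induction on the structure of $t$, with a separate treatment of the constants $\rec{X}{E}$. For each form of $t$ we determine which $t'$ satisfy $\pstep{t}{\gprob{\sigma}{\pi}}{t'}$ with $\pi \neq 0$, and sum these probabilities. By Corollary~\ref{corol-well-def-pdf}, $P_\sigma(t,t')$ is well defined. The last rule of Table~\ref{sos-depACPet-2} only adds transitions with label $0$, which contribute nothing to the sum. So only the finitely many targets reachable with a nonzero probability matter; finiteness of this set is checked along the way.

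\emph{Base cases.} For $t$ one of $a$, $\tau$, $\dead$, $\ep$, $\alpha \in \AProcTerm$ or $\ass{v}{e}$, the only nonzero transition is $\pstep{t}{\gprob{\sigma}{1}}{t}$. Hence $P_\sigma(t,\ProcTermr) = 1$.

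\emph{Unary operators.} For $\encapa$, $\encap{H}$, $\abstr{I}$ and $\eval{\sigma'}$, the rules map each transition $\pstep{x}{\gprob{\sigma}{\pi}}{x'}$ injectively to a transition of $O(x)$ to $O(x')$ with the same probability, so the sum is preserved. For $\phi \gc x$ we split on whether $\Sat{\gD}{\sigma(\fol{\phi})}$ holds. If it holds, the same injective argument applies. If it fails, the only nonzero target is $\phi \gc x$ with probability $1$.

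\emph{Binary operators.} For $\altc$, $\parc$, $\leftm$ and $\commm$, the nonzero transitions of $x \circ y$ correspond bijectively to pairs of nonzero transitions of $x$ and $y$, with the probabilities multiplied. Therefore
\[
P_\sigma(x \circ y,\ProcTermr) = P_\sigma(x,\ProcTermr) \mul P_\sigma(y,\ProcTermr) = 1,
\]
by distributivity in $\fM$. One checks that $x' \circ y'$ determines $x'$ and $y'$, so no collapsing of targets occurs. For $x \seqc y$ we split the nonzero targets $x'$ of $x$ into two kinds. Those with $\nisterm{x'}{\sigma}$ give weight $\pi$. Those with $\isterm{x'}{\sigma}$ give weight $\pi \mul \rho$, summed over the targets $y'$ of $y$; this sum is $\pi$ by the induction hypothesis for $y$. (The first $\seqc$ rule, with $y$ stepping at probability $1$, is subsumed by these cases via Proposition~\ref{prop-pstep-1}.) The total is therefore again $P_\sigma(x,\ProcTermr) = 1$. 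For $x \paltc{\pi} y$, a target $z$ receives $\pi \mul P_\sigma(x,z) + (1-\pi) \mul P_\sigma(y,z)$. Summing over $z$ gives $\pi + (1-\pi) = 1$.

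\emph{Recursion constants.} The constants $\rec{X}{E}$ are not structurally smaller than their unfoldings $\rec{t}{E}$. They need a separate argument: either an induction on the linear term $t$, noting that its summands $\phi \gc \ep$ and $\phi \gc \alpha \seqc X$ behave like the base cases, or a well-founded measure in which the unfolding counts as smaller. This is possible because linear terms only put variables behind a prefix, so no recursion constant is unfolded again when computing $\pstepsym$.

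The main obstacle is making the probability bookkeeping exact. Several derivations may lead to the same target with the same $\pi$, and $\Pi_\sigma(t,t')$ is a \emph{set}, so they are counted once. Moreover, the rule for $\paltc{\pi}$ produces one transition for every pair $\rho,\rho'$, including zero ones. To handle this, I would first strengthen the induction to the statement that for each $t'$ there is exactly one nonzero $\pi$ with $\pstep{t}{\gprob{\sigma}{\pi}}{t'}$ if any (the induction needs the nonzero-uniqueness part), and that the set of such $t'$ is finite. The equalities above then follow. For $\paltc{\pi}$ this requires care when $z$ is reachable from only one side: the other side contributes via its $0$-transition to $z$, supplied by the last rule.
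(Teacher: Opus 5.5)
Your plan follows the paper's own route: structural induction, which the paper only asserts is easy. Your extra care about the constants $\rec{X}{E}$, and about $\Pi_\sigma(t,t')$ being a set, is well placed.

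The step that does not hold up is the parenthetical in the $\seqc$ case, where you discard the first $\seqc$ rule by appeal to Proposition~\ref{prop-pstep-1}. That proposition is false for the rules as given:
\begin{itemize}
\item For every $\pi \in \Prob$, the $\paltc{\pi}$ rule with premise $\pstep{a}{\gprob{\sigma}{1}}{a}$ on both sides yields $\pstep{a \paltc{\pi} a}{\gprob{\sigma}{1}}{a}$, since $\pi \mul 1 + (1-\pi) \mul 1 = 1$.
\item For $E = \{X = \True \gc a \seqc X\}$, the recursion rule yields $\pstep{\rec{X}{E}}{\gprob{\sigma}{1}}{\True \gc a \seqc \rec{X}{E}}$, because the unfolding has a probability-$1$ self-loop.
\end{itemize}
Let $y$ be either of these terms, so $\pstep{y}{\gprob{\sigma}{1}}{y'}$ with $y' \not\equiv y$, and take $x = \ep$.
\begin{itemize}
\item The first $\seqc$ rule gives $\pstep{\ep \seqc y}{\gprob{\sigma}{1}}{\ep \seqc y}$.
\item The third rule, using $\isterm{\ep}{\sigma}$, gives $\pstep{\ep \seqc y}{\gprob{\sigma}{1}}{\ep \seqc y'}$ as well as $\pstep{\ep \seqc y}{\gprob{\sigma}{0}}{\ep \seqc y}$.
\end{itemize}
The targets $\ep \seqc y$ and $\ep \seqc y'$ are distinct, so $P_\sigma(\ep \seqc y,\ProcTermr) = 2$. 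More generally, for such a $y$ every terminating $x'$ of weight $\pi$ contributes $2 \mul \pi$ instead of $\pi$, and the target $x' \seqc y$ carries two different values.

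So, read literally against Tables~\ref{sos-depACPet-1} and~\ref{sos-depACPet-2}, the proposition fails, and the paper's one-line proof has the same blind spot as yours.

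Your argument does go through if the first $\seqc$ rule is deleted. This changes nothing intended: its conclusions are already produced by the other two rules whenever $\pstep{y}{\gprob{\sigma}{1}}{y'}$ forces $y' \equiv y$. With only the second and third rules, the targets $x' \seqc y$ (for $\nisterm{x'}{\sigma}$) and $x' \seqc y'$ (for $\isterm{x'}{\sigma}$) are pairwise distinct. Your sum then collapses to $P_\sigma(x,\ProcTermr)$, as you intended.

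You should also state the induction hypothesis with zero included: for all $\sigma$, $t$, $t'$ there is exactly one $\pi$ with $\pstep{t}{\gprob{\sigma}{\pi}}{t'}$, and only finitely many $t'$ get a nonzero one. Uniqueness of the nonzero value alone does not propagate through $\paltc{\pi}$. A stray $0$ next to a nonzero value on one side can yield two distinct nonzero values one level up.

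Two smaller points:
\begin{itemize}
\item Your remark that no recursion constant is unfolded is only literally true after this deletion, since the first $\seqc$ rule's premise inspects the right operand.
\item Your base case reads the first rule of Table~\ref{sos-depACPet-2} as covering $\tau$, $a(e)$ and $\ass{v}{e}$. Literally it covers only $a \in \Act$, which would give those terms total weight $0$. This is presumably a typo, but it should be made explicit.
\end{itemize}
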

\begin{proof}
This is easy to prove by induction on the structure of $t$.
\qed
\end{proof}

It follows from Propositions~\ref{prop-pstep-1} and~\ref{prop-pdf} that 
the behaviour of $t$ does not start with a probabilistic choice if 
\smash{$\pstep{t}{\gprob{\sigma}{1}}{t'}$}.
This explains the premises \smash{$\pstep{x}{\gprob{\sigma}{1}}{x'}$} 
and \smash{$\pstep{y}{\gprob{\sigma}{1}}{y'}$} in
Table~\ref{sos-depACPet-1}: they guarantee that probabilistic choices 
are always resolved before choices involved in alternative composition 
and parallel composition are resolved.

\begin{snot}
We write $[t]_R$, where $t \in \ProcTermr$ and $R$ is an equivalence 
relation on $\ProcTermr$, for the equivalence class of $t$ with respect 
to $R$.
\end{snot}

Two processes are considered equal if they can simulate each other 
insofar as their observable behavioural potentials are concerned, taking 
into account the assigments of data values to flexible variables under 
which the potentials are available.
This can be dealt with by means of a variant of the notion of branching 
bisimulation equivalence introduced in~\cite{GW96a} that is adapted to 
the current setting.

An equivalence relation on the set $\AProcTermt$ is needed.
\begin{sdef}
Two actions $\alpha,\alpha' \in \AProcTermt$ are \emph{data equivalent}, 
written $\alpha \simeq \alpha'$, iff one of the following holds:
\begin{itemize}
\item
there exists an $a \in \Actt$ such that $\alpha = a$ 
and $\alpha' = a$;
\item
there exist an $a \in \Act$ and $e,e' \in \DataTerm_i$ for some
$i \in \set{1,\ldots,n}$ such that $\Sat{\gD}{\fol{e = e'}}$, 
$\alpha = a(e)$, and $\alpha' = a(e')$;
\item
there exist a $v \in \ProgVar_i$ and $e,e' \in \DataTerm_i$ for some
$i \in \set{1,\ldots,n}$ such that \mbox{$\Sat{\gD}{\fol{e = e'}}$}, 
$\alpha = \ass{v}{e}$, and $\alpha' = \ass{v}{e'}$.
\end{itemize}
\end{sdef}
\begin{snot}
We write $[\alpha]$, where $\alpha \in \AProcTermt$, for the equivalence 
class of $\alpha$ with respect to $\simeq$.
\end{snot}

\begin{snot}
We write \smash{$t \sord{\sigma} t'$} to indicate that either 
\smash{$\astep{t}{\gact{\sigma}{\tau}}{t'}$} or there exists a 
$\pi \in \Prob$ such that \smash{$\pstep{t}{\gprob{\sigma}{\pi}}{t'}$}.
\end{snot}

\begin{sdef}
A \emph{branching bisimulation} is an equivalence relation $R$ on 
$\ProcTermr$ such that, for all terms $t_1,t_2 \in \ProcTermr$ with 
$(t_1,t_2) \in R$, the following \emph{transfer conditions} hold:
\begin{itemize}
\item
if $\astep{t_1}{\gact{\sigma}{\alpha}}{t_1'}$, then there exist  
${t_2}_1,\ldots,{t_2}_n,t_2' \in \ProcTermr$ ($n \in \Nat$) and an 
$\alpha' \in [\alpha]$ such that:
\begin{itemize}
\item
$t_2 \sord{\sigma} {t_2}_1 \sord{\sigma} \,\cdots\,
 \sord{\sigma} {t_2}_n$;
\item
either \smash{$\astep{{t_2}_n}{\gact{\sigma}{\alpha'}}{t_2'}$} 
or $\alpha \equiv \tau$, $t_2 \equiv t_2'$, and 
\smash{$\nisterm{t_2'}{\sigma}$};
\item
$(t_1,{t_2}_i) \in R$ for all $i \in \set{1,\ldots,n}$ and
$(t_1',t_2') \in R$;
\end{itemize}
\item
if $\isterm{t_1}{\sigma}$, then $\isterm{t_2}{\sigma}$;
\item
$P_\sigma(t_1,[t]_R) = P_\sigma(t_2,[t]_R)$ for all 
$\sigma \in \EvalMap$ and $t\in \ProcTermr$.
\end{itemize}
Two terms $t_1,t_2 \in \ProcTermr$ are 
\emph{branching bisimulation equivalent}, written 
$t_1 \bbisim t_2$, if there exists a branching bisimulation $R$ 
such that $(t_1,t_2) \in R$.
\end{sdef}

\begin{proposition}
\label{prop-bbisim-largest}
The relation $\bbisim$ is the largest branching bisimulation.
\end{proposition}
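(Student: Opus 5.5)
The plan is to show that $\bbisim$ is itself a branching bisimulation. Once that is established, maximality is immediate: every branching bisimulation $R$ satisfies $R \subseteq {\bbisim}$ by the definition of $\bbisim$. The standard route is to take the union $\mathcal{U}$ of all branching bisimulations, which is exactly $\bbisim$, and let $R^\ast$ be its transitive closure. Since $\mathcal{U}$ is reflexive and symmetric (the identity relation is a branching bisimulation), $R^\ast$ is an equivalence relation. We will show that $R^\ast$ satisfies the transfer conditions. Then $R^\ast$ is a branching bisimulation, so $R^\ast \subseteq \mathcal{U}$, and therefore ${\bbisim} = \mathcal{U} = R^\ast$ is a branching bisimulation.

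The key observation concerns equivalence classes. Each branching bisimulation $R$ with $(t_1,t_2) \in R$ is an equivalence relation with $R \subseteq R^\ast$, so every $R^\ast$-class is a disjoint union of $R$-classes. Hence the probabilistic transfer condition for $R$, namely $P_\sigma(t_1,[t]_R) = P_\sigma(t_2,[t]_R)$, lifts to $P_\sigma(t_1,[t]_{R^\ast}) = P_\sigma(t_2,[t]_{R^\ast})$ by summing over the $R$-classes contained in $[t]_{R^\ast}$. This uses the finite additivity of $P_\sigma$ over disjoint sets, which follows from its definition as a sum; Corollary~\ref{corol-well-def-pdf} and Proposition~\ref{prop-pdf} ensure these sums are well defined. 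The same lifting works for the action transfer condition: the intermediate terms ${t_2}_i$ and the target $t_2'$ related to $t_1$ and $t_1'$ by $R$ are also related by $R^\ast$. The termination condition transfers trivially. So every pair in $\mathcal{U}$ satisfies the transfer conditions with respect to $R^\ast$.

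It then remains to handle chains $t_1\, R_1\, s_1\, R_2 \cdots R_k\, t_2$. We proceed by induction on $k$. The probability and termination conditions compose trivially, since equality and implication are transitive. For the action transfer, we compose the simulating sequences. Given $t_1 \step{} t_1'$, the first step yields a path $s_1 \sord{\sigma} \cdots \sord{\sigma} s' \step{\alpha'} s_1'$ with all intermediate terms $R^\ast$-related to $t_1$. Each $\sord{\sigma}$ step along that path ($\tau$-steps and probability steps) must then be matched from $t_2$ by the induction hypothesis, keeping every intermediate term in the $R^\ast$-class of $t_1$. The matching of a $\tau$-step uses the action transfer condition. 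The matching of a probability step $s \pstep{}{\gprob{\sigma}{\pi}}{} s''$ is the delicate point. The transfer conditions only equate aggregated probabilities, so to obtain a concrete matching step we use Proposition~\ref{prop-pstep-2}, which gives a probability step from the matching term to any target, possibly with probability $0$. Composing the resulting paths, and using that $\simeq$ is transitive, gives the required sequence.

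I expect the main obstacle to be this composition of action transfers. Two things need care there. First, the probability steps must be matched with targets in the right class, which is where Proposition~\ref{prop-pstep-2} is needed. Second, the exceptional clause ($\alpha \equiv \tau$, $t_2 \equiv t_2'$, $\nisterm{t_2'}{\sigma}$) must be propagated correctly. As in~\cite{GW96a}, I would first attempt the argument via the stuttering property, handling the exceptional case separately when the intermediate match is a stutter.
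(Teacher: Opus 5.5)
Your proposal is correct and takes the same route as the paper. The paper, referring to Proposition~3.2.4 of~\cite{Geo11a}, shows that the transitive closure of the union of a set of branching bisimulations is again a branching bisimulation, and concludes that $\bbisim$, being the union of all of them, is the largest one. The action-transfer composition you expect to be the main obstacle is in fact immediate: by Proposition~\ref{prop-pstep-2}, $t \sord{\sigma} t'$ holds for all closed terms $t,t'$, so from $t_2$ you may step directly to the last intermediate term of the match supplied by the first link $R_1$. In the exceptional case take $t_2' \equiv t_2$, using that the termination condition holds in both directions along the chain. Neither an induction on $k$ nor a stuttering argument is needed.
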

\begin{proof}
In the same way as Proposition~3.2.4 in~\cite{Geo11a}, it can be shown 
that the transitive closure of the union over a set of branching 
bisimulations is a branching bisimulation.
Moreover, $\bbisim$ is the union over the set of all branching 
bisimulations according to the definition of $\bbisim$.
Hence, $\bbisim$ is the largest branching bisimulation. 
\qed
\end{proof}
The following is a corollary of Proposition~\ref{prop-bbisim-largest} 
and the fact that a branching bisimulation is an equivalence relation by 
definition.
\begin{corollary}
\label{corollary-bbisim-equiv}
The relation $\bbisim$ is an equivalence relation.
\end{corollary}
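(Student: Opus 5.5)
The corollary follows from Proposition~\ref{prop-bbisim-largest} together with the fact that every branching bisimulation is, by definition, an equivalence relation. No separate verification of reflexivity, symmetry and transitivity against the transfer conditions is needed.

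First, I would recall that Proposition~\ref{prop-bbisim-largest} states that $\bbisim$ is itself a branching bisimulation, namely the largest one. Its proof shows that $\bbisim$ is the union of all branching bisimulations, and that the transitive closure of such a union is again a branching bisimulation. Hence $\bbisim$ contains that transitive closure and therefore coincides with it.

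Second, I would invoke the definition of branching bisimulation. A branching bisimulation is required to be an equivalence relation $R$ on $\ProcTermr$ satisfying the transfer conditions. Since $\bbisim$ is a branching bisimulation, it is in particular an equivalence relation on $\ProcTermr$, and the corollary follows.

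The only delicate point lies in the proposition already assumed, not here. Reflexivity of $\bbisim$ is automatic because the identity relation is a branching bisimulation. Symmetry is inherited because every member of the union is symmetric. Transitivity is exactly what taking the transitive closure secures, and Proposition~\ref{prop-bbisim-largest} guarantees that this closure still satisfies the transfer conditions, including the probabilistic condition $P_\sigma(t_1,[t]_R) = P_\sigma(t_2,[t]_R)$. So there is no remaining obstacle, and this short argument suffices.
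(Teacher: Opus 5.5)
Your proposal is correct and is the paper's argument: the paper obtains the corollary directly from Proposition~\ref{prop-bbisim-largest}, which makes $\bbisim$ itself a branching bisimulation, and from the fact that every branching bisimulation is an equivalence relation by definition. Your extra remarks on reflexivity, symmetry and transitivity are sound but not needed.
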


\begin{sdef}
A \emph{rooted branching bisimulation} is a binary relation $R$ on 
$\ProcTermr$ such that, for all terms $t_1,t_2 \in \ProcTermr$ with 
$(t_1,t_2) \in R$, the following transfer conditions hold:
\begin{itemize}
\item
if $\astep{t_1}{\gact{\sigma}{\alpha}}{t_1'}$, then there exist an 
$\alpha' \in [\alpha]$ and a $t_2' \in \ProcTermr$ such that 
\smash{$\astep{t_2}{\gact{\sigma}{\alpha'}}{t_2'}$} and 
$t_1' \bbisim t_2'$;
\item
if $\astep{t_2}{\gact{\sigma}{\alpha}}{t_2'}$, then there exist an 
$\alpha' \in [\alpha]$ and a $t_1' \in \ProcTermr$ such that 
\smash{$\astep{t_1}{\gact{\sigma}{\alpha'}}{t_1'}$} and 
$t_1' \bbisim t_2'$;
\item
$\isterm{t_1}{\sigma}$ iff $\isterm{t_2}{\sigma}$;
\item
$P_\sigma(t_1,[t]_R) = P_\sigma(t_2,[t]_R)$ for all 
$\sigma \in \EvalMap$ and $t\in \ProcTermr$.
\end{itemize}
Two terms $t_1,t_2 \in \ProcTermr$ are 
\emph{rooted branching bisimulation equivalent}, written 
$t_1 \rbbisim t_2$, if there exists a rooted branching bisimulation $R$ 
such that $(t_1,t_2) \in R$.

\noindent
Let $R$ be a rooted branching bisimulation such that $(t_1,t_2) \in R$.
Then we say that $R$ is a rooted branching bisimulation 
\emph{witnessing} $t_1 \rbbisim t_2$.
\end{sdef}

\begin{proposition}
\label{prop-rbbisim-largest}
The relation $\rbbisim$ is the largest rooted branching bisimulation.
\end{proposition}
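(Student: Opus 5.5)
By the definition of $\rbbisim$, every rooted branching bisimulation is contained in $\rbbisim$. So it suffices to show that $\rbbisim$ is itself a rooted branching bisimulation, and the plan is to prove this directly. Writing $\rbbisim = \Union \mathcal{R}$, where $\mathcal{R}$ is the set of all rooted branching bisimulations, I check the four transfer conditions for an arbitrary pair $(t_1,t_2) \in \rbbisim$. Such a pair lies in some $R \in \mathcal{R}$.

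The first three conditions are immediate. The two action transfer conditions and the termination condition only mention the fixed relation $\bbisim$ and the transition and termination relations, not $R$. Since $R$ satisfies them for $(t_1,t_2)$, they hold with $\rbbisim$ in place of $R$ as well.

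The main obstacle is the probability condition, because its equivalence classes are taken with respect to the relation itself, and $\rbbisim$ need not be an equivalence relation a priori. I would first mirror Proposition~\ref{prop-bbisim-largest}, whose argument follows Proposition~3.2.4 of~\cite{Geo11a}. The first step is to show that the reflexive--transitive (and symmetric) closure of a union of rooted branching bisimulations is again a rooted branching bisimulation.

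Symmetry and reflexivity of the closure are unproblematic. The action and termination conditions compose along chains because $\bbisim$ is an equivalence relation by Corollary~\ref{corollary-bbisim-equiv} and because data equivalence $\simeq$ is transitive.

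For the probability condition along a chain $t_1 = u_0, u_1, \ldots, u_k = t_2$, each link lies in some $R_j$. Every class of the closure $S$ is a disjoint union of $R_j$-classes, once we also note that each $R_j$ may be replaced by its equivalence closure. Summing $P_\sigma$ over those classes, using Corollary~\ref{corol-well-def-pdf} and the finite-support character of $P_\sigma$ expressed by Proposition~\ref{prop-pdf}, gives $P_\sigma(u_{j-1},[t]_S) = P_\sigma(u_j,[t]_S)$. Chaining these equalities yields the condition for $(t_1,t_2)$.

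The closure of $\rbbisim$ is then a rooted branching bisimulation containing $\rbbisim$, hence equal to it by the definition of $\rbbisim$. Therefore $\rbbisim$ is an equivalence relation that satisfies all transfer conditions, and it is the largest rooted branching bisimulation.

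The delicate point is the summation argument: $R_j$-classes must refine $S$-classes. For this I first show that replacing a rooted branching bisimulation by its equivalence closure preserves the transfer conditions, treating that closure by the same chaining argument.
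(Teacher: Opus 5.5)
Your proposal is correct and follows essentially the paper's route. The paper proves this ``in almost the same way'' as Proposition~\ref{prop-bbisim-largest}: following Proposition~3.2.4 of~\cite{Geo11a}, the closure of a union of rooted branching bisimulations is again one, and $\rbbisim$ is by definition the union of all of them. Your observations that only the probability clause depends on the relation itself, and that summing over refining classes preserves it under closure, are details the paper leaves implicit; since $[t]_R$ is defined only for equivalence relations $R$, your preliminary step of first passing to each $R_j$'s equivalence closure is essentially vacuous.
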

\begin{proof}
This is proved in almost the same way as 
Proposition~\ref{prop-bbisim-largest}.
\qed
\end{proof}
\begin{proposition}
\label{proposition-rbbisim-equiv}
The relation $\rbbisim$ is an equivalence relation.
\end{proposition}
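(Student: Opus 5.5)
Since $\rbbisim$ is the largest rooted branching bisimulation by Proposition~\ref{prop-rbbisim-largest}, it suffices to show that $\rbbisim$ is reflexive, symmetric and transitive. The plan is to exhibit, for each property, a rooted branching bisimulation containing the relevant pairs, and then use maximality. Throughout I will use Corollary~\ref{corollary-bbisim-equiv}, i.e.\ that $\bbisim$ is an equivalence relation, for the steps that are only required up to $\bbisim$.

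For reflexivity, I will take the identity relation $\mathrm{Id}$ on $\ProcTermr$. The two action transfer conditions hold with $\alpha' = \alpha$ and $t_2' = t_1'$, because $\bbisim$ is reflexive. The termination condition is trivial. For the probability condition, the $\mathrm{Id}$-classes are singletons and both sides compute the same value $P_\sigma(t_1,\set{t})$. For symmetry, I will take the converse $R^{-1}$ of a witnessing relation $R$. The definition of rooted branching bisimulation is symmetric in $t_1,t_2$: the two action clauses swap, and the termination clause is an ``iff''. Moreover, $\bbisim$ is symmetric. The equivalence classes need a little care, since $R$ is not assumed to be an equivalence. However, the probability clause refers to sets $[t]_R$, and I will read these as the classes of the equivalence closure of $R$, which coincides with that of $R^{-1}$. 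Hence the clause transfers unchanged.

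For transitivity, given witnesses $R_1$ for $t_1 \rbbisim t_2$ and $R_2$ for $t_2 \rbbisim t_3$, I will take the composition $R_1 \circ R_2$. If $\astep{t_1}{\gact{\sigma}{\alpha}}{t_1'}$, then $R_1$ yields $\astep{t_2}{\gact{\sigma}{\alpha'}}{t_2'}$ with $\alpha' \simeq \alpha$ and $t_1' \bbisim t_2'$. Applying $R_2$ then yields $\astep{t_3}{\gact{\sigma}{\alpha''}}{t_3'}$ with $\alpha'' \simeq \alpha'$ and $t_2' \bbisim t_3'$. Transitivity of $\simeq$, which holds because $\gD$-equality is transitive, gives $\alpha'' \in [\alpha]$. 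Transitivity of $\bbisim$ gives $t_1' \bbisim t_3'$. The reverse direction is symmetric, and termination is preserved by chaining the two ``iff''s.

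The main obstacle is the probability clause for the composition, because the classes of $R_1\circ R_2$ differ from those of $R_1$ and $R_2$. I would handle this by working with $E$, the equivalence closure of $R_1 \cup R_2$. Each $E$-class is a disjoint union of $R_1$-classes, and also a disjoint union of $R_2$-classes. Summing the equalities $P_\sigma(t_1,C) = P_\sigma(t_2,C)$ over the $R_1$-classes $C$ contained in an $E$-class, and then doing the same for $R_2$, gives $P_\sigma(t_1,[t]_E) = P_\sigma(t_3,[t]_E)$. This summation over possibly infinitely many classes is justified by Corollary~\ref{corol-well-def-pdf} and Proposition~\ref{prop-pdf}, since $P_\sigma$ behaves as a distribution. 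Finally, I would note that $E$ is itself a rooted branching bisimulation: its action and termination clauses follow by chaining as above, and its probability clause by this class-wise summation. Since $(t_1,t_3)\in E$, maximality gives $t_1 \rbbisim t_3$.
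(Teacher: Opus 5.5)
Your reflexivity and symmetry steps are the same as the paper's (identity relation and converse). For transitivity you take a genuinely different route.

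The paper simply asserts, as a corollary of $\bbisim$ being an equivalence, that the composition $R \circ R'$ of the two witnesses is itself a rooted branching bisimulation witnessing $t_1 \rbbisim t_3$. You use the composition only to show how the action and termination clauses chain. You then work with the equivalence closure $E$ of $R_1 \cup R_2$. Its classes are disjoint unions of $R_1$-classes and also of $R_2$-classes, so the probability equalities can be summed class-wise and chained along the generating pairs.

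This is what makes the probability clause go through. $R \circ R'$ is in general not an equivalence relation. Moreover, if $R'$ is not reflexive, the classes of its equivalence closure need not be unions of $R$-classes. For example, take $R$ to be the equivalence closure of $\{(a \paltc{1/2} \dead,\ (a \altc a) \paltc{1/2} \dead),\ (a,\ a \altc a)\}$ and $R'=\{((a \altc a) \paltc{1/2} \dead,\ (a \altc a) \paltc{1/2} \dead)\}$. Then $R \circ R'$ relates the two probabilistic terms but does not relate $a$ to $a \altc a$, and the probability clause fails for it.

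When both witnesses are reflexive, the equivalence closure of $R \circ R'$ is exactly your $E$. So your argument is precisely what is needed to make the paper's one-line claim rigorous. The paper's version is shorter and mirrors the familiar non-probabilistic proof.

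Two minor points. First, the appeal to maximality of $\rbbisim$ is unnecessary, since exhibiting a witness suffices by definition. Second, the sums over possibly infinitely many classes are meaningful because $P_\sigma(u,\cdot)$ has finite support; a meadow has no infinite sums. Proposition~\ref{prop-pdf} tacitly relies on this too, so the justification is finite support rather than $P_\sigma$ being a distribution as such.
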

\begin{proof}
The following are corollaries of Corollary~\ref{corollary-bbisim-equiv} 
and the definition of a rooted branching bisimulation:
\begin{itemize}
\item
for all $t \in \ProcTermr$,
the identity relation $I$ on $\ProcTermr$ is a rooted branching 
bisimulation witnessing $t \rbbisim t$;
\item
for all $t_1, t_2 \in \ProcTermr$ such that $t_1 \rbbisim t_2$,
if $R$ is a rooted branching bisimulation witnessing $t_1 \rbbisim t_2$,
then $R^{-1}$ is a rooted branching bisimulation witnessing 
$t_2 \rbbisim t_1$;
\item
for all $t_1, t_2, t_3 \in \ProcTermr$ such that $t_1 \rbbisim t_2$ and
$t_2 \rbbisim t_3$, 
if $R$ is a rooted branching bisimulation witnessing $t_1 \rbbisim t_2$ 
and $R'$ is a rooted branching bisimulation witnessing 
$t_2 \rbbisim t_3$,
then $R \circ R'$ is a rooted branching bisimulation witnessing 
$t_1 \rbbisim t_3$.%
\footnote{We write $R \circ R'$ for the composition of $R$ with $R'$.}
\end{itemize}
This means that $\rbbisim$ is reflexive, symmetric, and transitive.
Hence, $\rbbisim$ is an equivalence relation.
\qed
\end{proof}

In Section~\ref{sect-sound-compl}, it is proved that $\rbbisim$ is a 
congruence with respect to the operators of \depACPetr\ of which the 
result sort and at least one argument sort is~$\Proc$.
In fact, $\bbisim$ is also a congruence with respect to these operators, 
except the operator $\altc$\,.

\begin{sexa}
Let $t,t',s,s' \in \ProcTermr$, $\sigma \in \EvalMap$, and
$\alpha, \beta \in \AProcTerm$.
Then we have:
\begin{ldispl}
\mathrm{if}\; \astep{t}{\gprob{\sigma}{\alpha}}{t'} \;\mathrm{then}\;
\tau \seqc t \bbisim t \;.
\end{ldispl}%
However, we also have:
\begin{ldispl}
\mathrm{if}\; \astep{t}{\gprob{\sigma}{\alpha}}{t'} 
\;\mathrm{and}\; \astep{s}{\gprob{\sigma}{\beta}}{s'} \;\mathrm{then}\;
\tau \seqc t \altc s \not\bbisim t \altc s\;.
\end{ldispl}%
So we have that $\bbisim$ is not a congruence with respect to the 
operator $\altc$.
In order to make $\rbbisim$ a congruence with respect to the operator 
$\altc$, it is defined such that 
\begin{ldispl}
\mathrm{if}\; \astep{t}{\gprob{\sigma}{\alpha}}{t'} \;\mathrm{then}\;
\tau \seqc t \not\rbbisim t \;.
\end{ldispl}%
\end{sexa}

The third transfer condition given in the definition of a branching 
bisimulation is a rather strong condition.
\begin{sexa}
Let $t,t',s,s' \in \ProcTermr$, $\sigma \in \EvalMap$, 
$\pi \in \Prob \diff \set{0,1}$, and $\alpha \in \AProcTermt$.
Then we have for all equivalence relations $R$ on $\ProcTermr$:
\begin{ldispl}
\mathrm{if}\; \pstep{t}{\gprob{\sigma}{\pi}}{t'} \;\mathrm{then}\;
P_\sigma(\tau \seqc t,[t']_R) = 0 \;\mathrm{and}\; 
P_\sigma(t,[t']_R) > 0\;.
\end{ldispl}%
So we have $\tau \seqc t \not\bbisim t$ and 
$\alpha \seqc (\tau \seqc t) \not\rbbisim \alpha \seqc t$
if $\pstep{t}{\gprob{\sigma}{\pi}}{t'}$.
The reason for this is that otherwise $\bbisim$ and $\rbbisim$ would not 
be a congruence with respect to the operator $\parc$ because we have for 
all equivalence relations $R$ on $\ProcTermr$:
\begin{ldispl}
\mathrm{if}\; \pstep{t}{\gprob{\sigma}{\pi}}{t'} 
\;\mathrm{and}\; \astep{s}{\gprob{\sigma}{\alpha}}{s'} 
\\ \qquad
\;\mathrm{then}\;
P_\sigma(\tau \seqc t \parc s,[t' \parc s]_R) = 0 \;\mathrm{and}\; 
P_\sigma(t \parc s,[t' \parc s]_R) > 0\;.
\end{ldispl}%
\end{sexa}

Without the occurrence of $\nisterm{t_1'}{\sigma}$ in the first transfer 
condition given in the definition of a branching bisimulation, 
$\bbisim$ and $\rbbisim$ would not be congruences with respect to the 
operator $\seqc$\,.
\begin{sexa}
Let $t,t' \in \ProcTermr$, $\sigma \in \EvalMap$,  
$\pi \in \Prob \diff \set{0,1}$, and $\alpha \in \AProcTermt$.
Then, without the occurrence of $\nisterm{t_1'}{\sigma}$ in the first 
transfer condition given in the definition of a branching bisimulation, 
we would have $\tau \seqc \ep \bbisim \ep$ and 
$\alpha \seqc (\tau \seqc \ep) \rbbisim \alpha \seqc \ep$.
However, we have for all equivalence relations $R$ on $\ProcTermr$:
\begin{ldispl}
\mathrm{if}\; \pstep{t}{\gprob{\sigma}{\pi}}{t'} \;\mathrm{then}\;
P_\sigma((\tau \seqc \ep) \seqc t,[t']_R) = 0
\;\mathrm{and}\;
P_\sigma(\ep \seqc t,[t']_R) > 0\;.
\end{ldispl}%
So we would have $(\tau \seqc \ep) \seqc t \not\bbisim \ep \seqc t$ and 
$\alpha \seqc ((\tau \seqc \ep) \seqc t) \not\rbbisim
 \alpha \seqc (\ep \seqc t)$
if \smash{$\pstep{t}{\gprob{\sigma}{\pi}}{t'}$}.
This means that $\bbisim$ and $\rbbisim$ would not be congruences with 
respect to the operator~$\seqc$\,.
\end{sexa}

\section{Soundness and Completeness of the Axiom System}
\label{sect-sound-compl}

This section concerns the issue of soundness and completeness of the 
axiom system of \depACPetr\ with respect to branching bisimulation 
equivalence.
It will be shown that the axiom system is sound and it will be explained
why the axiom system is incomplete.

\pagebreak[2]

Rooted branching bisimulation equivalence is an equivalence 
relation.
Moreover, rooted branching bisimulation equivalence is a congruence with 
respect to the operators of \depACPetr\ of which the result sort and at 
least one argument sort is $\Proc$.
\begin{proposition}[Congruence]
\label{proposition-congr-deACPet}
For all terms $t_1,t_1',t_2,t_2' \in \ProcTermr$ and all terms 
$\phi \in \CondTerm$, 
$t_1 \rbbisim t_2$ and $t_1' \rbbisim t_2'$ only if 
$t_1 \altc t_1' \rbbisim t_2 \altc t_2'$, 
$t_1 \seqc t_1' \rbbisim t_2 \seqc t_2'$, 
$t_1 \parc t_1' \rbbisim t_2 \parc t_2'$, 
$t_1 \leftm t_1' \rbbisim t_2 \leftm t_2'$,
$t_1 \commm t_1' \rbbisim t_2 \commm t_2'$,
$\encapa(t_1) \rbbisim \encapa(t_2)$, 
$\encap{H}(t_1) \rbbisim \encap{H}(t_2)$, 
$\abstr{I}(t_1) \rbbisim \abstr{I}(t_2)$,
$t_1 \paltc{\pi} t_1' \rbbisim t_2 \paltc{\pi} t_2'$, 
$\phi \gc t_1 \rbbisim \phi \gc t_2$, and 
$\eval{\sigma}(t_1) \rbbisim \eval{\sigma}(t_2)$.
\end{proposition}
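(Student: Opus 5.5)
The plan is to prove congruence via a rule format argument and only fall back on explicit relations where the format fails. First I will check that the rules of Tables~\ref{sos-depACPet-1} and~\ref{sos-depACPet-2} fit a branching-bisimulation-preserving format, namely the rooted branching bisimulation (RBB) safe format of Fokkink, extended to the probabilistic relations in the style of~\cite{Geo11a}. Its requirements are: premises only test arguments, no lookahead, and arguments that occur on the right of premises are marked ``patient'' for $\tau$-steps. Guarded commands, evaluation, encapsulation, abstraction and recursion constants would follow directly from such a format.

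The negative premises (in the rule for $\seqc$ with $\nisterm{x'}{\sigma}$ and the rule producing probability $0$) and the premises $\pstep{y}{\gprob{\sigma}{1}}{y'}$ probably put the system outside any off-the-shelf format. So I expect to give explicit witnessing relations instead. For each $n$-ary operator $f$, I take $R_f$ to be the symmetric closure of $\set{(f(t_1,\ldots),f(t_2,\ldots)) \where t_i \rbbisim t_i'}$ united with $\rbbisim$. I then show that $R_f$ is a rooted branching bisimulation. For this I also need a companion lemma: the equivalence closure of $\set{(f(\ldots),f(\ldots)) \where \text{arguments } \bbisim\text{-related}} \union {\bbisim}$ is a branching bisimulation. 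This companion lemma holds for all operators except $\altc$, and for $\seqc$, $\parc$, $\leftm$, $\commm$ only in the appropriate argument positions. It is what supplies $t_1'\bbisim t_2'$ for the residuals, since residuals are built with $\parc$, $\seqc$, $\encap{H}$, $\abstr{I}$, $\eval{\sigma}$ and $\phi \gc$.

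The action and termination transfer conditions are routine case analyses on the last rule used. A $\tau$-step of an argument that is answered by a sequence of $\sord{\sigma}$-steps lifts through the context. The probabilistic steps with $\pi$ arbitrary are lifted by the product rules, and Propositions~\ref{prop-pstep-1} and~\ref{prop-pdf} are used to handle the premises $\pstep{\cdot}{\gprob{\sigma}{1}}{\cdot}$: a term has such a step iff it does not start with a probabilistic choice. I must show that this property is preserved between related terms. For rooted terms it is, since $P_\sigma(t,[t]_R)=1$ transfers.

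The main obstacle is the third transfer condition, $P_\sigma(t_1,[t]_R)=P_\sigma(t_2,[t]_R)$, for composite contexts. The case of $\paltc{\pi}$ is the simplest, because its probability is $\pi\rho+(1-\pi)\rho'$ and sums over classes split linearly. For $\parc$, $\altc$, $\leftm$ and $\commm$, I first prove a factorisation lemma: $P_\sigma(x \parc y, x'\parc y') = P_\sigma(x,x')\mul P_\sigma(y,y')$, and similarly for the other operators. This is proved by induction on the structure, using Proposition~\ref{prop-pstep-2} and the fact that $[x'\parc y']_R$ is a union of products of classes, because $R$ is generated by the context. Summing over classes then gives the equality from the hypotheses on the arguments. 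For $\seqc$, the rules split on $\isterm{x'}{\sigma}$, so the class sums need that termination is constant on $\bbisim$-classes. This is exactly where the extra condition $\nisterm{t_2'}{\sigma}$ in the first transfer condition is used. I expect this $\seqc$ case to require the most care.
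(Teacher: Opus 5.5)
\textbf{Overall approach.} Your strategy is essentially the paper's. The paper also builds, for each operator, an explicit relation: the equivalence closure of $R_1 \diamond R_2$ together with $R_1$ and $R_2$ (and $R_1 \parc R_2$ for $\leftm$, $\commm$), with $R_1,R_2$ the given witnesses rather than $\rbbisim$ itself. It calls the action and termination conditions easy. It gets the probability condition from additivity of $P_\sigma$ over disjoint sets and from product identities such as $P_\sigma(t \parc t', T \parc T') = P_\sigma(t,T) \mul P_\sigma(t',T')$, which is your factorisation lemma. One small repair: your $R_f$, a symmetric closure united with $\rbbisim$, must also be closed under transitivity, otherwise $[t]_{R_f}$ is not defined.

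\textbf{The gap.} The failing step is where you transfer the premises $\pstep{y}{\gprob{\sigma}{1}}{y'}$ of the rules for $\altc$ and $\parc$. From $P_\sigma(t_1,[t_1]_R) = 1$ you only get $P_\sigma(t_2,[t_1]_R) = 1$. That says all the mass of $t_2$ lands in one class, not on one term, so no $y'$ with $\pstep{t_2}{\gprob{\sigma}{1}}{y'}$ need exist. The characterisation you read off Propositions~\ref{prop-pstep-1} and~\ref{prop-pdf} is also unreliable: $\pstep{a \paltc{\pi} a}{\gprob{\sigma}{1}}{a}$ holds by the rule for $\paltc{\pi}$, although $a \paltc{\pi} a$ starts with a probabilistic choice.

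\textbf{Counterexample.} Let $a,b \in \Act$ and $\pi \in \Prob \diff \set{0,1}$ (e.g.\ $\pi = 1/2$). Let $R$ be the equivalence relation whose only non-singleton class is $C = \set{\dead,\ \dead \seqc a,\ \dead \paltc{\pi} (\dead \seqc a)}$.
\begin{itemize}
\item No term in $C$ has an action transition or terminates.
\item $P_\sigma(\dead,C) = P_\sigma(\dead \seqc a,C) = 1$.
\item $\dead \paltc{\pi} (\dead \seqc a)$ reaches $\dead$ with probability $\pi$ and $\dead \seqc a$ with probability $1-\pi$.
\item Every term in $C$ gives probability $0$ to every other class.
\end{itemize}
So $R$ is a rooted branching bisimulation, and $\dead \rbbisim \dead \paltc{\pi} (\dead \seqc a)$. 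Yet $\astep{\dead \altc b}{\gact{\sigma}{b}}{\ep}$, while $(\dead \paltc{\pi} (\dead \seqc a)) \altc b$ has no action transition at all, because its left operand has no probability-$1$ step. Hence $\dead \altc b \not\rbbisim (\dead \paltc{\pi} (\dead \seqc a)) \altc b$. The same happens with $\dead \parc b$.

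\textbf{Consequence.} No choice of witness relation can close this step. With the definitions as given, the $\altc$ and $\parc$ clauses of the proposition are false. The paper's proof skips exactly this point when it calls the action conditions easy. (A7 and pA3 even derive $(\dead \paltc{\pi} (\dead \seqc a)) \altc b = \dead \altc b$, so soundness is affected too.) The problem is that the premise $\pstep{y}{\gprob{\sigma}{1}}{y'}$ is not invariant under $\rbbisim$. Fixing it requires changing the operational rules or the bisimulation clauses, not adding a lemma.
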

\begin{proof}
Below, we write $R_1 \diamond R_2$, where $R_1$ and $R_2$ are 
rooted branching bisimulations and 
$\diamond \in
 \set{\altc,\seqc,\parc,\leftm,\commm} \union
 \set{\paltc{\pi} \where \pi \in \Prob}$, 
for the relation
$\set{\tup{t_1 \diamond t_2,t_1' \diamond t_2'} \where
      R_1(t_1,t_1') \Land R_2(t_2,t_2')}$
and
$\diamond(R)$, where $R$ is a rooted branching bisimulation and
$\diamond \in
 \set{\encapa} \union
 \set{\encap{H} \where H \subseteq \AProcTerm} \union
 \set{\abstr{I} \where I \subseteq \AProcTerm} \union
 \set{\eval{\sigma} \where \sigma \in \EvalMap}$,
for the relation
$\set{\tup{\diamond(t_1),\diamond(t_1')} \where R(t_1,t_1')}$.

Let $t_1,t_1',t_2,t_2' \in \ProcTermr$ be such that $t_1 \rbbisim t_1'$ 
and $t_2 \rbbisim t_2'$, and 
let $R_1$ and $R_2$ be rooted branching bisimulations witnessing 
$t_1 \rbbisim t_1'$ and $t_2 \rbbisim t_2'$, respectively.

For each operator 
$\diamond \in
 \set{\altc,\seqc,\parc,\leftm,\commm} \union
 \set{\paltc{\pi} \where \pi \in \Prob}$, 
we construct an equivalence relation $R_\diamond$ on $\ProcTermr$ as 
follows ($\pi \in \Prob$):
\begin{ldispl}
\begin{ceqns}
\text{in the case that } 
\diamond \text{ is } \seqc : &
R_\diamond = ((R_1 \diamond R_2) \union R_2)\eqvcl\;; 
\\
\text{in the case that } 
\diamond \text{ is } \altc, \parc \text{ or } \paltc{\pi} : &
R_\diamond = ((R_1 \diamond R_2) \union R_1 \union R_2)\eqvcl\;; 
\\
\text{in the case that } 
\diamond \text{ is } \leftm \text{ or } \commm : &
R_\diamond = ((R_1 \diamond R_2) \union (R_1 \parc R_2) \union
     R_1 \union R_2)\eqvcl\;\;
\end{ceqns}
\end{ldispl}%
and for each operator 
$\diamond \in
 \set{\encapa} \union
 \set{\encap{H} \where H \subseteq \AProcTerm} \union
 \set{\abstr{I} \where I \subseteq \AProcTerm} \union
 \set{\eval{\sigma} \where \sigma \in \EvalMap}$, 
we construct an equivalence relation $R_\diamond$ on $\ProcTermr$ as 
follows ($H,I \subseteq \AProcTerm$): 
\begin{ldispl}
\begin{ceqns}
\text{in the case that } 
\diamond \text{ is } \encapa, \encap{H} \text{ or }\abstr{I} : &
R_\diamond = \diamond(R_1)\;;
\\
\text{in the case that } 
\diamond \text{ is } \eval{\sigma} : &
R_\diamond = 
(\Union_{\sigma' \in \EvalMap} \eval{\sigma'}(R_1))\eqvcl\;.
\end{ceqns}
\end{ldispl}%
Moreover, for each term $\phi \in \CondTerm$, we construct an 
equivalence relation $R_{\phi \gc {}}$ on $\ProcTermr$ as 
follows: 
\begin{ldispl}
R_{\phi \gc {}} = 
(\set{\tup{\phi \gc t_1,\phi \gc t_1'}} \union R_1)\eqvcl\;.
\end{ldispl}%

For each of the constructed equivalence relations, we have to show that 
the transfer conditions from the definition of a rooted branching 
bisimulation hold.

The proofs that the conditions concerning the relations 
$\step{\gact{\sigma}{\alpha}}$ and $\sterm{\sigma}$ hold are easy.
The proof that the condition concerning the functions $P_\sigma$ holds 
is straightforward using the following easy-to-check properties of 
$P_\sigma$ ($\sigma \in \EvalMap$):
\begin{ldispl}
\begin{tabular}{@{}l@{}}
if $I$ is an index set, 
\\ \phantom{if}
for all $i \in I$, $T_i \subseteq \ProcTermr$ and, 
\\ \phantom{if}
for all $i,j \in I$ with $i \neq j$, $T_i \inter T_j = \emptyset$, 
\\
then $P_\sigma(t,\Union_{i \in I} T_i) = \sum_{i \in I} P_\sigma(t,T_i)$
\end{tabular}
\end{ldispl}%
and
\begin{ldispl}
\begin{aceqns}
P_\sigma(t \seqc t',T \seqc T')     & = & 0      & 
 \text{if } t' \notin T'\;, \\
P_\sigma(t \seqc t',T \seqc T')     & = & P_\sigma(t,T) & 
 \text{if } t' \in T' \text{ and } \nisterm{t}{\sigma}\;,  \\
P_\sigma(t \seqc t',T \seqc T')     & = &
 P_\sigma(t,T) \mul P_\sigma(t',T') & 
 \text{if } t' \in T' \text{ and }\, \isterm{t}{\sigma}\;, \\
P_\sigma(t \altc t',T \altc T')     & = &
 P_\sigma(t,T) \mul P_\sigma(t',T')\;, \\
P_\sigma(t \parc t',T \parc T')     & = &
 P_\sigma(t,T) \mul P_\sigma(t',T')\;, \\
P_\sigma(t \leftm t',T \leftm T')   & = &
 P_\sigma(t,T) \mul P_\sigma(t',T')\;, \\
P_\sigma(t \commm t',T \commm T')   & = &
 P_\sigma(t,T) \mul P_\sigma(t',T')\;, \\
P_\sigma(\encapa(t),\encapa(T))     & = & P_\sigma(t,T)\;, \\
P_\sigma(\encap{H}(t),\encap{H}(T)) & = & P_\sigma(t,T)\;, \\
P_\sigma(\abstr{I}(t),\abstr{I}(T)) & = & P_\sigma(t,T)\;, \\
P_\sigma(t \paltc{\pi} t',T)        & = &
 \pi \mul P_\sigma(t,T) + (1-\pi) \mul P_\sigma(t',T)\;,   \\
P_\sigma(\phi \gc t),\phi \gc T)    & = & P_\sigma(t,T) &
 \text{if } \Sat{\gD}{\sigma(\fol{\phi})}\;, \\
P_\sigma(\phi \gc t),\phi \gc T)    & = & 1 &
 \text{if } \nSat{\gD}{\sigma(\fol{\phi})}\;, \\
P_\sigma(\eval{\sigma'}(t),\eval{\sigma'}(T)) 
                                 & = & P_{\sigma'}(t,T)\;.
\end{aceqns}
\end{ldispl}%
where we write 
$T \diamond T'$, where $T,T' \subseteq \ProcTermr$ and 
$\diamond \in
 \set{\altc,\seqc,\parc,\leftm,\commm} \union
 \set{\paltc{\pi} \where \pi \in \Prob}$, 
for the set $\set{t \diamond t' \where t \in T \Land t' \in T'}$; 
$\diamond(T)$, where $T \subseteq \ProcTermr$ and
$\diamond \in
 \set{\encapa} \union \linebreak[2]
 \set{\encap{H} \where H \subseteq \AProcTerm} \union
 \set{\abstr{I} \where I \subseteq \AProcTerm} \union
 \set{\eval{\sigma} \where \sigma \in \EvalMap}$,
for the set $\set{\diamond(t) \where t \in T}$; and
$\phi \gc T$, where $T \subseteq \ProcTermr$ and $\phi \in \CondTerm$,
for the set $\set{\phi \gc t \where t \in T}$.
\qed
\end{proof}

Below, the following definition of validity will be used.
\begin{sdef}
An equation $\eqn$ of \depACPetr\ terms of sort $\Proc$ is said to be 
\emph{valid with respect to} ${\rbbisim}$ if, for each closed 
substitution instance $t = t'$ of $\eqn$, $t \rbbisim t'$.
A conditional equation $\ceqn$ of \depACPetr\ terms of sort $\Proc$ 
is said to be \emph{valid with respect to} ${\rbbisim}$ if, for each 
closed substitution instance 
$\set{t_i = t'_i \where i \in I} \Limpl t = t'$ 
of $\ceqn$, $t \rbbisim t'$ if $t_i \rbbisim t'_i$ for each $i \in I$.
\end{sdef}

The axiom system of \depACPetr\ is sound with respect to ${\rbbisim}$ for 
equations between terms from $\ProcTermr$.
\begin{theorem}[Soundness]
\label{theorem-soundness-ACPet}
For all terms $t,t' \in \ProcTermr$, $t = t'$ is derivable from the 
axioms of \depACPetr\ only if $t \rbbisim t'$.
\end{theorem}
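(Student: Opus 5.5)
The proof follows the standard pattern for soundness of equational axiom systems. Derivability in conditional equational logic is closed under the inference rules, and \rbbisim\ is an equivalence relation (Proposition~\ref{proposition-rbbisim-equiv}) and a congruence for all operators with result sort and some argument sort $\Proc$ (Proposition~\ref{proposition-congr-deACPet}). Hence it suffices to show that each axiom of \depACPetr\ is valid with respect to \rbbisim\ in the sense of the preceding definition. The inference steps for reflexivity, symmetry, transitivity and replacement in contexts then preserve \rbbisim, by induction on the length of a derivation. Axioms IMP1 and IMP2 concern sorts other than $\Proc$. They matter only through substitution into process terms, and there they are harmless because the transition rules refer to data and conditions only through $\Sat{\gD}{\sigma(\cdot)}$ and through $\simeq$, both of which respect validity in $\gD$.

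For each remaining axiom, and each closed substitution instance $t = t'$ (where the antecedents hold up to \rbbisim\ in the conditional cases), I will give a relation $R$ containing $\tup{t,t'}$ and check the four transfer conditions. For most axioms I take $R$ to be the identity on $\ProcTermr$ together with all pairs of closed instances of the axiom, closed under symmetry (and, where needed, under equivalence closure). The action and termination conditions then follow by directly inspecting the rules in Table~\ref{sos-depACPet-1}. Residuals are either syntactically identical or again instances of the same axiom, or of an axiom treated earlier such as A5 for A4, or for CM3 and CM7 the residuals are syntactically identical. The probability condition uses the properties of $P_\sigma$ listed in the proof of Proposition~\ref{prop-pdf} and Proposition~\ref{proposition-congr-deACPet}. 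For pA1--pA6 and the distribution axioms pA4, pA5, pCM1--pCM6, pTE, pD, pT, pGC and pV, it comes down to meadow arithmetic: for pA2, $\pi\rho + (1-\pi\rho)\cdot\frac{(1-\pi)\rho}{1-\pi\rho} = \rho$, which also holds in the degenerate case $\pi\rho = 1$ because $0\minv = 0$. Propositions~\ref{prop-pstep-1} and~\ref{prop-pdf} show that terms without an initial probabilistic choice have $P_\sigma(t,[t]_R) = 1$, which makes the premises $\pstep{x}{\gprob{\sigma}{1}}{x'}$ harmless. For RDP I take the relation of pairs $\tup{\rec{X}{E},\rec{t}{E}}$, which works directly by the recursion rules. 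For RSP I use the claimed uniqueness of solutions of guarded linear recursive specifications up to \rbbisim. This requires a separate argument: take two solutions and build a relation relating the corresponding components, using guardedness so that only finitely many unguarded $\tau$-steps occur. Where this uniqueness is stated but unproven, I adopt it as a lemma.

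The main obstacle will be the conditional axioms CM1E$'$, pBE and pBED, because of the $\tau$-steps and the antecedents. For pBE the plan is to take the relation identifying $(\tau \seqc (u \altc w) \altc u) \paltc{\pi} s$ with $(u \altc w) \paltc{\pi} s$, and also $\tau \seqc (u \altc w) \altc u$ with $u \altc w$ as a \bbisim-pair. I will show this is a branching bisimulation after equivalence closure with \bbisim, and then verify the root condition at $\alpha \seqc \ldots$, where only the single $\alpha$-step occurs. The antecedent $u = u \altc u$ guarantees that $u$ and $w$ start with no probabilistic choice (else $P_\sigma$ of $u \altc u$ would square probabilities), so both sides have probability $1$ to themselves inside the choice. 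The antecedent $\encapa(u \altc w) = \dead$ rules out termination, so the clause $\nisterm{t_2'}{\sigma}$ in the first transfer condition is met when the $\tau$-step is matched by idling. pBED is handled the same way, with a case split on whether $\Sat{\gD}{\sigma(\phi)}$ holds. For CM1E$'$ the antecedents again exclude initial probabilistic choices, so the premises with probability $1$ in the $\parc$ rules are satisfied and the summands match rule by rule.
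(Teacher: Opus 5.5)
Your overall route is the paper's. Congruence (Proposition~\ref{proposition-congr-deACPet}) reduces the theorem to validity of each axiom, and each axiom gets an explicit witness relation: the equivalence closure of closed instances for the equational axioms, special relations for CM1E$'$, pBE and pBED, and for RSP the relation pairing $\theta(X_j)$ with $\rec{X_j}{E}$ whenever $\theta$ solves $E$ up to $\rbbisim$. Your two-level treatment of pBE is more careful than the paper's: a branching bisimulation for the inner terms, with root conditions checked only at $\alpha \seqc \ldots$. The paper's relation puts $\tup{\tau \seqc (t_1 \altc t_2) \altc t_1, t_1 \altc t_2}$ into a relation that must satisfy the root conditions, although the $\tau$-step on the left has in general no direct counterpart on the right.

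The concrete hole is in your CM1E$'$ step: the summands do not match ``rule by rule'' with identical residuals. If $\astep{y}{\gact{\sigma}{\alpha}}{y'}$, then $x \parc y$ moves to $x \parc y'$, whereas the summand that matches this step, $y \leftm x$, moves to $y' \parc x$. This pair is not an instance of any axiom, since there is no commutativity axiom for $\parc$. Also, $y'$ need not satisfy the antecedent $y' = y' \altc y'$. So you need commutativity of $\parc$ up to $\bbisim$ for arbitrary closed operands as a separate ingredient. This is why the paper's relation for CM1E$'$ contains the pairs $\tup{t_1 \parc t_2, t_2 \parc t_1}$; it restricts them to operands satisfying the antecedent, so it does not cover $y'$ either.

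A second point affects your plan and the paper's ``easy to check'' alike. With the definitions as given, the action conditions do not follow by inspection for pA3 and pA6. No rule in Table~\ref{sos-depACPet-1} has a transition of $x \paltc{\pi} y$ as its conclusion, so $a \paltc{1/2} a$ and $a \paltc{1} b$ have no action transitions, while $\astep{a}{\gact{\sigma}{a}}{\ep}$. Hence the closed instances $a \paltc{1/2} a = a$ of pA3 and $a \paltc{1} b = a$ of pA6 violate the root conditions for every candidate relation. Meadow arithmetic only settles the probability condition. No choice of witness relation repairs this; the definition of $\rbbisim$, which imposes root action conditions even on terms that still have to resolve a probabilistic choice, would have to change.
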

\begin{proof}
We write $\csi(\eqn)$, where $\eqn$ is an equation between \depACPetr\ 
terms of sort $\Proc$, for the binary relation on $\ProcTermr$ that 
consists of all tuples $\tup{t,t'}$ such that $t = t'$ is a closed 
substitution instances of $\eqn$.
Moreover, we write $R\eqvcl$, where $R$ is a binary relation, for the 
equivalence closure of $R$.

Because ${\rbbisim}$ is a congruence with respect to all operators from 
the signature of \depACPetr, only the validity of each axiom of 
\depACPetr\ with respect to ${\rbbisim}$ has to be proved.

For each axiom $\ax$ of \depACPetr, a rooted branching bisimulation 
$R_\ax$ witnessing the validity of $\ax$ with respect to ${\rbbisim}$ 
can be constructed as follows:
\begin{itemize}
\item
if $\ax$ is one of the axioms that is an equational axiom:
\begin{ldispl}
R_\ax = \csi(\ax)\eqvcl\;;
\end{ldispl}%
\item
if $\ax$ is CM1E$'$:
\begin{ldispl}
R_\ax = 
(\set{\tup{t_1 \parc t_2,
      t_1 \leftm t_2 \altc t_2 \leftm t_1 \altc t_1 \commm t_2 \altc
      \encapa(t_1) \seqc \encapa(t_2)} \where {}
\\ \phantom{R_\ax = (\{}
      t_1,t_2 \in \ProcTermr \Land
      t_1 \rbbisim t_1 \altc t_1 \Land t_2 \rbbisim t_2 \altc t_2}
\\ \phantom{R_\ax = (}\,
 {} \union 
 \{\tup{t_1 \parc t_2, t_2 \parc t_1} \where {}
\\ \phantom{R_\ax = (\, {} \union \{}
   t_1,t_2 \in \ProcTermr \Land
   t_1 \rbbisim t_1 \altc t_1 \Land t_2 \rbbisim t_2 \altc t_2\})\eqvcl\;;
\end{ldispl}%
\item
if $\ax$ is an instance of pBE:
\begin{ldispl}
R_\ax \\ \, {} = 
(\{\tup{\alpha \seqc
        ((\tau \seqc (t_1 \altc t_2) \altc t_1) \paltc{\pi} t_3),
        \alpha \seqc (t_1 \altc t_2) \paltc{\pi} t_3} \where {}
\\ \phantom{\, {} = (\{}
   t_1,t_2,t_3 \in \ProcTermr \Land
   t_1 \rbbisim t_1 \altc t_1 \Land t_2 \rbbisim t_2 \altc t_2 \Land
   \encapa(t_1 \altc t_2) \rbbisim \dead\} 
\\ \phantom{\, {} = (}\,
 {} \union
 \{\tup{(\tau \seqc (t_1 \altc t_2) \altc t_1) \paltc{\pi} t_3,
         (t_1 \altc t_2) \paltc{\pi} t_3} \where {}
\\ \phantom{\, {} = (\, {} \union \{}
   t_1,t_2,t_3 \in \ProcTermr \Land
   t_1 \rbbisim t_1 \altc t_1 \Land t_2 \rbbisim t_2 \altc t_2 \Land
   \encapa(t_1 \altc t_2) \rbbisim \dead\}
\\ \phantom{\, {} = (}\,
 {} \union
 \{\tup{\tau \seqc (t_1 \altc t_2) \altc t_1, t_1 \altc t_2} \where {}
\\ \phantom{\, {} = (\, {} \union \{}
   t_1,t_2 \in \ProcTermr \Land
   t_1 \rbbisim t_1 \altc t_1 \Land t_2 \rbbisim t_2 \altc t_2 \Land
   \encapa(t_1 \altc t_2) \rbbisim \dead\}
)\eqvcl\;;
\end{ldispl}%
\item
if $\ax$ is an instance of pBED: similar;
\item
if $\ax$ is an instance 
$\set{X_i = t_i \where i \in I} \Limpl 
 X_j = \rec{X_j}{\set{X_i = t_i \where i \in I}}$ 
($j \in I$) of RSP:
\begin{ldispl}
R_\ax = 
\set{\tup{\theta(X_j),\rec{X_j}{\set{X_i = t_i \where i \in I}}} \where 
\\ \phantom{R_\ax = \{}\, 
     j \in I \Land \theta \in \Theta \Land
     \LAND_{i \in I} \theta(X_i) \rbbisim \theta(t_i)}\eqvcl\;,
\end{ldispl}%
where 
$\Theta$ is the set of all functions from $\cX$ to $\ProcTermr$ and 
$\theta(t)$, where $\theta \in \Theta$ and $t \in \ProcTermr$, stands
for $t$ with, for all $X \in \cX$, all occurrences of $X$ replaced by 
$\theta(X)$.
\end{itemize}
For each equational axiom $\ax$ of \depACPetr, it is easy to check that 
the constructed relation $R_\ax$ is a rooted branching bisimulation 
witnessing, for each closed substitution instance $t = t'$ of $\ax$, 
$t \rbbisim t'$.
For each conditional equational axiom $\ax$ of \depACPetr, it is 
straightforward to check that the constructed relation $R_\ax$ is a 
rooted branching bisimulation witnessing, for each closed substitution 
instance $\set{t_i = t'_i \where i \in I} \Limpl t = t'$ of $\ax$, 
$t \rbbisim t'$ if $t_i \rbbisim t'_i$ for each $i \in I$.
\qed
\end{proof}

The axiom system of \depACPetr\ is incomplete with respect to $\rbbisim$ 
for equations between terms from $\ProcTermr$ and there is no 
straightforward way to rectify this.
The following example shows that the axiom system of \depACPetr\ is 
incomplete.
Let $\sigma \in \EvalMap$ be such that $\sigma(v) = \sigma(w) = 1$.
Then we have that 
\begin{ldispl}
 \eval{\sigma}
  (\ass{v}{v \mul w} \seqc (v = 1 \gc \ass{w}{v}) \parc \ass{v}{v / w}) 
\\ \quad {} \rbbisim
 \eval{\sigma}(\ass{v}{v \mul w} \seqc (v = 1 \gc \ass{w}{v})) \parc 
 \eval{\sigma}(\ass{v}{v / w})\;, 
\end{ldispl}%
but the equation
\begin{ldispl} 
 \eval{\sigma}
  (\ass{v}{v \mul w} \seqc (v = 1 \gc \ass{w}{v}) \parc \ass{v}{v / w}) 
\\ \quad {} =
 \eval{\sigma}(\ass{v}{v \mul w} \seqc (v = 1 \gc \ass{w}{v})) \parc 
 \eval{\sigma}(\ass{v}{v / w}) 
\end{ldispl}%
is not derivable from the axioms of \depACPetr.
This incompleteness cannot be resolved by adding the axiom schema
\begin{ldispl} 
\eval{\sigma}(x \parc y) = \eval{\sigma}(x) \parc \eval{\sigma}(y)
\end{ldispl}%
to the axiom system of \depACPetr\ because the instances of this axiom 
schema are not valid with respect to $\rbbisim$.
The following example shows this.
Let $\sigma \in \EvalMap$ be such that $\sigma(v) = \sigma(w) = 1$.
Then we have that 
\begin{ldispl}
 \eval{\sigma}
  (\ass{v}{v \mul w} \seqc (v = 1 \gc \ass{w}{v}) \parc \ass{v}{v + w}) 
\\ \quad {} \not\rbbisim
 \eval{\sigma}(\ass{v}{v \mul w} \seqc (v = 1 \gc \ass{w}{v})) \parc 
 \eval{\sigma}(\ass{v}{v + w})\;. 
\end{ldispl}%

The following aside is perhaps useful for a better understanding: 
the preceding two examples are related to the notion of interference 
freedom.
Under the given evaluation map $\sigma$, 
$\ass{v}{v \mul w} \seqc (v = 1 \gc \ass{w}{v})$ and $\ass{v}{v / w}$
do not interfere with each other, but 
$\ass{v}{v \mul w} \seqc (v = 1 \gc \ass{w}{v})$ and $\ass{v}{v + w}$ 
do interfere.
This suggest the following characterization of interference freedom.
Let $t,t' \in \ProcTermr$ and $\sigma \in \EvalMap$.
Then $t$ and $t'$ are interference free under $\sigma$ iff
$\eval{\sigma}(t \parc t') \rbbisim
 \eval{\sigma}(t) \parc \eval{\sigma}(t')$.

The axiom system of \depACPetr\ is not even complete in the following 
very limited sense:
\begin{quote}
for all $t, t' \in \ProcTermr$ in which no data parameterized action 
operator, no assignment action operator, and no guarded command operator 
occurs, $t = t'$ is derivable from the axioms of \depACPetr\ if 
$t \rbbisim t'$.
\end{quote}
The origin of this incompleteness is the fact that processes with one or 
more cycles of silent steps, possibly alternated with probabilistic 
choices, can be defined by combining guarded linear recursion and 
abstraction.
Usually, a term denoting such a process is rooted branching bisimulation 
equivalent to a term denoting a process without such cycles, whereas the 
corresponding equation is not derivable from the axioms of \depACPetr.
For example, we have that
\begin{ldispl} 
 \tau \seqc 
 \abstr{\set{b}}(\rec{X}{\set{X = a \paltc{\pi} (a \altc b \seqc X)}})
  \rbbisim
 \tau \seqc a\;,
\end{ldispl}%
but the equation 
\begin{ldispl} 
 \tau \seqc 
 \abstr{\set{b}}(\rec{X}{\set{X = a \paltc{\pi} (a \altc b \seqc X)}}) =
 \tau \seqc a
\end{ldispl}%
is not derivable from the axioms of \depACPetr.
It is likely that this incompleteness can be resolved by adding to the 
axiom system of \depACPetr\ one or more axiom schemas reminiscent to 
the cluster fair abstraction rule added to the axiom system of 
\deACPetr\ in~\cite{Mid21a}.
However, a suitable collection of such axiom schemas have not been found 
yet.

\section{Example: Leader Election}
\label{sect-example}

In this section, \depACPetr\ is used to model the probabilistic leader 
election algorithm for anonymous, bidirectional, asynchronous rings 
given by Bakhshi and others in~\cite{BFPP08a}.
The algorithm is based on a deterministic leader election algorithm for 
onymous, bidirectional, asynchronous rings given by Franklin 
in~\cite{Fra82a}.

First, the mentioned probabilistic version of Franklin's algorithm is 
informally described.

We consider a ring consisting of $n$ processes 
$\PROC{}{0},\ldots,\PROC{}{n-1}$ 
(for some $n \geq 2$) that do not have a unique identity, but know the 
ring size $n$, and $n$ channels between neighboring processes for 
bidirectional message-passing.
The message-passing from one process to a neighboring process is 
asynchronous, and the order of the messages may not be preserved.
A sent message is included in a message container of its destination 
process and will eventually be processed by that process.

Process $\PROC{}{j}$ maintains two (flexible) variables:
\begin{itemize}
\item
$\pid{j} \in \set{1,\ldots,k}$ (for some $k \geq 2$) is the not 
necessarily unique identity of process $\PROC{}{j}$; initially, 
$\pid{j}$ is undefined;
\item
$\bit{j} \in \set{0,1}$ is the number of the current election round of 
process $\PROC{}{j}$ modulo $2$; initially, $\bit{j}$ equals $1$.
\end{itemize}

Each message is a triple $(i,h,b)$ where: 
\begin{itemize}
\item
$i \in \set{1,\ldots,k}$ is the not necessarily unique identity of the 
process from which the message originates;
\item
$h \in \set{1,\ldots,n}$ is a hop counter, which initially has the value 
$1$, and which is increased by one every time the message is forwarded 
by a process;
\item
$b \in \set{0,1}$ is the number of the election round of the process 
from which the message originates modulo $2$.
\end{itemize}

The algorithm proceeds in election rounds, and each process is either 
\emph{active} or \emph{passive}. 
A passive process simply forwards messages after their hop counter has 
been incremented by one. 
Initially, each process is active.

At the start of an election round, each active process $\PROC{}{j}$ 
randomly selects an identity $\pid{j} \in \set{1,\ldots,k}$ and sends 
a message with the selected identity to each of its two neighbours; 
initially, the message is a triple of the form $(\pid{j},1,\bit{j})$. 
Next, $\PROC{}{j}$ receives a message originating from its 
closest active left neighbour and a message originating from its 
closest active right neighbour. 
If the hop counter of both messages is less than the ring size $n$ and 
neither of the messages has a larger identity than its identity, then 
$\PROC{}{j}$ starts a new election round. 
If the hop counter of both messages is less than the ring size $n$ and 
either of those messages has a larger identity than its identity, then 
$\PROC{}{j}$ becomes passive. 
If the hop counter of one of the messages is equal to the ring size $n$, 
$\PROC{}{j}$ becomes the leader and the algorithm terminates.

Thus, during each election round each active process sends two messages 
and receives two messages, whereas each passive process receives and 
forwards two messages.
The repetition of election rounds terminates when an active process 
receives a message from itself.
This is recognized by means of the hop counter of the message, which is 
equal to the ring size $n$ in that case.
Moreover, the fact that the hop counter of the message is equal to the 
ring size $n$ implies that the process is the only active process left 
and that its identity is the largest of the $n$ processes.  

Because the order of the messages may not be preserved, an old message 
may be overtaken by other messages in the ring.
This could in principle lead to a situation in which no single process 
becomes the leader. 
A solution to this problem is to number successive election rounds and 
to supply each message with a round number.
An infinite number of round numbers is not needed because it turns out 
to be sufficient to keep track of round numbers modulo $2$.
The reason for this is that round numbers modulo $2$ suffice to enforce 
message order preservation (see~\cite{BFPP08a}).

In order to use \deACPetr\ for the description of the algorithm, what is 
assumed to be given in Assumptions~\ref{assumption-basic-actions}, 
\ref{assumption-meadow}, \ref{assumption-data}, 
and~\ref{assumption-flex-vars} must actually be given.
It concerns the set $\Act$ of basic actions, the communication function
$\commf$, the signed cancellation meadow $\fM$, the 
signature $\sig_\gD$, the algebra $\gD$, and the sets $\ProgVar_1$, 
\ldots, $\ProgVar_n$ of flexible variables of sort $\Data_1$, \ldots, 
$\Data_n$, respectively.

\begin{sdef}
The set $\Act$ of basic actions is defined as follows:
\begin{ldispl}
\Act =
\Union_{j \in \set{0,\ldots,n-1}} 
\set{\psnd{1}{j},\psnd{2}{j},\chsnd{1}{j},\chsnd{2}{j},
     \chrcv{1}{j},\chrcv{2}{j},\prcv{1}{j},\prcv{2}{j},
     \chcomm{1}{j},\chcomm{2}{j},\pcomm{1}{j},\pcomm{2}{j}} \union 
\set{\leader}
\end{ldispl}%
and the communication function $\commf$ is defined as follows:
\begin{itemize}
\item
for all $j \in \set{0,\ldots,n-1}$:
\begin{ldispl}
\commf(\psnd{1}{j},\chrcv{1}{j}) = \chcomm{1}{j},\;
\commf(\psnd{2}{j},\chrcv{2}{j}) = \chcomm{2}{j},\;
\commf(\chsnd{1}{j},\prcv{1}{j}) = \pcomm{1}{j},\;
\commf(\chsnd{2}{j},\prcv{1}{j}) = \pcomm{2}{j},
\\
\commf(\chrcv{1}{j},\psnd{1}{j}) = \chcomm{1}{j},\;
\commf(\chrcv{2}{j},\psnd{2}{j}) = \chcomm{2}{j},\;
\commf(\prcv{1}{j},\chsnd{1}{j}) = \pcomm{1}{j},\;
\commf(\prcv{1}{j},\chsnd{2}{j}) = \pcomm{2}{j}\;;
\end{ldispl}%
\item
$\commf(a,b) = \dead$ for all other pairs $(a,b) \in \Acttd \x \Acttd$.
\end{itemize}
\end{sdef}

\begin{sdef}
The signed cancellation meadow $\fM$ is the meadow of rational numbers 
(see~\cite{BT07a,BM09g}) expanded with a signum operator.
\end{sdef}

\begin{sdef}
The signature $\sig_\gD$ consists of:
\begin{itemize}
\item
the sorts $\Data_1$, $\Data_2$, and $\Data_3$;
\item
the following constants and operators:
\begin{itemize} 
\item
$\const{0}{\Data_1}$, $\const{1}{\Data_1}$,
$\funct{+}{\Data_1 \x \Data_1}{\Data_1}$, and
$\funct{\bmod}{\Data_1 \x \Data_1}{\Data_1}$;
\item
$\funct{(\ph,\ph,\ph)}{\Data_1 \x \Data_1 \x \Data_1}{\Data_2}$;
\item 
$\const{\mempty}{\Data_3}$,
$\funct{\mset{\ph}}{\Data_2}{\Data_3}$,
$\funct{\munion}{\Data_3 \x \Data_3}{\Data_3}$,
$\funct{\mdiff}{\Data_3 \x \Data_3}{\Data_3}$, and
$\funct{\melem}{\Data_2 \x \Data_3}{\Bool}$
\end{itemize}
\end{itemize}
and the algebra $\gD$ is such that:
\begin{itemize}
\item
the carrier of sort $\Data_1$ in $\gD$ is $\Nat$;
\item 
the carrier of sort $\Data_2$ in $\gD$ is $\Nat \x \Nat \x \Nat$;
\item 
the carrier of sort $\Data_3$ in $\gD$ is the set of all finite 
multisets over $\Nat \x \Nat \x \Nat$;
\item
the interpre\-tation of $0$, $1$, $+$, and $\bmod$ in $\gD$ is as usual;
\item
the interpretation of $(\ph,\ph,\ph)$ in $\gD$ is the operation that 
constructs a triple of natural numbers from three natural numbers;
\item
the interpretation of $\mempty$, $\mset{\ph}$, $\munion$, $\mdiff$, and 
$\melem$ in $\gD$ is a counterpart of the usual set interpretation of 
$\emptyset$, $\set{\ph}$, $\union$, $\diff$, and $\in$, respectively,
that takes into account the multiplicity of the elements of multisets.
The relevant algebraic theory of multisets (also called bags) is given 
in Appendix~\ref{app-multisets}.
\end{itemize}
\end{sdef}

\begin{sdef}
The sets $\ProgVar_1$, $\ProgVar_2$, and $\ProgVar_3$ of flexible 
variables of sort $\Data_1$, $\Data_2$, and $\Data_3$, respectively, are
defined as follows:
\begin{ldispl}
\ProgVar_1 =
\Union_{j \in \set{0,\ldots,n-1}} \set{\pid{j},\bit{j},\lpid{j},\rpid{j}}\;,
\\
\ProgVar_2 = \emptyset\;,
\\
\ProgVar_3 =  
\Union_{j \in \set{0,\ldots,n-1}} \set{\ms{1}{j},\ms{2}{j}}\;.
\end{ldispl}%
\end{sdef}

\begin{snot}
For each $n \in \Nat$, the term $\ul{n}$ of sort $\Data_1$ is defined by 
induction on $n$ as follows: $\ul{0} = 0$ and $\ul{n+1} = \ul{n} + 1$.
\end{snot}

The leader election algorithm for ring size $n$ and process identities 
$\set{1,\ldots,k}$ is described by the following closed \deACPetr\ term:
\begin{ldispl}
\encap{H}(\Parc{j = 1}{n} (\PROC{}{j-1} \parc \CHAN{}{j-1}))\;,
\end{ldispl}%
where, for each $j \in \set{0,\ldots,n-1}$, $\PROC{}{j}$ is defined by 
the recursive specification that consists of the following equations:
\begin{ldispl}
\PROC{}{j} = \ass{\bit{j}}{1} \seqc \PROC{1}{j}\;,
\eqnsep
\PROC{1}{j} = 
\PRC{i = 1}{k}{1/k}
{
\ass{\pid{j}}{\ul{i}} \seqc \psnd{1}{j}((\pid{j},1,\bit{j})) \seqc
\psnd{2}{(j - 1) \bmod n}((\pid{j},1,\bit{j})) \seqc \PROC{2}{j}
}\;,
\eqnsep
\PROC{2}{j} =
\Altc{i = 1}{k} \Altc{h = 1}{n} 
\prcv{1}{(j - 1) \bmod n}((\ul{i},\ul{h},\bit{j})) 
\\ \phantom{\PROC{2}{j} = \Altc{i = 1}{k} \Altc{h = 1}{n}} 
\hsp{-1}
 {} \seqc
(\ul{h} = \ul{n} \gc \leader \altc 
 \Lnot \ul{h} = \ul{n} \gc \ass{\lpid{j}}{\ul{i}} \seqc \PROC{3}{j}) 
\\ \phantom{\PROC{2}{j}} 
 {} \altc
\Altc{i = 1}{k} \Altc{h = 1}{n} 
\prcv{2}{j}((\ul{i},\ul{h},\bit{j})) 
\\ \phantom{\PROC{2}{j} = \Altc{i = 1}{k} \Altc{h = 1}{n}} 
\hsp{-1}
 {} \seqc
(\ul{h} = \ul{n} \gc \leader \altc 
 \Lnot \ul{h} = \ul{n} \gc \ass{\rpid{j}}{\ul{i}} \seqc \PROC{4}{j})\;,
\eqnsep
\PROC{3}{j} =
\Altc{i = 1}{k} \Altc{h = 1}{n} 
\prcv{2}{j}((\ul{i},\ul{h},\bit{j})) 
\\ \phantom{\PROC{3}{j} = \Altc{i = 1}{k} \Altc{h = 1}{n}} 
\hsp{-1}
 {} \seqc
(\ul{h} = \ul{n} \gc \leader \altc 
 \Lnot \ul{h} = \ul{n} \gc \ass{\rpid{j}}{\ul{i}} \seqc \PROC{5}{j})\;,
\eqnsep
\PROC{4}{j} =
\Altc{i = 1}{k} \Altc{h = 1}{n} 
\prcv{1}{(j - 1) \bmod n}((\ul{i},\ul{h},\bit{j})) 
\\ \phantom{\PROC{4}{j} = \Altc{i = 1}{k} \Altc{h = 1}{n}} 
\hsp{-1}
 {} \seqc
(\ul{h} = \ul{n} \gc \leader \altc 
 \Lnot \ul{h} = \ul{n} \gc \ass{\lpid{j}}{\ul{i}} \seqc \PROC{5}{j})\;,
\end{ldispl}
\begin{ldispl}
\PROC{5}{j} = \phantom{\Lnot\!}
(\lpid{j} > \pid{j} \Lor \rpid{j} > \pid{j}) \gc \PROC{6}{j} 
\\ \phantom{\PROC{5}{j}}
 {} \altc
\Lnot (\lpid{j} > \pid{j} \Lor \rpid{j} > \pid{j}) \gc 
\ass{\bit{j}}{(\bit{j} + 1) \bmod \ul{2}} \seqc \PROC{1}{j}\;,
\eqnsep
\PROC{6}{j} = 
\Altc{i = 1}{k} \Altc{h = 1}{n} \Altc{b = 0}{1} 
(
\prcv{1}{(j - 1) \bmod n}((\ul{i},\ul{h},\ul{b})) \seqc
\psnd{1}{j}((\ul{i},\ul{h} + 1,\ul{b})) \seqc \PROC{6}{j} 
\\ 
\phantom{\PROC{6}{j} = \Altc{i = 1}{k} \Altc{h = 1}{n} \Altc{b = 0}{1} (}
\hsp{-1.25}
 {} \altc
\prcv{2}{j}((\ul{i},\ul{h},\ul{b})) \seqc
\psnd{2}{(j - 1) \bmod n}((\ul{i},\ul{h} + 1,\ul{b})) \seqc 
\PROC{6}{j}
)
\;,
\end{ldispl}%
for each $j \in \set{0,\ldots,n-1}$, $\CHAN{}{j}$ is defined by 
the recursive specification that consists of the following equations:
\begin{ldispl}
\CHAN{}{j}  = 
\ass{\ms{1}{j}}{\mempty} \seqc \ass{\ms{2}{j}}{\mempty} \seqc \CHAN{'}{j}
\;,
\eqnsep
\CHAN{'}{j} =
\Altc{i = 1}{k} \Altc{h = 1}{n} \Altc{b = 0}{1} 
\\
\phantom{\CHAN{}{j} = {}} \hsp{1.6}
(
\chrcv{1}{j}(\ul{i},\ul{h},\ul{b}) \seqc 
\ass{\ms{1}{j}}{\ms{1}{j} \munion \mset{(\ul{i},\ul{h},\ul{b})}} \seqc
\CHAN{'}{j} 
\\ 
\phantom{\CHAN{}{j} = {}} \hsp{0.75}
 {} \altc
\chrcv{2}{j}(\ul{i},\ul{h},\ul{b}) \seqc 
\ass{\ms{2}{j}}{\ms{2}{j} \munion \mset{(\ul{i},\ul{h},\ul{b})}} \seqc
\CHAN{'}{j} 
\\ 
\phantom{\CHAN{}{j} = {}} \hsp{0.75}
 {} \altc
(\ul{i},\ul{h},\ul{b}) \in \ms{1}{j} \gc
 \chsnd{1}{j}(\ul{i},\ul{h},\ul{b}) \seqc 
 \ass{\ms{1}{j}}{\ms{1}{j} \mdiff \mset{(\ul{i},\ul{h},\ul{b})}} \seqc
 \CHAN{'}{j} 
\\ 
\phantom{\CHAN{}{j} = {}} \hsp{0.75}
 {} \altc
(\ul{i},\ul{h},\ul{b}) \in \ms{2}{j} \gc
 \chsnd{2}{j}(\ul{i},\ul{h},\ul{b}) \seqc 
 \ass{\ms{2}{j}}{\ms{2}{j} \mdiff \mset{(\ul{i},\ul{h},\ul{b})}} \seqc
 \CHAN{'}{j}
)\;,
\end{ldispl}%
and $H$ is defined as follows:
\begin{ldispl}
H =
\Union_{j \in \set{0,\ldots,n-1}} \Union_{e \in \DataTerm_2}
\\ \qquad \qquad
\set{\psnd{1}{j}(e),\psnd{2}{j}(e),\chsnd{1}{j}(e),\chsnd{2}{j}(e),
     \chrcv{1}{j}(e),\chrcv{2}{j}(e),\prcv{1}{j}(e),\prcv{2}{j}(e)})\;.
\end{ldispl}%

The recursive specifications of $\PROC{}{j}$ and $\CHAN{}{j}$ presented 
above are not guarded linear recursive specifications.
However, they can easily be transformed into guarded linear recursive 
specifications.
The resulting guarded linear recursive specifications of $\PROC{}{j}$ 
and $\CHAN{}{j}$ are given in Appendix~\ref{app-glrs}.
They are less easy to comprehend, but useful when analyzing the 
algorithm based on the axiom system of \deACPetr.

We conclude this section with two claims that will not be proven.

\begin{sclm}
\label{claim-leader-elect-init}
The following equation can be derived from the axioms of \depACPetr\ for
all $\sigma,\sigma' \in \EvalMap$:
\begin{ldispl}
\eval{\sigma}(\encap{H}(\Parc{j = 1}{n}
 (\PROC{}{j-1} \parc \CHAN{}{j-1}))) = 
\eval{\sigma'}(\encap{H}(\Parc{j = 1}{n}
 (\PROC{}{j-1} \parc \CHAN{}{j-1})))\;.
\end{ldispl}%
\end{sclm}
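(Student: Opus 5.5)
The idea is that the initial evaluation map is irrelevant: every flexible variable is assigned before it is read. Each $\PROC{}{j}$ begins with $\ass{\bit{j}}{1}$, and each $\CHAN{}{j}$ begins with $\ass{\ms{1}{j}}{\mempty} \seqc \ass{\ms{2}{j}}{\mempty}$. The variables $\pid{j}$, $\lpid{j}$, and $\rpid{j}$ are assigned in $\PROC{1}{j}$, $\PROC{2}{j}$, $\PROC{3}{j}$, and $\PROC{4}{j}$ before they are read in $\PROC{5}{j}$. I would therefore show that both sides reduce, by the axioms, to $\eval{\sigma_0}$ applied to one and the same term, where $\sigma_0$ is determined by the common initial assignments. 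The plan has the following steps.

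First, replace the recursive specifications by their guarded linear versions from Appendix~\ref{app-glrs}, using RDP and RSP. Then unfold $\encap{H}(\Parc{j = 1}{n}(\PROC{}{j-1} \parc \CHAN{}{j-1}))$ a few steps by expansion: use CM1E$'$ (its antecedents follow from the head normal forms), CM2E--CM12, CM7Da--CM7Df, GC8--GC10, pCM1--pCM6, and D0--D4, pD. Initially the only enabled, unblocked actions are the $3n$ initialising assignments $\ass{\bit{j}}{1}$, $\ass{\ms{1}{j}}{\mempty}$, and $\ass{\ms{2}{j}}{\mempty}$; every other first action is a send or receive in $H$ and is encapsulated to $\dead$ (by D2, A6). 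The expansion thus gives a sum over all interleavings of these $3n$ assignments, each followed by further behaviour. This holds because the first three equations of $\PROC{}{j}$ and $\CHAN{}{j}$ are sequences of assignments, and the first non-assignment actions ($\psnd{}{}$ after $\ass{\pid{j}}{\ul i}$ inside the probabilistic choice, and the channel receives) are in $H$.

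Second, push $\eval{\sigma}$ inward with V5, pV, V4, V6. Every assignment here has a closed, variable-free right-hand side, so V4 turns $\ass{v}{e}$ into $\ass{v}{e} \seqc \eval{\sigma\mapupd{e}{v}}(\ldots)$. After all $3n$ initial assignments, along any interleaving, the map is $\sigma\mapupd{\ldots}{\ldots}$ with $\bit{j},\ms{1}{j},\ms{2}{j}$ fixed for all $j$. The remaining variables $\pid{j},\lpid{j},\rpid{j}$ still depend on $\sigma$, but they are dead values. To handle this I would prove an auxiliary lemma by RSP. Let $T$ be the family of states of the (linearised) system in which each $\pid{j}$ is assigned before it is read, and similarly $\lpid{j}$ and $\rpid{j}$. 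Then for all $\sigma_1,\sigma_2$ that agree on the variables that are live in a given state, $\eval{\sigma_1}(t)=\eval{\sigma_2}(t)$. To prove it, show that both families $\{\eval{\sigma}(X)\}$ satisfy the same guarded linear recursive specification, indexed by pairs of a state $X$ and the restriction of $\sigma$ to the variables live in $X$; RSP then identifies them. Finally, the interleaving sums obtained for $\sigma$ and $\sigma'$ are syntactically identical up to these evaluation maps, so the lemma finishes the proof.

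The main obstacle is this lemma. Stating the liveness invariant precisely over the parallel, encapsulated composition requires a linearisation of the whole system (the state space is finite since $k,n$ are fixed and the multisets are bounded), or at least a guarded linear specification for the evaluated system. The specification must also be built so that V6 evaluates the guards, for example $\lpid{j} > \pid{j}$ and $(\ul{i},\ul{h},\ul{b}) \in \ms{1}{j}$, to identical closed conditions under both maps, which IMP2 then equates. I would first carry this out for $n=2$ to fix the shape of the index set, and then generalise.
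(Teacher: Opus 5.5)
The paper leaves this claim unproven. It only remarks that all flexible variables are assigned before they are used. Your live-variable argument via RSP is the natural way to make that remark formal, but as sketched its key lemma has two genuine gaps.

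\textbf{First gap: the index set is infinite.} RSP only applies to \emph{finite} guarded linear recursive specifications. Evaluation maps store closed terms, not values: by V4, $\eval{\sigma}(\ass{v}{e}\seqc x)$ continues as $\eval{\sigma\mapupd{\sigma(e)}{v}}(x)$, where $\sigma(e)$ is the syntactic homomorphic image of $e$. So every execution of $\ass{\bit{j}}{(\bit{j}+1)\bmod\ul{2}}$, or of a channel update $\ass{\ms{1}{j}}{\ms{1}{j}\munion\mset{(\ul{i},\ul{h},\ul{b})}}$, stores a strictly larger term. Since election rounds can repeat without bound, infinitely many pairs of a state and a restricted map are reachable, even though the channel contents are semantically bounded.

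Finiteness of the control states needs no argument, because the multisets are held in flexible variables; it is the maps that need work. The missing step is to normalise at each assignment. For $e$ free of flexible variables, V4 read backwards, then IMP1 under $\assop{v}$, then V4 again give $\ass{v}{e}\seqc\eval{\sigma\mapupd{e}{v}}(x)=\eval{\sigma}(\ass{v}{e}\seqc x)=\eval{\sigma}(\ass{v}{\hat e}\seqc x)=\ass{v}{\hat e}\seqc\eval{\sigma\mapupd{\hat e}{v}}(x)$, for a canonical $\hat e$ with $\Sat{\gD}{\fol{e = \hat e}}$. Only after this does boundedness of the channels give finitely many reachable maps. You assert that boundedness but still have to prove it. It holds because each process originates two messages per round and consumes two before starting the next, so at most $2n$ messages are in transit.

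\textbf{Second gap: the family indexed by $(X,\rho)$ is not well defined.} Here $\rho$ is the restriction to the set $L_X$ of variables live at $X$. The same $(X,\rho)$ is reached with different dead values: initial values, or $\pid{j}$, $\lpid{j}$, $\rpid{j}$ left over from earlier rounds. Checking that a chosen representative $\eval{\sigma_{X,\rho}}(X)$ satisfies the equation for $(X,\rho)$ then requires $\eval{\sigma_{X,\rho}\mapupd{e}{v}}(X')=\eval{\sigma_{X',\rho'}}(X')$, which is the lemma itself.

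Use two specifications instead:
\begin{itemize}
\item $E_\Sigma$, indexed by $(X,\sigma'')$, with $\sigma''$ ranging over the finite set $\Sigma$ of maps reachable from $\sigma$ and $\sigma'$ after normalisation;
\item the quotient $E^*$, indexed by $(X,\rho)$.
\end{itemize}
Both $\{\eval{\sigma''}(X)\}$ and the family assigning $\rec{Z_{X,\rho}}{E^*}$ to $(X,\sigma'')$, where $\rho$ is the restriction of $\sigma''$ to $L_X$, solve $E_\Sigma$. The latter does so by RDP, because guards and action data read only live variables and $L_{X'}\subseteq L_X\cup\{v\}$, with $v$ the variable assigned on the step, if any. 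Hence RSP gives $\eval{\sigma''}(X)=\rec{Z_{X,\rho}}{E^*}$. At the initial state no variable is live, so the claim follows directly.

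This also makes your first expansion step superfluous, which is fortunate because that step is inaccurate. Actions such as $\ass{\pid{j}}{\ul{i}}$, and communications such as $\chcomm{1}{j}(\ldots)$, are not in $H$. They can occur before other components have initialised, so the $3n$ initialising assignments do not form a common prefix.
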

Claim~\ref{claim-leader-elect-init} comes as no surprise if one realizes
that all flexible variables are assigned to before they are used.

\begin{sclm}
\label{claim-leader-elect-correct}
The following holds for all $\sigma \in \EvalMap$:
\begin{ldispl}
\tau \seqc
\abstr{I}(\eval{\sigma}(\encap{H}(\Parc{j = 1}{n} 
 (\PROC{}{j-1} \parc \CHAN{}{j-1})))) \rbbisim 
\tau \seqc \leader\;,
\end{ldispl}%
where $I$ is defined as follows:
\begin{ldispl}
I =
\Union_{j \in \set{0,\ldots,n-1}} 
\\ \phantom{I = {}} \quad \hsp{.25}
(\hsp{.45}
 \Union_{e \in \DataTerm_1}
 \set{\ass{\pid{j}}{e},\ass{\bit{j}}{e},
      \ass{\lpid{j}}{e},\ass{\rpid{j}}{e}}
\\ \phantom{I = {}} \quad
  {} \union
 \Union_{e \in \DataTerm_3}
 \set{\ass{\ms{1}{j}}{e},\ass{\ms{2}{j}}{e}}
\\ \phantom{I = {}} \quad
  {} \union
 \Union_{e \in \DataTerm_2}
 \set{\chcomm{1}{j}(e),\chcomm{2}{j}(e),\pcomm{1}{j}(e),\pcomm{2}{j}(e)}
)\;.
\end{ldispl}%
\end{sclm}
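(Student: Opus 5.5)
Since the axiom system of \depACPetr\ is incomplete (the paper shows this with $\tau\seqc\abstr{\set{b}}(\ldots)\rbbisim\tau\seqc a$), I would not try to derive the equation $\tau \seqc \abstr{I}(\ldots) = \tau \seqc \leader$. Instead I would show $\rbbisim$ directly by building a rooted branching bisimulation. By Claim~\ref{claim-leader-elect-init} and soundness (Theorem~\ref{theorem-soundness-ACPet}), the left-hand side does not depend on $\sigma$, so I fix one $\sigma$.

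\textbf{Step 1: reduce the left-hand side to a finite probabilistic transition system.} Using the axioms (expansion via CM1E$'$, pCM1--pCM6, V1--V6, D0--D4, pV, pD, CM7Da--CM7Df and RDP), I would rewrite $\eval{\sigma}(\encap{H}(\Parc{j=1}{n}(\PROC{}{j-1}\parc\CHAN{}{j-1})))$ as a guarded linear recursive specification. Its variables are indexed by global states: the control location of each $\PROC{}{j}$, the values of $\pid{j},\bit{j},\lpid{j},\rpid{j}$, and the multisets $\ms{1}{j},\ms{2}{j}$. Finiteness depends on an invariant I would establish first. By the round-parity argument of~\cite{BFPP08a}, each channel holds at most a bounded number of messages, and all hop counters stay $\leq n$. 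This makes the reachable state space finite.

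\textbf{Step 2: define the relation.} Let $R$ relate $\tau\seqc\abstr{I}(\cdot)$ to $\tau\seqc\leader$ at the root. Below the root, $R$ relates every reachable state before a $\leader$ action to $\leader$, and every state after it to $\ep$. Every non-$\leader$ action lies in $I$, so all transitions other than $\leader$ become $\tau$ after abstraction.

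\textbf{Step 3: check the transfer conditions.} There are three things to verify.
\begin{itemize}
\item[(a)] Safety: $\leader$ is performed at most once, and once it is performed the system can only terminate or deadlock silently. This is the correctness argument of~\cite{BFPP08a}, restated in terms of the invariants on hop counters and round bits.
\item[(b)] Inertness: every $\tau$-transition between pre-leader states stays within one class, so $\leader$ remains reachable.
\item[(c)] The probability condition $P_\sigma(t_1,[t]_R)=P_\sigma(t_2,[t]_R)$: since all pre-leader states form one class and probabilistic choices stay inside it, this holds trivially.
\end{itemize}

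\textbf{Main obstacle.} The hardest part is (b) combined with the fact that the branching bisimulation transfer condition requires a $\leader$ step to be \emph{reached}, not merely reached with probability $1$. The system has $\tau$-cycles, for example rounds in which all remaining active processes keep drawing the same identity. As the paper's own example with $\tau\seqc\abstr{\set{b}}(\rec{X}{\ldots})\rbbisim\tau\seqc a$ shows, the intended definition tolerates such cycles only when they pass through probabilistic choices. So I would first check whether the transfer condition requires only that some $\tau$/probabilistic path leads to a state performing $\leader$. If so, it suffices to show that from every reachable pre-leader state there is a path of positive probability to $\leader$. To construct that path, I would choose the branch in which exactly one active process draws the maximal identity and then schedule message deliveries fairly.

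If the definition turns out to demand more, namely probabilistic fairness, I would instead argue via a ranking function: the number of active processes, which does not increase and strictly decreases with positive probability each round.
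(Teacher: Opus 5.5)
Your Step~2 fails at the post-$\leader$ states, and the failure means the statement cannot be proved as printed. The channels $\CHAN{}{j}$ (and the passive $\PROC{6}{j}$) are recursions without an $\ep$-summand. Termination of $x \parc y$ requires termination of both components, and $\encap{H}$, $\eval{\sigma}$ and $\abstr{I}$ only pass termination through. Hence no state $t$ of the left-hand side ever satisfies $\isterm{t}{\sigma}$. A branching bisimulation is symmetric and $\isterm{\ep}{\sigma}$ holds, so relating post-$\leader$ states to $\ep$ violates the termination clause. There is a second, independent obstruction: silent steps after $\leader$ (e.g.\ forwarding of the leader's second message) cannot be matched by $\ep$ standing still, because that option requires $\nisterm{t_2'}{\sigma}$. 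Your item ``terminate or deadlock silently'' treats $\ep$ and $\dead$ as interchangeable, but this semantics separates them.

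In fact no witness exists at all. Under the non-degenerate reading of $\sord{\sigma}$ discussed below, every state through which the left-hand side reaches a $\leader$-step is forced into the class of $\ep \seqc \leader$. Silent steps must be matched by standing still, and positive-probability successors are forced there by the probability clause. The only $\leader$-successor of $\ep \seqc \leader$ is $\ep$. (If no $\leader$-step were reachable, the one on the right could not be matched either.) What your method does prove is the version with right-hand side $\tau \seqc \leader \seqc \dead$, relating post-$\leader$ states to $\dead$. That uses exactly your invariant that only silent steps remain after $\leader$.

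On your main obstacle: read literally, $\sord{\sigma}$ is the total relation. The definition admits $\pi = 0$, and Proposition~\ref{prop-pstep-2} (via the last rule of Table~\ref{sos-depACPet-2}) gives some $\pi$ with $\pstep{t}{\gprob{\sigma}{\pi}}{t'}$ for all $t,t'$. The first transfer condition is then met trivially by jumping to $t_1$ itself; e.g.\ $\tau \seqc a \rbbisim \tau \seqc b$ would hold for distinct $a,b \in \Act$. So the definition must be read with $\pi \neq 0$. Under that reading, (b) requires exactly the possibility property you isolate: from every reachable pre-$\leader$ state, a finite sequence of $\tau$-steps and positive-probability resolutions reaches a state that can perform $\leader$. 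No almost-sure termination or fairness is involved, so the ranking-function fallback is unnecessary. Also, contrary to your remark, pure $\tau$-cycles are tolerated too.

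However, you cannot simply import the correctness argument of Bakhshi et al.; possibility and safety must be checked on the model as given. As printed, the communication function pairs $\chsnd{2}{j}$ with $\prcv{1}{j}$ (in both orders) where $\prcv{2}{j}$ is meant. So $\prcv{2}{j}$ never synchronizes, and read literally every process gets stuck in its first round. Finally, neither Claim~\ref{claim-leader-elect-init} (itself unproved) nor the finiteness argument of Step~1 is needed: fix an arbitrary $\sigma$ and define the relation directly on the reachable terms of the operational semantics.
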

Claim~\ref{claim-leader-elect-correct} states that, when considering all 
assignments and communications that have taken place in the ring during 
the leader election process unobservable, the leader election process, 
after some unobservable activity, always leads to the selection of a 
leader.
This is essentially the same claim as the claim made in~\cite{BFPP08a}.

Because the axiom system of \depACPetr\ is not complete with respect 
to~$\rbbisim$, Claim~\ref{claim-leader-elect-correct} should not be 
confused with the following claim:
\begin{quote}
The following equation can be derived from the axioms of \depACPetr\ for
all  $\sigma \in \EvalMap$:
\begin{ldispl}
\tau \seqc
\abstr{I}(\eval{\sigma}(\encap{H}(\Parc{j = 1}{n} 
 (\PROC{}{j-1} \parc \CHAN{}{j-1})))) = 
\tau \seqc \leader\;.
\end{ldispl}%
\end{quote}

\section{Concluding Remarks}
\label{sect-conclusions}

In this paper, an extension of the imperative process algebra proposed 
in~\cite{Mid21a} with probabilistic choice operators is presented that 
rests on the principle that probabilistic choices are always resolved 
before choices involved in alternative composition and parallel 
composition are resolved.
This extension has been devised, among other things, to be used for 
modeling and analyzing algorithms that are important in the area of 
distributed computing.
Many canonical problems in that area, including the leader election 
problem, call for a probabilistic algorithm.

The suitability of the presented probabilistic imperative process 
algebra for modeling the leader election algorithm for anonymous rings 
given in~\cite{BFPP08a} is demonstrated in this paper.
The suitability of this process algebra for analyzing this leader 
election algorithm is not demonstrated in this paper. 
It is abundantly clear that such analysis is virtually impossible 
without supporting automated tools for this process algebra.

In~\cite{BFPP08a}, it is stated that the probabilistic leader election 
algorithm given in that paper is modeled and analyzed using the $\mu$CRL 
specification language and toolset~\cite{BFGLLP01a}.
Because the $\mu$CRL specification language cannot handle probabilistic 
processes, probabilistic choices had to be replaced by non-deterministic 
choices.
However, it is claimed that it could still be verified with the $\mu$CRL 
toolset that, with probability one, eventually always a leader is 
elected.
I am skeptical of this claim, because it seems to contradict the common 
view that probabilistic choices and non-deterministic choices are 
fundamentally different from each other.
 
In this paper, I build on earlier work on ACP. 
With the exception of axioms A3$'$, A3$''$, CM1E$'$, pBE, and the axioms 
for the probabilistic choice operators, the axioms of \pACPet\ are taken 
from Section 5.3 of~\cite{BW90}.
Axioms A3$'$, A3$''$, CM1E$'$, pBE, and the axioms for the probabilistic 
choice operators are based on work presented in~\cite{Geo11a}.
The axioms for the guarded command operator are basically taken 
from~\cite{BB92c}.
The  evaluation operators are inspired by~\cite{BM05a} and the 
data parameterized action operators are inspired by~\cite{BM09d}.

\appendix

\section{Appendix: An Algebraic Theory of Finite Multisets}
\label{app-multisets}

In this appendix, an algebraic theory of finite multisets over a set is 
presented that is relevant to the example on leader election given in 
Section~\ref{sect-example}.
Multisets are also called bags.

The presented algebraic theory of finite multisets is parameterized by a 
theory of the multiset elements that must satisfy the following 
conditions:
\begin{itemize}
\item
its signature includes a sort $\Bool$ of \emph{booleans} and a sort 
$\Elem$ of \emph{elements};
\item
its signature includes \emph{boolean} constants $\const{\Btrue}{\Bool}$ 
and $\const{\Bfalse}{\Bool}$;
\item
its signature includes a binary \emph{disjunction} operator 
$\funct{\Lor}{\Bool \x \Bool}{\Bool}$;
\item
its signature includes a binary \emph{equality} operator 
$\funct{\Eeq}{\Elem \x \Elem}{\Bool}$;
\item
the following unconditional and conditional equations are derivable from 
its axiom system:
\begin{ldispl}
\begin{eqncol}
\Btrue \Lor \Btrue   = \Btrue\;,                      \\
\Btrue \Lor \Bfalse  = \Btrue\;,                      \\
\Bfalse \Lor \Btrue  = \Btrue\;,                      \\
\Bfalse \Lor \Bfalse = \Bfalse\;,                     
\end{eqncol}
\qquad\quad
\begin{eqncol}
e \Eeq e   = \Btrue\;,                                \\
e \Eeq e'  = e' \Eeq e\;,                             \\
e \Eeq e'  = \Btrue,\, e' \Eeq e'' = \Btrue \Limpl 
e \Eeq e'' = \Btrue\;,
\end{eqncol}
\end{ldispl}%
where 
$e$, $e'$, and $e''$ stand for arbitrary variables of sort $\Elem$.
\end{itemize}

The (parameterized) signature of the parameterized algebraic theory of 
finite multisets consists of the sorts, constants, and operators 
mentioned above and the following sorts, constants, and operators:
\begin{itemize}
\item
a sort $\Mset$ of \emph{multisets};
\item
an \emph{empty multiset} constant $\const{\mempty}{\Mset}$;
\item
a unary \emph{singleton multiset} operator 
$\funct{\mset{\ph}}{\Elem}{\Mset}$;
\item
a binary \emph{multiset union} operator 
$\funct{\munion}{\Mset \x \Mset}{\Mset}$;
\item
a binary \emph{multiset difference} operator 
$\funct{\mdiff}{\Mset \x \Mset}{\Mset}$;
\item
a binary \emph{is-element-of} operator 
$\funct{\melem}{\Elem \x \Mset}{\Bool}$.
\end{itemize}

The (parameterized) axiom system of the parameterized algebraic theory 
of finite multisets consists of the unconditional and conditional 
equations mentioned above and the unconditional and conditional 
equations presented in Table~\ref{eqns-multiset}.
\begin{table}[!t]
\caption{Axioms of finite multisets}
\label{eqns-multiset}
\begin{eqntbl}
\begin{eqncol}
m \munion \mempty = m                                                 \\
m \munion (m' \munion m'') = (m \munion m') \munion m''               \\
m \munion m' = m' \munion m                                      \eqnsep
\mempty \mdiff m = \mempty                                            \\
m \mdiff \mempty = m                                                  \\
m \mdiff (m' \munion m'') = (m \mdiff m') \mdiff m''                  \\
(m \munion \mset{e}) \mdiff \mset{e} = m                              \\
e \Eeq e' = \Bfalse \Limpl 
(m \munion \mset{e}) \mdiff \mset{e'} = 
(m \mdiff \mset{e'}) \munion \mset{e}                            \eqnsep

e \in \mempty = \Bfalse                                               \\
e \in \mset{e} = \Btrue                                               \\
e \Eeq e' = \Bfalse \Limpl 
e \in \mset{e'} = \Bfalse                                             \\
e \in m \munion m' = e \in m \Lor e \in m'                            
\end{eqncol}
\end{eqntbl}
\end{table}
In this table,
$m$, $m'$, and $m''$ stand for arbitrary variables of sort $\Mset$ and
$e$ and $e'$ stand for arbitrary variables of sort $\Elem$.

Here, we are only interested in the models of an instantiation of this 
theory of multisets that satisfy the following conditions:
\begin{itemize}
\item
for each sort from the signature of the instantiation, each member of 
the carrier of that sort can be represented by a closed term of that 
sort;
\item
for each sort from the signature of the instantiation, each equation 
between closed terms of that sort holds only if it is derivable from the 
axiom system of the instantiation.
\end{itemize}
In other words, we are only interested in initial models.

It is easy to develop a theory of triples of natural numbers that 
satisfies the conditions imposed above on a theory of multiset elements.
Now take the actualization of the signature $\sig_\gD$ and the algebra 
$\gD$ described in Section~\ref{sect-example}.
Then the carrier of the sorts from that actualization of $\sig_\gD$ and 
the interpretation of the constants and operators from that 
actualization of $\sig_\gD$ intended in Section~\ref{sect-example} are 
provided by the initial model of the instantiation of the presented 
theory of finite multisets with that theory of triples of natural 
numbers.

In publications on multisets, the multiset union operator $\munion$ is 
also called additive union operator, addition operator, and sum 
operator.
This is usually done in order not to confuse it with another counterpart 
of the set union operator $\union$ 
(see e.g.~\cite{GM96b,JGT01a,Ade10a}).
That other operator is then called maximal union operator or union 
operator.
Using as usual the symbol $\union$ for that operator, it can be defined
in terms of $\munion$ and $\mdiff$ as follows:
$m \union m' = (m \mdiff m') \munion m'$ (see~\cite{Ade10a}).
On the other hand, the operator $\munion$ cannot be defined in terms of 
$\union$ and $\mdiff$ (see also~\cite{Ade10a}).

In~\cite{Bli89a}, a general first-order theory of (finite and infinite) 
multisets is developed that extends Zermelo-Fraenkel set theory (with 
the axiom of choice) to multisets.
The algebraic theory of finite multisets presented in this appendix is 
in accordance with that theory.

\section{Guarded Linear Recursive Specifications of $\PROC{}{j}$ and
         $\CHAN{}{j}$}
\label{app-glrs}

In this appendix, the linearized version of  the recursive 
specifications of $\PROC{}{j}$ and $\CHAN{}{j}$ presented in 
Section~\ref{sect-example} are given.

Let $E$ be a recursive specification over \depACPet, and
let $E'$ be a guarded linear recursive specification over \depACPet\ 
such that $\vars(E) \subseteq \vars(E')$.
Then, under rooted branching bisimulation equivalence, $E$ has a unique
solution and, for all $X \in \vars(E)$, the $X$-component of the unique 
solution of $E$ is the same as $X$-component of the unique solution of 
$E'$ if the following conditions are satisfied for all $X \in \vars(E)$:
\begin{itemize}
\item
 $\rec{X}{E'} = \rec{t}{E'}$ is derivable from the axioms of \depACPetr;
\item
$E \Limpl X = \rec{X}{E'}$ is derivable from the axioms of \depACPetr.
\end{itemize}
This follows immediately from the soundness of the axiom system of
\depACPetr\ with respect to rooted branching bisimulation equivalence.

It is easy to see that the above conditions are satisfies in the case 
of the recursive specification of $\PROC{}{j}$ presented in 
Section~\ref{sect-example} and its linearized version presented below
and in the case of the recursive specification of $\CHAN{}{j}$ presented 
in Section~\ref{sect-example} and its linearized version presented 
below.

For each $j \in \set{0,\ldots,n-1}$, $\PROC{}{j}$ is defined by the 
guarded linear recursive specification that consists of the following 
equations:
\begin{ldispl}
\begin{aeqns}
\PROC{}{j} & = & \True \gc \ass{\bit{j}}{1} \seqc \PROC{1}{j}\;,
\eqnsep
\PROC{1}{j} & = & 
\PRC{i = 1}{k}{1/k}
{\,\True \gc \ass{\pid{j}}{\ul{i}} \seqc \PROC{1.1}{j}}\;,
\eqnsep
\PROC{1.1}{j} & = &
\True \gc \psnd{1}{j}((\pid{j},1,\bit{j})) \seqc \PROC{1.2}{j}\;,
\eqnsep
\PROC{1.2}{j} & = & 
\True \gc \psnd{2}{(j - 1) \bmod n}((\pid{j},1,\bit{j})) \seqc 
\PROC{2}{j}\;,
\eqnsep
\PROC{2}{j} & = & 
\Altc{i = 1}{k} \Altc{h = 1}{n} 
(\hsp{.25}
 \True \gc \prcv{1}{(j-1) \bmod n}((\ul{i},\ul{h},\bit{j})) \seqc
 \PROC{2.1}{j,i,h} 
\\ & & \phantom{\Altc{i = 1}{k} \Altc{h = 1}{n}} \hsp{-.5}
  {} \altc
 \True \gc \prcv{2}{j}((\ul{i},\ul{h},\bit{j})) \seqc
 \PROC{2.2}{j,i,h})\;,
\eqnsep
\PROC{2.1}{j,i,h} & = & 
\ul{h} = \ul{n} \gc \leader \seqc \PROC{7}{j} \altc 
\Lnot \ul{h} = \ul{n} \gc \ass{\lpid{j}}{\ul{i}} \seqc \PROC{3}{j}
\\ & & \quad
\text{for}\; i \in \set{1,\ldots,k}\;
\text{and}\; h \in \set{1,\ldots,n}\;, 
\eqnsep
\PROC{2.2}{j,i,h} & = & 
\ul{h} = \ul{n} \gc \leader \seqc \PROC{7}{j} \altc 
\Lnot \ul{h} = \ul{n} \gc \ass{\rpid{j}}{\ul{i}} \seqc \PROC{4}{j}
\\ & & \quad
\text{for}\; i \in \set{1,\ldots,k}\;
\text{and}\; h \in \set{1,\ldots,n}\;, 
\eqnsep
\PROC{3}{j} & = &
\Altc{i = 1}{k} \Altc{h = 1}{n} 
\True \gc \prcv{2}{j}((\ul{i},\ul{h},\bit{j})) \seqc \PROC{3.1}{j,i,h}\;,
\eqnsep
\PROC{3.1}{j,i,h} & = &
\ul{h} = \ul{n} \gc \leader \seqc \PROC{7}{j} \altc 
\Lnot \ul{h} = \ul{n} \gc \ass{\rpid{j}}{\ul{i}} \seqc \PROC{5}{j})
\\ & & \quad
\text{for}\; i \in \set{1,\ldots,k}\;
\text{and}\; h \in \set{1,\ldots,n}\;, 
\eqnsep
\PROC{4}{j} & = &
\Altc{i = 1}{k} \Altc{h = 1}{n} 
\True \gc \prcv{1}{(j-1) \bmod n}((\ul{i},\ul{h},\bit{j})) \seqc 
\PROC{4.1}{j,i,h}\;,
\eqnsep
\PROC{4.1}{j,i,h} & = &
\ul{h} = \ul{n} \gc \leader \seqc \PROC{7}{j} \altc 
\Lnot \ul{h} = \ul{n} \gc \ass{\lpid{j}}{\ul{i}} \seqc \PROC{5}{j})
\\ & & \quad
\text{for}\; i \in \set{1,\ldots,k}\;
\text{and}\; h \in \set{1,\ldots,n}\;, 
\eqnsep
\PROC{5}{j} & = & \phantom{\Lnot}
(\lpid{j} > \pid{j} \Lor \rpid{j} > \pid{j}) \gc \tau \seqc \PROC{6}{j} 
\\ & \altc &
\Lnot (\lpid{j} > \pid{j} \Lor \rpid{j} > \pid{j}) \gc 
\ass{\bit{j}}{(\bit{j} + 1) \bmod \ul{2}} \seqc \PROC{1}{j}\;,
\eqnsep
\PROC{6}{j} & = & 
\Altc{i = 1}{k} \Altc{h = 1}{n} \Altc{b = 0}{1} 
(\hsp{.25}
 \True \gc \prcv{1}{(j-1) \bmod n}((\ul{i},\ul{h},\ul{b})) \seqc 
 \PROC{6.1}{j,i,h,b}
\\ & & \phantom{\Altc{i=1}{k} \Altc{h=1}{n} \Altc{b=0}{1}} \hsp{-.5}
  {} \altc 
 \True \gc \prcv{2}{j}((\ul{i},\ul{h},\ul{b})) \seqc
 \PROC{6.2}{j,i,h,b})\;,
\end{aeqns}
\end{ldispl}%
\begin{ldispl}
\begin{aeqns}
\PROC{6.1}{j,i,h,b} & = & 
\True \gc \psnd{1}{j}((\ul{i},\ul{h} + 1,\ul{b})) \seqc \PROC{6}{j} 
\\ & & \quad
\text{for}\; i \in \set{1,\ldots,k}, h \in \set{1,\ldots,n},\;
\text{and}\; b \in \set{0,1}\;, 
\eqnsep
\PROC{6.2}{j,i,h,b} & = & 
\True \gc \psnd{2}{(j-1) \bmod n}((\ul{i},\ul{h} + 1,\ul{b})) \seqc 
\PROC{6}{j}
\\ & & \quad
\text{for}\; i \in \set{1,\ldots,k}, h \in \set{1,\ldots,n},\;
\text{and}\; b \in \set{0,1}\;, 
\eqnsep
\PROC{7}{j} & = & \True \gc \ep\;.
\end{aeqns}
\end{ldispl}%

For each $j \in \set{0,\ldots,n-1}$, $\CHAN{}{j}$ is defined by the 
guarded linear recursive specification that consists of the following 
equations:
\pagebreak[2]
\begin{ldispl}
\begin{aeqns}
\CHAN{}{j} & = & 
\True \gc \ass{\ms{1}{j}}{\mempty} \seqc \CHAN{1}{j}\;,
\eqnsep
\CHAN{1}{j} & = &
\True \gc \ass{\ms{2}{j}}{\mempty} \seqc \CHAN{'}{j}\;,
\eqnsep
\CHAN{'}{j} & = &
\Altc{i = 1}{k} \Altc{h = 1}{n} \Altc{b = 0}{1} 
(\hsp{.25}
 \True \gc \chrcv{1}{j}(\ul{i},\ul{h},\ul{b}) \seqc \CHAN{'\!1}{j} 
\\ & & \phantom{\Altc{i=1}{k} \Altc{h=1}{n} \Altc{b=0}{1}} \hsp{-.5} 
 {} \altc
 \True \gc \chrcv{2}{j}(\ul{i},\ul{h},\ul{b}) \seqc \CHAN{'\!2}{j} 
\\ & & \phantom{\Altc{i=1}{k} \Altc{h=1}{n} \Altc{b=0}{1}} \hsp{-.5}
 {} \altc
(\ul{i},\ul{h},\ul{b}) \in \ms{1}{j} \gc
 \chsnd{1}{j}(\ul{i},\ul{h},\ul{b}) \seqc \CHAN{'\!3}{j} 
\\ & & \phantom{\Altc{i=1}{k} \Altc{h=1}{n} \Altc{b=0}{1}} \hsp{-.5} 
 {} \altc
(\ul{i},\ul{h},\ul{b}) \in \ms{2}{j} \gc
 \chsnd{2}{j}(\ul{i},\ul{h},\ul{b}) \seqc \CHAN{'\!4}{j}
)\;,
\eqnsep
\CHAN{'\!1}{j} & = &
\True \gc 
\ass{\ms{1}{j}}{\ms{1}{j} \munion \mset{(\ul{i},\ul{h},\ul{b})}} \seqc
\CHAN{'}{j}\;, 
\eqnsep
\CHAN{'\!2}{j} & = &
\True \gc 
\ass{\ms{2}{j}}{\ms{2}{j} \munion \mset{(\ul{i},\ul{h},\ul{b})}} \seqc
\CHAN{'}{j}\;,
\eqnsep
\CHAN{'\!3}{j} & = &
\True \gc 
\ass{\ms{1}{j}}{\ms{1}{j} \mdiff \mset{(\ul{i},\ul{h},\ul{b})}} \seqc
\CHAN{'}{j}\;, 
\eqnsep 
\CHAN{'\!4}{j} & = &
\True \gc 
\ass{\ms{2}{j}}{\ms{2}{j} \mdiff \mset{(\ul{i},\ul{h},\ul{b})}} \seqc
\CHAN{'}{j}\;.
\end{aeqns}
\end{ldispl}%

\bibliographystyle{splncs04}
\bibliography{PA}

\end{document}